\documentclass[a4paper,UKenglish,cleveref, autoref, thm-restate]{lipics-v2021}

\hideLIPIcs  

\nolinenumbers

\usepackage{listings}
\usepackage{times}
\usepackage{soul}
\usepackage{url}
\usepackage[utf8]{inputenc}
\usepackage{graphicx}
\usepackage{amsmath}
\usepackage{amsthm}
\usepackage{booktabs}
\usepackage[mathlines]{lineno}
\usepackage{etoolbox}
\usepackage{url}
\usepackage[obeyFinal]{todonotes}
\usepackage[utf8]{inputenc}
\usepackage{csquotes}
\usepackage{rotating}
\usepackage{graphicx}
\usepackage{multirow}
\usepackage{multicol}
\usepackage{tikz}
\usetikzlibrary{arrows}
\usepackage{macros}

\usepackage{comment}

\newif\ifdraft
\newif\ifshortversion

\ifshortversion
  \includecomment{shortonly}
  \excludecomment{longonly}
\else
  \includecomment{longonly}
  \excludecomment{shortonly}
\fi

\usepackage[capitalise]{cleveref}
\crefformat{footnote}{#2\footnotemark[#1]#3}

\newtheorem{fact}[theorem]{Fact}
\title{
ABox Abduction for Inconsistent Knowledge Bases under Repair Semantics
}

\author{Anselm Haak}{Knowledge Representation Group, Paderborn University, Germany}{anselm.haak@uni-paderborn.de}{https://orcid.org/0000-0003-1031-5922}{}

\author{Patrick Koopmann}{Knowledge in Artificial Intelligence, Vrije Universiteit Amsterdam, The Netherlands}{p.k.koopmann@vu.nl}{https://orcid.org/0000-0001-5999-2583}{}

\author{Yasir Mahmood}{Data Science Group, Paderborn University, Germany}{yasir.mahmood@uni-paderborn.de}{https://orcid.org/0000-0002-5651-5391}{}

\author{Anni-Yasmin Turhan}{Knowledge Representation Group, Paderborn University, Germany}{turhan@uni-paderborn.de}{https://orcid.org/0000-0001-6336-335X}{}

\authorrunning{A. Haak, P. Koopmann, Y. Mahmood, A. Turhan} 

\Copyright{Anselm Haak, Patrick Koopmann, Yasir Mahmood and Anni-Yasmin Turhan} 

\ccsdesc[100]{Theory of computation~Logic} 
\ccsdesc[100]{Theory of computation~Theory and algorithms for application domains}

\keywords{Description logics, Abduction, Repair semantics, Inconsistency-tolerant reasoning} 

\category{} 

\relatedversion{} 

\acknowledgements{
  We thank all anonymous reviewers for their valuable feedback that helped us improve the exposition of our paper.
  The third author appreciates funding by the  Deutsche Forschungsgemeinschaft (DFG, German Research Foundation), grant TRR 318/1 2021 – 438445824.
}

\EventEditors{John Q. Open and Joan R. Access}
\EventNoEds{2}
\EventLongTitle{42nd Conference on Very Important Topics (CVIT 2016)}
\EventShortTitle{CVIT 2016}
\EventAcronym{CVIT}
\EventYear{2016}
\EventDate{December 24--27, 2016}
\EventLocation{Little Whinging, United Kingdom}
\EventLogo{}
\SeriesVolume{42}
\ArticleNo{23}

\begin{document}

\maketitle

\begin{abstract}
Given a knowledge base (KB)  with a non-entailed fact, the ABox abduction problem asks for possible extensions of the KB  that would entail this fact. This problem has many applications, ranging from diagnosis to explainability and repair. ABox abduction has been well-investigated for consistent KBs and classical semantics,  but little is known for the case of inconsistent KBs, which can be caused by erroneous data.  
In this paper we define suitable notions of abduction in this setting and propose criteria that guide abduction towards \enquote{useful} hypotheses. To regain meaningful reasoning in the presence of inconsistencies, we use well-established repair semantics.
We provide a comprehensive landscape of the complexity of ABox abduction under repair semantics, treating different variants of the abduction problem for the light-weight description logics \DLLite and \ELbot. 
\end{abstract}

\section{Introduction}

In computational logic, abductive reasoning can be defined as follows:
Given two formulas $K$ (the \emph{background knowledge}) and $\Obs$ (the \emph{observation}),
we are looking for a \emph{hypothesis} $H$ s.t. $K\wedge H\models\Obs$. The original idea dates back
to~\cite{peirce1878deduction} as a type of non-monotonic
reasoning towards finding a plausible
explanation $H$ for an unexpected observation $\Obs$, given our knowledge $K$
about the situation. Since then, it has been
suggested for various additional use cases: to provide possible explanations for black box classifiers
in \emph{machine learning}~\cite{ignatiev2019abduction}, to \emph{explain
missing entailments} from knowledge bases (\enquote{if $H$ was
in your knowledge base $K$, $\Obs$ would be entailed})~\cite{du2015towards,AlrabbaaBFHKKKK24}, and to suggest
possible solutions for \emph{repairing incomplete knowledge bases}~\cite{WeiKleinerDragisicLambrix2014,du2017practical,Haifani2022}.
Diagnosis, explainability and repair are well-motivated in the area of ABox reasoning with description logic
(DL) ontologies~\cite{SchlobachC03,BaaderKKN22}: here, the ABox $\Amc$ contains a set of facts like in a
database, which is used
together with an ontology or TBox $\Tmc$ to derive new implicit facts.
\emph{ABox abduction} is the
special case of abduction in which $K$ is such a DL knowledge base $\tup{\Tmc,\Amc}$, $
\Obs$ consists of facts, and we are looking for a set $\Hmc$ of facts
as hypothesis that would ensure $\tup{\Tmc, \Amc \cup \Hmc} \models
\Obs$~\cite{Klarman2011,Calvanese2013,Ceylan2020,Koopmann21a}.
To avoid nonsensical explanations,
one usually additionally requires \emph{consistency} ($\tup{\Tmc, \Amc \cup \Hmc} \not\models \bot$),
\emph{non-triviality} ($\Hmc \neq \Obs$) and sometimes puts restrictions on the \emph{signature} of $
\Hmc$~\cite{Elsenbroich2006,Koopmann2020}.

\newcommand{\dl}[1]{\textit{#1}}

\newcommand{\exA}{\dl{High}}
\newcommand{\exB}{\dl{Low}}
\newcommand{\exC}{\dl{GlycemicCrisis}}
\newcommand{\exs}{\dl{glucoseLevel}}
\newcommand{\exr}{\dl{experiences}}
\newcommand{\exE}{\dl{OverdosedInsulin}}
\newcommand{\exF}{\dl{Ketoacidosis}}
\newcommand{\exO}{\dl{DiabeticComa}}
\newcommand{\exa}{\dl{patient}}
\newcommand{\exb}{\dl{l}}
\newcommand{\exc}{c}

\begin{example}
 Consider the following excerpt from a medical ontology $\Tmc$ on diabetes:
 \begin{align*}
  \exA\sqcap\exB\sqsubseteq \bot\ \ \ \
  \exists\exs.\exA&\sqsubseteq \exC \\
  \exists\exs.\exB&\sqsubseteq\exC\\
  \exists\exs.\exA\sqcap\exE&\sqsubseteq \exO\\
  \exC\sqcap\exF&\sqsubseteq \exO
 \end{align*}
 The following ABox $\Amc_1$ contains medical evidence about a patient obtained through a glucose monitor:
 \[
  \exs(\exa,\exb)\qquad \exA(\exb)
 \]
 We observe that the patient passed out. Based on his glucose reading, a reasonable explanation is that he took too much insulin. Indeed, using ABox abduction
 for the observation $\exO(\exa)$, a possible hypothesis would be $\{\exE(\exa)\}$.
\end{example}
If $\tup{\Tmc, \Amc}$ is inconsistent, every fact is entailed, and under standard semantics, both abductive
and deductive reasoning cannot be used any more to make useful inferences.
Unfortunately, consistency cannot always be assumed in realistic settings, as
data often contains erroneous information. To overcome this issue, a range of
inconsistency-tolerant semantics have been proposed~\cite{PARACONSISTENT,BienvenuB16,COST-BASED}, in particular repair-based semantics,
\emph{repair semantics} for short,
which are
defined based on maximal consistent subsets of the ABox called \emph{repairs}.
\begin{example}
 The doctor uses a finger stick sensor to get additional evidence about the
 glucose level of the patient, which leads to the additional ABox fact $\exB(\exb)$.
  The new ABox $\Amc_2$ is inconsistent and has two repairs, namely $\Amc_1$ and $\{\exs(\exa,\exb)$, $\exB(\exb)\}$.
  Under \emph{Brave} semantics, which considers repairs in isolation, $\exA(\exb)$, $\exB(\exb)$ and $
  \exC(\exa)$ are all entailed, while under \emph{AR} semantics, which considers all repairs at the
  same time, of those only $\exC(\exa)$ is entailed. We can adapt the definition of
  abduction to consider those entailment relations instead of classical
  entailment. Then, $\{\exE(\exa)\}$ and $\{\exF(\exa)\}$ are both hypotheses
  under Brave semantics, while only $\{\exF(\exa)\}$ is a hypothesis under AR semantics. Indeed, the cautious doctor should investigate this hypothesis first.
\end{example}
It turns out that considering repair semantics changes a few things compared to
classical semantics. The consistency requirement
$\tup{\Tmc, \Amc \cup \Hmc} \not\models \bot$ becomes obsolete, and we may
instead require that hypotheses should not introduce new conflicts in the knowledge base
(that they are \emph{conflict-confining}), or at least focus on those hypotheses
that minimise the newly introduced conflicts. Also, allowing for the
observation itself as the trivial hypothesis does not as
easily trivialise the problem as under classical semantics: It might not always satisfy the conditions,
and there are even cases in which other hypotheses introduce less new conflicts than the observation
itself would.

While entailment under repair semantics is
well-investigated, and there is work on explaining missing
entailments under these semantics, to the best of our knowledge,
the only work that considers abduction under repair semantics is by Du et al.~\cite{du2015towards},
who present
a practical system for computing
conflict-confining hypotheses with rewritable ontology languages under IAR 
semantics, another variant of repair semantics.
Abduction has also been investigated for paraconsistent semantics~\cite{BienvenuIK24},
a different inconsistency-tolerant semantics, but only for propositional logic.
A more complete picture for ABox abduction under repair-based semantics,
considering different semantics and logics, and also analysing
how to appropriately define abduction in these settings, has been missing so far.

One central motivation of abduction is to explain missing entailments, and indeed
different authors have investigated this problem using a different strategy:
Specifically,
Bienvenu et al.~\cite{bienvenu2019computing} and Lukasiewicz et al.~\cite{lukasiewicz2022explanations} explain missing
entailments under repair semantics by pointing to facts that \enquote{block}
the entailment by creating conflicts in parts of the ABox that would otherwise
lead to the entailment. In other words, while our explanations consist of facts that would need
to be added to make the entailment true, those explanations consist of facts that need to be
\emph{removed}. However, it is easy to construct examples where an abductive solution exists,
while it is not possible to make the entailment true by removing axioms.
Our results thus complement these works, by providing explanations
in case their approaches are incomplete.

We provide the first comprehensive analysis of abduction under repair semantics.
Since complexity under these semantics easily
trivialises in logics that are \ExpTime-hard, we consider two
prominent tractable description logics,
\ELbot and DL-Lite, which play a central role in many large-scale knowledge bases.
We consider different variants of the problem that are suitable in the
context of repair semantics, for example by requiring the aforementioned conflict-confinement
as well as signature-restrictions, and also investigate
different optimality criteria for hypotheses. Along the way, we show different properties
of hypotheses under repair semantics that help to develop a better
understanding of abduction for the analysed semantics. Some of our results also apply
to the consistent setting, but have to the best of our knowledge not been published before.
Our results are shown in~\Cref{tab:results}.

\begin{table*}
	\centering
\resizebox{.99\textwidth}{!}{
	\begin{tabular}{l@{\hspace{7pt}}l@{\hspace{5pt}}|cccc@{\hspace{5pt}}|ccccc}
		\toprule
		& &
		\multicolumn{4}{c|}{\textbf{Existence Problem}}& 
		\multicolumn{5}{c}{\textbf{Verification Problem}}\\ 
		& 
		& general 
		& signature 
		& conflict-conf.
		& non-trivial 
		& $\leq$-min
		& $\subseteq$-min
		& $\subseteq_c$-min
    & $\leq_c$-min
		& conflict-conf. \\
		\midrule 
		\multirow{2}{*}{$\DLLite$}
		& \brave
		& \NL$^{\text{T}\ref{thm:dllite-exist-cc}}$
		& \NL$^{\text{T}\ref{thm:dllite-exist-brave-sig-nontriv}}$
		& \NL$^{\text{T}\ref{thm:dllite-exist-cc}}$
		& \NL$^{\text{T}\ref{thm:dllite-exist-brave-sig-nontriv}}$ 
		& \NL$^{\text{T}\ref{thm:dllite-verif-gen-card}}$
		& \NL$^{\text{T}\ref{thm:dllite-verif-brave-subset}}$
		& \NL$^{\text{T}\ref{thm:dllite-verif-cc-cmin}}$
		& \NL$^{\text{T}\ref{thm:dllite-verif-cc-cmin}}$
		& \NL$^{\text{T}\ref{thm:dllite-verif-cc-cmin}}$ \\
		& \ar
		& \NL$^{\text{T}\ref{thm:dllite-exist-cc}}$
		& \coNP$^{\text{T}\ref{thm:dllite-exist-ar-sig-nontriv}}$
		& \NL$^{\text{T}\ref{thm:dllite-exist-cc}}$
		& \coNP$^{\text{T}\ref{thm:dllite-exist-ar-sig-nontriv}}$ 
		& \co\NP$^{\text{T}\ref{thm:dllite-verif-gen-card}}$
		& \DP$^{\text{T}\ref{thm:dllite-verif-ar-subset}}$
		& \NL$^{\text{T}\ref{thm:dllite-verif-cc-cmin}}$
		& \NL$^{\text{T}\ref{thm:dllite-verif-cc-cmin}}$
		& \NL$^{\text{T}\ref{thm:dllite-verif-cc-cmin}}$ \\
		\multirow{2}{*}{$\ELbot$}
		& \brave 
		& \Ptime$^{\text{T}\ref{thm:elbot-exist-general}}$
		& \NP$^{\text{T}\ref{thm:elbot-exist-sig}}$
		& \SigmaP$^{\text{T}\ref{thm:elbot-exist-cc}}$
		& \NP$^{\text{T}\ref{thm:elbot-exist-nontriv}}$
		& \NP$^{\text{T}\ref{thm:elbot-verif-general}}$
		& \DP$^{\text{T}\ref{thm:elbot-verif-subset}}$
		& \PiP$^{\text{T}\ref{thm:elbot-verif-cmin-subset}}$
		& --
		& \DP$^{\text{T}\ref{thm:elbot-verif-cc}}$ \\
		& \ar 
		& \coNP$^{\text{T}\ref{thm:elbot-exist-general}}$
		& \SigmaP$^{\text{T}\ref{thm:elbot-exist-sig}}$
		& \coNP$^{\text{T}\ref{thm:elbot-exist-cc}}$
		& \SigmaP$^{\text{T}\ref{thm:elbot-exist-nontriv}}$
		& \co\NP$^{\text{T}\ref{thm:elbot-verif-general}}$
		& \PiP$^{\text{T}\ref{thm:elbot-verif-subset}}$
		& \co\NP$^{\text{T}\ref{thm:elbot-verif-cmin-subset}}$
    & \co\NP$^{\text{T}\ref{thm:elbot-verif-cmin-card}}$
		& \co\NP$^{\text{T}\ref{thm:elbot-verif-cc}}$ \\
		\bottomrule
	\end{tabular}

}
	\caption{Complexity for existence and verification of hypotheses with different properties and repair semantics.
	All entries are completeness results, with superscripts referencing corresponding theorems. 
	Verifying general hypotheses has the same complexity as with $\leq$-minimality.}\label{tab:results}
\end{table*}

\section{Preliminaries} 
For a general introduction to description logics, we refer the reader to~\cite{DBLP:books/daglib/0041477}.
We assume familiarity with computational complexity~\cite{DBLP:books/daglib/0092426}, in particular with the complexity classes
$\NL, \Ptime, \NP, \co\NP$, $\SigmaP$ and~$\PiP$, as well as
$\DP$, the class of decision problems representable as the intersection of a problem in $\NP$ and a problem in $\co\NP$.

\subsection{The Description Logics \ELbot and \DLLite}\label{sec:dl-defs}
We use the following enumerable sets of names: \NC for \emph{concept names}, \NR for \emph{role names}, and \NI for \emph{individuals}.
An ABox is a finite set of \emph{concept assertions } (also known as \emph{flat} or \emph{simple} assertions) of the form $A(a)$ and
\emph{role assertions} of the form $r(a,b)$ for $A\in\NC$, $r\in\NR$, $a,b\in\NI$.
A \emph{TBox} is a finite set of \emph{concept inclusions} of the form $C\subsum D$ for concepts $C$ and $D$ and role inclusions of the form $Q \sqsubseteq S$ for roles $Q$ and $S$. 
Concept and role inclusions are also called \emph{axioms}.
A KB is a tuple $\tup{\calT,\calA}$ for a TBox $\calT$ and ABox $\calA$.
By \emph{signature} of a KB $\calK$, we mean the set of (concept, role, and individual) names appearing in $\calK$.

We next specify the syntax for \ELbot and the considered \DLLite dialects, with semantics being defined as usual~\cite{DBLP:books/daglib/0041477}.
For \ELbot concepts, the syntax is given by the grammar:
\[C \Coloneqq C \sqcap C \mid \exists r.C \mid A \mid \top \mid \bot ~.\]
An \emph{\ELbot KB} restricts the TBox to concept inclusions over \ELbot concepts. 

We consider the $\DLLite$ dialects $\DLLiteR$ and \DLLiteCore.
In \DLLiteR (underlying the OWL 2 QL profile),
TBoxes may contain \emph{concept inclusions} of the form $B \subsum C$ and \emph{role inclusions} of the form $Q \sqsubseteq S$,
where $B, C, Q$ and $S$ are generated by the following grammar:
\begin{align*}
B &\Coloneqq A \mid \exists Q, \qquad C \Coloneqq B \mid \neg B, \qquad Q \Coloneqq R \mid R^-, \qquad S \Coloneqq Q \mid \neg Q,
\end{align*}
where $A \in \NC$ and $R \in \NR$.
\DLLiteCore restricts \DLLiteR by disallowing role inclusions, so that only concept inclusions of the above form are allowed.
Semantics for all three DLs are defined as usual.
For an ABox \calA and TBox \calT, we say that \calA is \emph{\calT-consistent}, if $\tup{\calT,\calA}\not\models\bot$, and \emph{\calT-inconsistent} otherwise.
Further, we say that \calA is a \calT-support of a concept assertion $\alpha$, if $\tup{\calT, \calA} \models \alpha$.

For the rest of the paper, the general term \DLLite refers to either \DLLiteCore or \DLLiteR.
We do so since all of our results apply to both DLs: our lower bounds apply to \DLLiteCore and our upper bounds apply to \DLLiteR.

\subsection{Repair Semantics}
If a knowledge base is inconsistent, repair semantics can \enquote{restore} consistent versions and admit meaningful reasoning again.
We focus here on ABox repairs and define these as well as two common kinds of repair semantics next.

Let $\calK = \tup{\calT, \calA}$ be an inconsistent knowledge base.
A \emph{repair} of $\calK$ is a \calT-consistent subset $\calR \subseteq \calA$ and subset-maximal with this property, i.e., there is no \calT-consistent subset $\calR' \subseteq \calA$ that is a strict superset of \calR.
The somewhat dual notion is a \emph{conflict} or \emph{conflict set} \conflict, which is a \calT-inconsistent subset of the ABox and subset-minimal with this property.
We denote by $\Conf(\calK)$ the set of conflicts of \calK.
We recall entailment under \brave~\cite{BienvenuR13} and \ar semantics~\cite{LLRRS-JWS-15} for concept assertions $\alpha$:
\begin{itemize}
	\item $\calK\models_{\brave} \query$ iff there is a repair $\calR$ of $\calK$ s.t.\ $\tup{\calT,\calR} \models \query$.
	\item $\calK\models_{\ar} \query$ iff $\tup{\calT,\calR}\models \query$ for every repair $\calR$ of $\calK$.
\end{itemize}
The complexity of entailment under repair semantics is well understood~\cite{BienvenuB16}:
Checking entailment of concept assertions under \brave semantics is \NL-complete for $\DLLiteCore$ and \DLLiteR, and \NP-complete for $\EL_\bot$ in combined complexity, whereas under \ar semantics it is \coNP-complete for all three DLs.

Besides the complexity of entailment under repair semantics,
both \DLLite dialects share the following properties for \DLLite KBs $\calK = \tup{\calT, \calA}$ \cite{BienvenuB16}:
\begin{itemize}
  \item conflicts of \calK are of size $2$, and
  \item subset-minimal \calT-supports of concept assertions $\alpha$ are of size $1$.
\end{itemize}

\section{ABox Abduction for Inconsistent KBs}
\label{sec:abd}

The central task of abduction is to compute hypotheses. We define these for non-entailed concept assertions  under repair semantics.
\begin{definition}\label{def:hypotheses}
  Let $\calK = \tup{\calT, \calA}$ be an inconsistent KB, $\alpha$
  a concept assertion
  (called an \emph{observation}) and $\calS \in \{\brave, \ar\}$ such that $\calK \not\models_\calS \alpha$.
  Then, the pair $\tup{\calK, \alpha}$ is called an \emph{$\calS$-abduction problem}.
  A solution for such a problem, called \emph{\calS-hypothesis}, is an ABox \Hyp using only individuals occurring in  \calK and $\alpha$ s.t.\ $\tup{\calT, \calA \cup \Hyp} \models_\calS \alpha$.
  
  Additionally, for a set $\Sigma$ (signature) containing individual, concept and role names, we define the triple $\tup{\calK, \alpha, \Sigma}$ to be a \emph{\SignatureRestricted \calS-abduction problem}.
  A solution for such a problem is an ABox \Hyp using only symbols from $\Sigma$ that is an \calS-hypothesis for $\tup{\calK, \alpha}$.
\end{definition}
For an \calS-abduction problem $\tup{\calK, \alpha}$ we require that \calK is inconsistent and $\calK\not\models_\calS \alpha$.
This means we consider only the so-called \emph{promise problem}, i.e.\ the problem restricted to these particular inputs.
The restriction aligns with the intuition that one asks for an $\calS$-hypothesis if it is already known that the knowledge base is
inconsistent and the observation is not $\calS$-entailed in $\calK$.
In contrast, if we instead assume that \calK is consistent and $\alpha$ is not entailed by \calK under classical semantics,
we obtain classical abduction problems.
In this case, we call an ABox \Hyp \emph{hypothesis for $\alpha$ under classical semantics},
if $\tup{\calT, \calA \cup \Hyp}\not\models \bot$ and $\tup{\calT, \calA \cup \Hyp} \models \alpha$.
Note also that we do not admit fresh individuals beyond the specified signature as it is done in~\cite{Koopmann21a}.
This can sometimes make the problem \ExpTime-hard already in the classical case, and is left for future work.

To obtain hypotheses that are meaningful for explanation purposes, one often considers additional properties of hypotheses as well as minimality criteria that yield \emph{preferred hypotheses}.
We transfer some of this terminology already defined for abduction under classical semantics, and extend it to repair semantics by additional properties and minimality notions based on conflicts.
Of those, the notion we call \emph{conflict-confining} was already used in~\cite{du2017practical}.
For two sets $S_1, S_2$ we may abbreviate $\lvert S_1\rvert \leq \lvert S_2\rvert$ by $S_1 \leq S_2$.
\begin{definition}\label{def:properties}
  Let ${\calS} \in {\{\brave,\ar\}}$, $\tup{\calK, \alpha}$ be an \calS-abduction problem, and 
  ${\preceq} \in {\{\subseteq, \leq\}}$.
  An ABox \Hyp is
  \begin{enumerate}
    \item\label{def:conf-conf} \emph{conflict-confining for $\calK = \tup{\calT, \calA}$}, provided that
    \mbox{$\Conf(\tup{\calT, \calA \cup \Hyp}) = \Conf(\calK)$}.
  \end{enumerate}
  If \Hyp is an \calS-hypothesis for $\tup{\calK, \alpha}$, we call it
  \begin{enumerate}
    \setcounter{enumi}{1}
    \item\label{def:non-triv} \emph{non-trivial}, if $\alpha \not\in \Hyp$,
    \item\label{def:hyp-minimal} \emph{$\preceq$-minimal}, if there is no $\calS$-hypothesis $\Hyp'$ for $\tup{\calK, \alpha}$ s.t.\ $\Hyp' \prec \Hyp$, and
    \item\label{def:conf-minimal}  \emph{$\preceq_c$-minimal}, if there is no \calS-hypothesis $\Hyp'$ for $\tup{\calK, \alpha}$ s.t.\ $\Conf(\tup{\calT, \calA \cup \Hyp'}) \prec \Conf(\tup{\calT, \calA \cup \Hyp})$.
  \end{enumerate}
\end{definition}
While Condition~\ref{def:non-triv} and~\ref{def:hyp-minimal} are standard for abduction,
Condition~\ref{def:conf-conf} and~\ref{def:conf-minimal}
adapt the idea of a hypothesis not (resp., minimally) introducing new inconsistencies to the KB, which is already inconsistent to begin with.
Conflict-confinement can be equivalently defined by requiring that $\tup{\calT, \calR \cup \Hyp} \not\models\bot$ for every repair \calR of \calK.
Furthermore, the minimality in Condition~\ref{def:conf-minimal} generalises this notion and allows to minimally add new conflicts. 
Note that hypotheses introducing new conflicts can be desirable, as erroneous facts might need to implicitly be replaced by new facts, leading to conflicts when considering both together. 

We use the term \emph{subset-minimal} for $\subseteq$-minimal and \emph{cardinality-minimal} for $\leq$-minimal.
Further, we also consider minimal variants of hypotheses with additional properties:
For example, a $\subseteq$-minimal conflict-confining \ar-hypothesis need only be $\subseteq$-minimal among all conflict-confining \ar-hypotheses.

Conflict-confinement is a general property of an ABox and does not depend on the semantics or additional properties of abduction problems.
Hence, we can already establish the complexity of checking for this property here.

\begin{restatable}{lemma}{ccCompl}\label{lem:cc-compl}
	Given a KB \calK and an ABox \calH, checking whether \calH is conflict-confining for \calK is
  \begin{enumerate*}[label=(\arabic*)]
    \item \NL-complete, if \calK is a \DLLite KB, and
    \item \coNP-complete, if \calK is an \ELbot KB.
  \end{enumerate*}
\end{restatable}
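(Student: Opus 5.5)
We reduce the check to detecting a ``new'' conflict. Since $\calA \subseteq \calA \cup \Hyp$, every conflict of $\calK$ is a minimal $\calT$-inconsistent subset of $\calA\cup\Hyp$, because minimality only refers to subsets of the conflict itself. Hence $\Conf(\calK) \subseteq \Conf(\tup{\calT,\calA\cup\Hyp})$ always holds. Moreover, any conflict of $\tup{\calT,\calA\cup\Hyp}$ that is contained in $\calA$ is already a conflict of $\calK$. So $\Hyp$ is \emph{not} conflict-confining iff some conflict of $\tup{\calT,\calA\cup\Hyp}$ contains an assertion of $\Hyp\setminus\calA$. Equivalently, as noted after \Cref{def:properties}, $\Hyp$ is not conflict-confining iff there is a repair $\calR$ of $\calK$ with $\tup{\calT,\calR\cup\Hyp}\models\bot$.

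\emph{Upper bounds.} For \DLLite, conflicts have size at most $2$. Thus $\Hyp$ fails to be conflict-confining iff there are $\beta\in\Hyp\setminus\calA$ and $\gamma\in\calA\cup\Hyp$ such that $\{\beta,\gamma\}$ is $\calT$-inconsistent while $\{\beta\}$ and $\{\gamma\}$ are each $\calT$-consistent. These conditions are needed for $\{\beta,\gamma\}$ to be a minimal conflict; if $\{\beta\}$ alone is inconsistent, it is itself a new conflict, which we handle the same way. In \DLLite, consistency of a set of at most two assertions can be decided in \NL. So the complement of the problem is in \NL: guess $\beta,\gamma$ and run the \NL\ checks. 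Since $\NL=\co\NL$, the problem itself is in \NL.

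For \ELbot, we show the complement is in \NP. Guess a subset $\calC\subseteq\calA\cup\Hyp$ containing some $\beta\in\Hyp\setminus\calA$. Then check in polynomial time that $\calC$ is $\calT$-inconsistent and that $\calC\setminus\{\delta\}$ is $\calT$-consistent for every $\delta\in\calC$. Minimality of $\calC$ follows by monotonicity, and \ELbot consistency is polynomial. Hence the problem is in \coNP.

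\emph{Lower bounds.} For \DLLite, reduce from (un)reachability in directed graphs, which is \NL-hard together with its complement. Encode edges $(u,v)$ as inclusions $A_u\sqsubseteq A_v$, let $\calA$ contain a trivial conflict to make $\calK$ inconsistent, set $\Hyp=\{A_s(a)\}$, and add $A_t\sqsubseteq\neg B$ with $B(a)\in\calA$. A new conflict $\{A_s(a),B(a)\}$ then exists iff $t$ is reachable from $s$.

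For \ELbot, hardness follows from \NP-hardness of deciding whether an assertion lies in some minimal conflict, a standard result underlying the \NP-hardness of \brave entailment in \ELbot~\cite{BienvenuB16}. Given such an instance with KB $\tup{\calT,\calA_0}$ and assertion $\beta\in\calA_0$, take $\calA=\calA_0\setminus\{\beta\}$ and $\Hyp=\{\beta\}$. Then $\Hyp$ is not conflict-confining iff $\beta$ belongs to some conflict of $\tup{\calT,\calA_0}$. We also ensure that $\calA$ is inconsistent by adding a disjoint trivial conflict over fresh names.

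This \ELbot lower bound is the main obstacle. If the cited result is not directly in the required form, I would build it from the \brave hardness reduction. Given $\calA_0$ and $\alpha=A(a)$, add the fresh assertion $\beta=D(a)$ and the axiom $A\sqcap D\sqsubseteq\bot$. Then $\beta$ is in a new conflict iff some consistent subset of $\calA_0$ entails $A(a)$, that is, iff $\tup{\calT,\calA_0}\models_\brave A(a)$. Since \brave entailment in \ELbot is \NP-hard, this gives \coNP-hardness of conflict-confinement.
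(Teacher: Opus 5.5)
Your proof is correct, and most of it matches the paper; the one genuinely different step is the \DLLite lower bound.

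\textbf{Where you agree with the paper.}
\begin{itemize}
\item Both membership arguments are the paper's. For \DLLite you check pairs of assertions and use closure of \NL under complement. For \ELbot you guess a new minimal conflict $\calC \not\subseteq \calA$ in the complement.
\item Your fallback \ELbot reduction is exactly the paper's reduction from \brave non-entailment: a fresh concept $D$, the axiom $A \sqcap D \sqsubseteq \bot$, and $\Hyp = \{D(a)\}$. This fallback, not your appeal to a literature result on \NP-hardness of conflict membership, is what makes your \ELbot lower bound self-contained.
\end{itemize}

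\textbf{Where you differ.} The paper's \DLLite reduction uses the TBox $\calT_{\text{unreach}}$ over an empty ABox. It reroutes the edges at $t$ so that an $s$-$t$-path makes $A_t$ itself unsatisfiable, and hence makes $\{A_t(a)\}$ a new \emph{singleton} conflict. You keep the plain edge encoding. In your construction, reachability creates the binary conflict $\{A_s(a), B(a)\}$ with the ABox fact $B(a)$. Your construction is simpler and fully suffices for the lemma. The paper's version gives something extra: conflict-confinement of $\{A_t(a)\}$ coincides with satisfiability of $A_t$. The paper reuses this, with a dummy inconsistency, for \NL-hardness of existence of general \brave-hypotheses in \cref{thm:dllite-exist-cc}, where only satisfiability of the observed concept matters.

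\textbf{A small caveat.} You remark that non-confinement is equivalent to $\tup{\calT,\calR\cup\Hyp}\models\bot$ for some repair $\calR$ of $\calK$. This fails when $\Hyp$ overlaps $\calA$, for example when $\Hyp$ re-adds one assertion of an existing conflict. You never use this remark, and your main characterisation via conflicts that meet $\Hyp\setminus\calA$ is the correct one.
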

\begin{proof}[Proof sketch]
  \NL-membership for \DLLite follows from the fact that conflicts are of size at most $2$, so we can iterate over all possible conflicts in logarithmic space and verify that they are indeed fresh conflicts introduced by \calH.
  Hardness can be shown by reduction from directed non-reachability, constructing for a directed graph $G = (V,E)$ and $s,t \in V$ the TBox 
  \begin{align*}
    \calT_{\text{unreach}} \coloneqq {} &\{\, A_v \sqsubseteq A_w \mid (v,w) \in E \text{ and } t \not\in \{v,w\} \,\} \cup {} \\
    &\{A_t \sqsubseteq A_s\} \cup \{\, A_v \sqsubseteq \neg A_t \mid (v,t) \in E \,\}.
  \end{align*}
  This TBox encodes edges of $G$ by concept inclusions, but replaces all edges incident to $t$ by a single edge from $t$ to $s$ and disjointness of $t$ with all of its predecessors.
  This ensures that existence of an $s$-$t$-path in $G$ is equivalent to $A_t$ being satisfiable w.r.t.\ $\calT_{\text{unreach}}$, and hence to $\{A_t(a)\}$ being conflict-confining for $\tup{\calT_{\text{unreach}}, \emptyset}$.

  For membership in case of \ELbot, we universally guess a potential conflict and verify that it is not a fresh conflict introduced by \calH in polynomial time.
  Hardness is obtained by a straightforward reduction from non-entailment under \brave semantics.
\end{proof}

We investigate the following reasoning problems for a given (\SignatureRestricted) $\Sem$-abduction problem.
\begin{definition}[Reasoning Problems]\label{def:problems}\ 
	Given an \calS-abduction problem $\tup{\calK, \alpha}$ (resp.\ \SignatureRestricted \calS-abduction problem $\tup{\calK, \alpha, \Sigma}$),
	\begin{enumerate}[topsep=-\parskip+\lineskip]
		\item the \emph{existence problem} asks whether $\tup{\calK, \alpha}$ (resp.\ $\tup{\calK, \alpha, \Sigma}$) has a solution, and
		\item the \emph{verification problem} asks whether a given ABox \calH is a hypothesis for $\tup{\calK, \alpha}$ (resp.\ for $\tup{\calK, \alpha, \Sigma}$).
	\end{enumerate}
\end{definition}
Note that \SignatureRestriction and non-triviality are trivial to check for a given hypothesis, which is why we do not consider them for
the complexity of verification.
On the other hand, existence of minimal hypotheses is equivalent to existence of any hypothesis for all considered cases.

Under some repair semantics, already standard reasoning tasks can behave in unexpected ways.
This also holds true for abduction under repair semantics, with sometimes different effects depending on the DL.
For instance, there is an interesting form of non-monotonicity in the case of $\subseteq$-minimal \ar-hypotheses, which applies to both \DLLite and \ELbot.
\begin{observation}\label{obs:ar-non-convex}
  Let $\tup{\calK, \alpha}$ be an \ar-abduction problem, where \calK is an \DLLite or \ELbot KB.
  Then, the set of \ar-hypotheses for $\tup{\calK, \alpha}$ does not need to be convex.
\end{observation}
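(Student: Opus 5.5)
The plan is to refute convexity with one explicit counterexample. Convexity would mean: whenever $\Hyp_1 \subseteq \Hyp_2 \subseteq \Hyp_3$ and $\Hyp_1, \Hyp_3$ are \ar-hypotheses, $\Hyp_2$ is one as well. So it suffices to give a single \ar-abduction problem with such a chain where $\Hyp_2$ fails. The mechanism to exploit is that adding a fact can create a new conflict that destroys \ar-entailment. Adding a further, conflict-free support of $\alpha$ then restores it.

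Concretely, I would take the TBox
\[
\calT = \{A \sqsubseteq O,\ B \sqsubseteq O,\ A \sqsubseteq \neg C,\ D \sqsubseteq \neg E\}.
\]
This is a \DLLiteCore TBox. For \ELbot, the two negative inclusions are replaced by $A \sqcap C \sqsubseteq \bot$ and $D \sqcap E \sqsubseteq \bot$. Set $\calA = \{D(b), E(b)\}$ and $\alpha = O(a)$. Then $\calK = \tup{\calT, \calA}$ is inconsistent, its repairs are $\{D(b)\}$ and $\{E(b)\}$, and neither entails $O(a)$. Hence $\tup{\calK, \alpha}$ is an \ar-abduction problem.

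Next, I would take $\Hyp_1 = \{A(a)\}$, $\Hyp_2 = \{A(a), C(a)\}$ and $\Hyp_3 = \{A(a), C(a), B(a)\}$, and check the three claims directly from the conflicts.
\begin{itemize}
\item For $\Hyp_1$: $A(a)$ lies in no conflict of $\tup{\calT, \calA \cup \Hyp_1}$, so it belongs to every repair. Every repair therefore entails $O(a)$, and $\Hyp_1$ is an \ar-hypothesis.
\item For $\Hyp_2$: the conflicts are $\{D(b), E(b)\}$ and $\{A(a), C(a)\}$. Hence $\{C(a), D(b)\}$ is a repair. It does not entail $O(a)$, since $C$ and $D$ do not imply $O$, so $\Hyp_2$ is not an \ar-hypothesis.
\item For $\Hyp_3$: $B(a)$ lies in no conflict, so it is contained in every repair. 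Every repair then entails $O(a)$ via $B \sqsubseteq O$, and $\Hyp_3$ is an \ar-hypothesis.
\end{itemize}

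There is no real obstacle here. The only care needed is that the counterexample be expressible in both logics and respect the promise conditions of Definition~\ref{def:hypotheses}: $\calK$ must be inconsistent, $\alpha$ must not be \ar-entailed, and all individuals must occur in $\calK$ or $\alpha$. The auxiliary conflict on $b$ secures the inconsistency of $\calK$ without interfering with reasoning about $a$.
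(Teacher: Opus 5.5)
Your counterexample is correct and follows the same approach as the paper's Example~\ref{ex:ar-non-convex}. That example also starts with a single conflict-free support of the observation, adds a fact that conflicts with it (creating a repair with no support), then adds a second conflict-free support that restores AR-entailment, with an unrelated dummy conflict making the KB inconsistent. Beyond renaming, you put the dummy conflict on a separate individual $b$ instead of $a$, and you spell out the repair-based checks that the paper leaves implicit.
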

\begin{example}\label{ex:ar-non-convex}
  We demonstrate this by constructing a \DLLite KB $\calK = \tup{\calT, \calA}$ that can easily be transformed into an \ELbot KB, and ABoxes $\calB_1 \subsetneq \calB_2 \subsetneq \calB_3$ such that $\calB_1$ and $\calB_3$ are \ar-hypotheses for $\tup{\calK, A(a)}$, but $\calB_2$ is not,
  as follows: 
  \begin{align*}
  	\calT &\coloneqq \{ B_1 \sqsubseteq \neg B_2,~ 
  	C_1 \sqsubseteq \neg C_2,\quad 
  	B_1 \sqsubseteq A,~ 
  	B_3 \sqsubseteq A\}, \\
    \calA &\coloneqq \{C_1(a), C_2(a)\}, \qquad \calB_1 \coloneqq \{B_1(a)\}, \quad \calB_2 \coloneqq \calB_1 \cup \{B_2(a)\}, \quad \calB_3 \coloneqq \calB_2 \cup \{B_3(a)\}.
  \end{align*}
  The TBox \calT can readily be transformed into a \ELbot KB by changing the disjointness axioms to the appropriate form.
\end{example}
This observation implies that for $\subseteq$-minimality,
it is generally not sufficient to \emph{locally} check all subsets that remove one assertion at a time.
This leads to \PiP-hardness for verification of $\subseteq$-minimal \ar-hypotheses in \ELbot (see \Cref{thm:elbot-verif-ar-subset}).
In contrast, we can circumvent the need for a global check in \DLLite by building hypotheses bottom-up from singleton sets, using that in this case \calT-supports are always of size $1$.
This results in \DP-completeness here, as shown in \Cref{thm:dllite-ar-hyps} and \Cref{thm:dllite-verif-ar-subset}.

\subsection{Effect of the Trivial Hypothesis}
\label{sec:triv}

Because we only consider concept assertions as observations $\alpha$,
the ABox $\{\alpha\}$ is already  a candidate hypothesis for $\tup{\calK,\alpha}$ for any KB~\calK.
We study next how such \emph{trivial hypotheses} affect certain cases of the existence and the verification problem in our framework.
Regarding \brave semantics, the following is easy to see.
\begin{proposition}\label{prop:brave-triv}
  Let $\tup{\calK, \alpha}$ be a \brave-abduction problem.
  There is a \brave-hypothesis for $\tup{\calK, \alpha}$ iff $\{\alpha\}$ is a \brave-hypothesis for $\tup{\calK, \alpha}$.
\end{proposition}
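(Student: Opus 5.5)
The direction from right to left is immediate: if $\{\alpha\}$ is a \brave-hypothesis, then some \brave-hypothesis exists. So the plan concentrates on the direction from left to right. Assume \Hyp is a \brave-hypothesis for $\tup{\calK,\alpha}$ with $\calK=\tup{\calT,\calA}$. By the definition of \brave entailment, there is a repair $\calR$ of $\tup{\calT,\calA\cup\Hyp}$ such that $\tup{\calT,\calR}\models\alpha$. The goal is to exhibit a repair $\calR'$ of $\tup{\calT,\calA\cup\{\alpha\}}$ with $\tup{\calT,\calR'}\models\alpha$.

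The key step is to show that $\{\alpha\}$ is $\calT$-consistent. Since $\calR$ is $\calT$-consistent, it has a model $\calI$. Because $\tup{\calT,\calR}\models\alpha$, the same $\calI$ satisfies $\alpha$, and it is a model of \calT. Hence $\tup{\calT,\{\alpha\}}$ has the model $\calI$ and is consistent.

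Next, extend $\{\alpha\}$ to a repair. Let $\calR'$ be any subset-maximal $\calT$-consistent subset of $\calA\cup\{\alpha\}$ that contains $\alpha$. Such a set exists because the ABox is finite and $\{\alpha\}$ is itself $\calT$-consistent, so we can greedily add assertions from \calA while preserving consistency. Maximality within the family of consistent subsets containing $\alpha$ already gives maximality among all consistent subsets. Indeed, any consistent proper superset of $\calR'$ also contains $\alpha$, which would contradict the choice of $\calR'$. So $\calR'$ is a repair of $\tup{\calT,\calA\cup\{\alpha\}}$. Since $\alpha\in\calR'$, we have $\tup{\calT,\calR'}\models\alpha$, and therefore $\tup{\calT,\calA\cup\{\alpha\}}\models_\brave\alpha$.

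Finally, $\{\alpha\}$ uses only individuals occurring in $\alpha$, so it meets the signature condition of Definition~\ref{def:hypotheses}, and it is a \brave-hypothesis. The argument is short, and the only point requiring care is the consistency of $\{\alpha\}$, which is handled by the model argument above. The argument is also independent of the DL, so it covers both \DLLite and \ELbot.
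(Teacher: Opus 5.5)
Your proof is correct and follows essentially the same route as the paper. Both arguments hinge on the fact that a \brave-hypothesis exists iff $\{\alpha\}$ is $\calT$-consistent, in which case $\alpha$ lies in some repair of $\tup{\calT,\calA\cup\{\alpha\}}$; you additionally spell out the two steps the paper leaves implicit, namely the model argument for the $\calT$-consistency of $\{\alpha\}$ and the extension of $\{\alpha\}$ to a repair.
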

\begin{proof}
  $\{\alpha\}$ is a \brave-hypothesis for $\tup{\calK, \alpha}$ if $\{\alpha\}$ is \calT-consistent, as in this case $\alpha$ is contained in some repairs of $\tup{\calT, \calA \cup \{\alpha\}}$.
  Otherwise, if $\{\alpha\}$ is \calT-inconsistent, there is no \brave-hypothesis for $\tup{\calK, \alpha}$.
\end{proof}

We next show that for existence of general \ar-hypotheses, it is sufficient to check the trivial hypothesis.
Furthermore, for the trivial hypothesis, the properties of being an \ar-hypothesis and being conflict-confining coincide.
Note that the trivial hypothesis $\{\alpha\}$ being conflict-confining for the KB \calK means that there is \emph{no} repair \calR of \calK s.t.\ $\calR \cup \{\alpha\}$ is \calT-inconsistent.
Intuitively, this can be seen as \emph{non-entailment} of the negation of $\alpha$ under \brave semantics.
This provides some intuition for why the complexity of checking whether an ABox is conflict-confining for a KB has the same complexity as the complement of \brave-entailment for both DLs, cf.\ \cref{lem:cc-compl}.

\begin{restatable}{lemma}{arTriv}\label{lem:ar-triv}
  Let $\tup{\calK, \alpha}$ be an \ar-abduction problem with $\calK = \tup{\calT, \calA}$.
  The following are equivalent:
  \begin{enumerate}
    \item there is an \ar-hypothesis for $\tup{\calK, \alpha}$,
    \item $\{\alpha\}$ is an \ar-hypothesis for $\tup{\calK, \alpha}$, and
    \item $\{\alpha\}$ is conflict-confining for \calK.
  \end{enumerate}
\end{restatable}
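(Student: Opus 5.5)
We prove the cycle of implications (2)$\Rightarrow$(1)$\Rightarrow$(3)$\Rightarrow$(2). The first, (2)$\Rightarrow$(1), is immediate from Definition~\ref{def:hypotheses}.

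For (3)$\Rightarrow$(2), assume $\{\alpha\}$ is conflict-confining for $\calK$. By the equivalent formulation noted after Definition~\ref{def:properties}, $\calR \cup \{\alpha\}$ is $\calT$-consistent for every repair $\calR$ of $\calK$. We then determine the repairs of $\calK' = \tup{\calT, \calA \cup \{\alpha\}}$. Conflict-confinement gives $\Conf(\calK') = \Conf(\calK)$, so no conflict of $\calK'$ contains $\alpha$. Since repairs are exactly the maximal subsets containing no conflict, every repair $\calR'$ of $\calK'$ contains $\alpha$. Hence $\tup{\calT, \calR'} \models \alpha$ for all repairs, and so $\calK' \models_\ar \alpha$. (If $\alpha \in \calA$ already, the claim is vacuous, since then $\calK \models_\ar \alpha$ would follow by the same argument, contradicting the promise; either way the argument goes through.)

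The main step is (1)$\Rightarrow$(3), which we prove by contraposition. Suppose $\{\alpha\}$ is not conflict-confining, so there is a repair $\calR$ of $\calK$ with $\calR \cup \{\alpha\}$ $\calT$-inconsistent. Let $\calH$ be any ABox. We must exhibit a repair of $\tup{\calT, \calA \cup \calH}$ that does not entail $\alpha$. We extend $\calR$ to a maximal $\calT$-consistent subset $\calR'$ of $\calA \cup \calH$. By maximality of $\calR$ within $\calA$, we have $\calR' \cap \calA = \calR$.

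Now suppose $\tup{\calT, \calR'} \models \alpha$. Then $\calR' \cup \{\alpha\}$ is consistent, since any model of $\calR'$ satisfies $\alpha$. But $\calR \cup \{\alpha\} \subseteq \calR' \cup \{\alpha\}$, so $\calR \cup \{\alpha\}$ would be consistent, which is a contradiction. Thus $\calR'$ does not entail $\alpha$, and $\calH$ is not an \ar-hypothesis. As $\calH$ was arbitrary, there is no \ar-hypothesis at all.

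The subtle point is ensuring that $\calR$ extends to a repair of the enlarged KB. This holds because $\calR$ is consistent and finite subsets admit maximal consistent extensions. Monotonicity of classical entailment then makes the argument independent of the DL, so the lemma holds for both \DLLite and \ELbot.
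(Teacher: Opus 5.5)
Your proof is correct and follows the paper's argument essentially step for step. It uses the same cycle of implications; (1)$\Rightarrow$(3) goes by contraposition, extending a repair $\calR$ of $\calK$ with $\calR \cup \{\alpha\}$ inconsistent to a repair of $\tup{\calT, \calA \cup \Hyp}$ that cannot entail $\alpha$, and (3)$\Rightarrow$(2) shows that every repair of $\tup{\calT, \calA \cup \{\alpha\}}$ must contain $\alpha$.

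One small caveat concerns your parenthetical on $\alpha \in \calA$. It is only valid under the repair-based reading of conflict-confinement: there, maximality forces $\alpha$ into every repair of $\calK$, contradicting the promise. Under the literal $\Conf$-equality definition, $\{\alpha\}$ is trivially conflict-confining in that case while (2) fails by the promise, so (3)$\Rightarrow$(2) is actually false there. The paper's proof has the same edge case and likewise sidesteps it by silently using the repair-based formulation.
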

\begin{proof}[Proof sketch]
  \emph{(2) $\Rightarrow$ (1)} is obvious.
  
  \emph{(3) $\Rightarrow$ (2)}: For any repair \calR of $\tup{\calT, \calA \cup \{\alpha\}}$, we argue that $\alpha \in \calR$ and hence $\alpha$ is \ar-entailed:
  If instead $\alpha \not\in \calR$, then \calR is a repair of \calK, so $\calR \cup \{\alpha\}$ is \calT-consistent contradicting maximality of \calR.

  \emph{(1) $\Rightarrow$ (3)}: If $\{\alpha\}$ is not conflict-confining for \calK, there is a conflict of $\tup{\calT, \calA \cup \{\alpha\}}$ containing $\alpha$.
  This yields a repair~\calR of $\tup{\calT, \calA \cup \{\alpha\}}$ not containing $\alpha$.
  Now for any ABox \Hyp, there is a repair $\calR'$ of $\tup{\calT, \calA \cup \Hyp}$ with $\calR \subseteq \calR'$.
  But since $\calR \cup \{\alpha\}$ is \calT-inconsistent, we have $\tup{\calT, \calR'} \not\models \alpha$, so \Hyp is not an \ar-hypothesis for $\alpha$.
\end{proof}

The above properties simplify the corresponding existence problems, and guarantee that $\leq$-minimal \ar-hypotheses are of size $1$.
For \ar semantics, existence of conflict-confining hypotheses is similarly affected, while for \brave semantics, 
the existence even becomes trivial if the concept name in $\alpha$ is satisfiable w.r.t.\ \calT.

\section{The Case of \DLLite} 
\label{sec:dllite}

We establish the complexity results for abduction problems over \DLLite KBs, and establish a number of interesting properties of these problems and the corresponding hypotheses.
An overview of the results is found in \cref{tab:results}, whereas this section is structured by the involved proof techniques of the results.
We begin by noting that verification of (general) \calS-hypotheses is essentially equivalent to \calS-entailment.
By \cref{prop:brave-triv} and \cref{lem:ar-triv}, this extends to $\leq$-minimal hypotheses under both repair semantics.

\begin{restatable}{theorem}{dlliteVerifGenCard}\label{thm:dllite-verif-gen-card}
  Verification of general and of  $\leq$-minimal \calS-hypotheses is
  \begin{enumerate*}[label=(\arabic*)]
    \item \NL-complete for $\calS = \brave$, and
    \item \coNP-complete for $\calS = \ar$.
  \end{enumerate*}
\end{restatable}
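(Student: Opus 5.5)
For membership, I reduce verification to \calS-entailment over the extended KB. Given an \calS-abduction problem $\tup{\calK,\alpha}$ with $\calK=\tup{\calT,\calA}$ and a candidate ABox \Hyp, I check that \Hyp only uses individuals from \calK and $\alpha$, which is trivial. I then decide whether $\tup{\calT,\calA\cup\Hyp}\models_\calS\alpha$. By the known complexity of entailment under repair semantics~\cite{BienvenuB16}, this is in \NL for \brave and in \coNP for \ar.

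For $\leq$-minimality I use the trivial hypothesis. Under \brave semantics, \cref{prop:brave-triv} shows that whenever some \brave-hypothesis exists, $\{\alpha\}$ is one. Under \ar semantics, \cref{lem:ar-triv} gives the same statement. Hence the $\leq$-minimal hypotheses are exactly the hypotheses of size~$1$, since the empty ABox is never a hypothesis because $\calK\not\models_\calS\alpha$ by the promise. Verification of $\leq$-minimality therefore amounts to checking $|\Hyp|=1$ together with the general hypothesis check, which keeps us in \NL and \coNP respectively.

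For hardness, I reduce $\calS$-entailment (or its complement) in \DLLiteCore to verification. The main obstacle is the promise: the input must be an inconsistent KB with $\calK\not\models_\calS\alpha$, and the hypothesis to be verified must have size~$1$ so that the $\leq$-minimal case is covered as well. My first attempt takes a source problem ($\calK'=\tup{\calT',\calA'}$, $\beta$) and builds a KB whose ABox deliberately omits one support. I use a fresh concept name $D$ and axiom $D\sqsubseteq$ (something) so that the hypothesis $\Hyp=\{D(a)\}$ completes the entailment of a fresh target $A_0(a)$ exactly when $\beta$ is \calS-entailed. Since \DLLite lacks conjunction, I instead realise the combination through disjointness, or via the source structure itself.

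Concretely, for \ar I take the \coNP-hard source instances (KBs $\tup{\calT',\calA'}$ and $\beta=B(b)$ from~\cite{BienvenuB16}). I add a fresh inconsistent pair $E_1(c),E_2(c)$ with $E_1\sqsubseteq\neg E_2$ to guarantee inconsistency. I remove one fact from $\calA'$ that is needed in every support, move it into \Hyp, and make the observation fresh enough that it is not entailed without \Hyp. If this is awkward, I fall back on the $\NL$-hardness of plain entailment in \DLLiteCore: for \brave, a graph reachability encoding $A_s\sqsubseteq\dots$ with $\Hyp=\{A_s(a)\}$, observation $A_t(a)$ and an unrelated conflict gives \NL-hardness directly. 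For \ar, I adapt the standard \coNP-hardness reduction (UNSAT-style, as in~\cite{BienvenuB16}) so that the query assertion's support consists of a single fact placed in \Hyp, checking that without it the observation is not entailed.
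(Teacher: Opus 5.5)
Your membership argument, including the use of \cref{prop:brave-triv} and \cref{lem:ar-triv} to reduce $\leq$-minimality to $|\Hyp|=1$, matches the paper. So does your \NL-hardness for \brave via reachability with $\Hyp=\{A_s(a)\}$ and a dummy conflict.

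The gap is the \coNP-lower bound for \ar, and the concrete plan you give cannot work. In \DLLite, minimal \calT-supports of concept assertions are single facts. So a fact of $\calA'$ that is \enquote{needed in every support} of $\beta$ is the \emph{unique} support $\gamma$ of $\beta$. Then $\calK'\models_\ar\beta$ iff $\gamma$ is \calT-consistent and lies in no conflict, which is decidable in \NL; such source instances are never \coNP-hard. Your fallback, read literally, has the same defect. If the only support of the observation is the single fact in \Hyp, verification amounts to checking that this fact entails $\alpha$ and is conflict-confining, again an \NL question (cf.\ \cref{lem:cc-compl}). The fresh-target gadget with $\Hyp=\{D(a)\}$ fares no better. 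By \cref{thm:dllite-ar-hyps}, $\{D(a)\}$ is an \ar-hypothesis iff every repair of \calK contains a support of $A_0(a)$ from \calA or is consistent with the support $D(a)$. If \calA contains no such support, this collapses to conflict-confinement. If it does, you face the promise problem described below. Hardness of \ar-entailment in \DLLite stems from having \emph{many} supports, each blocked by different conflicts: one $U(a,c_j)$ per clause in the construction of~\cite{bienvenu2019computing}.

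What is missing is how to distribute these supports between \calA and \Hyp while keeping the promise $\calK\not\models_\ar\alpha$. For general hypotheses, the paper keeps only the $P$- and $N$-assertions in the ABox and puts \emph{all} $U(a,c_j)$ into \Hyp. The promise then holds trivially, since \calA contains no support of $A(a)$. For $\leq$-minimal hypotheses you need $|\Hyp|=1$, so all other $U(a,c')$ must sit in the KB. The promise then becomes equivalent to satisfiability of $\varphi$ minus the chosen clause. You cannot \enquote{check} this inside a polynomial-time reduction, since it is itself an \NP-complete question. The paper guarantees it with a normal form: add $x_{n+1}$ to every clause and append the clauses $\neg x_{n+1}\lor x_{n+2}$ and $\neg x_{n+2}$. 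The result is equisatisfiable with $\varphi$, and dropping its last clause always leaves a satisfiable formula. So the promise holds unconditionally, and $\{U(a,c'_{k+2})\}$ is a (necessarily $\leq$-minimal) \ar-hypothesis iff $\varphi$ is unsatisfiable. As written, your proposal establishes neither \ar lower bound. The general case is easily repaired as above; the $\leq$-minimal case needs an idea like this normal form.
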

\begin{proof}[Proof sketch]
  The upper bounds are obtained from those for \calS-entailment, as argued above.
  The lower bound for \brave semantics follows by a reduction from directed reachability, based on the following construction.
  Given a directed graph $G = (V,E)$ and $s,t \in V$, define
  \[\calT \coloneqq \{\, A_v \sqsubseteq A_w \mid (v,w) \in E \,\}, \quad \Hyp_{\text{reach}} \coloneqq \{A_s(a)\}.\]
	Now there is an $s$-$t$-path in $G$ iff $\tup{\calT, \Hyp_{\text{reach}}} \models A_t(a)$.
  Adding a dummy inconsistency to obtain a KB $\calK_\text{reach}$ yields the desired reduction.
  Note that this is a simpler variant of the construction used in the proof of \cref{lem:cc-compl}.
  
  The lower bound for general \ar-hypotheses follows by reduction from unsatisfiability, adapting a construction
  from~\cite{bienvenu2019computing}.
	For a CNF $\varphi(x_1, \dots, x_n) \dfn \{c_1, \dots, c_k\}$, set $\calK_{\text{unsat}} \coloneqq \tup{\calT, \calA}$ where
  \begin{align*}
	  \calT \coloneqq {} & \{\exists U \subsum A,~\exists P^{-} \subsum \neg \exists N^{-} \} \cup \{ \exists P \subsum \neg \exists U^{-},~\exists N \subsum \neg \exists U^{-} \}, \\
	  \calA \coloneqq {} & \{P(c_j,x_i) \mid x_i \in c_j \} \cup \{N(c_j,x_i) \mid \neg x_i \in c_j \}, \text{ and} \\
    \Hyp \coloneqq {} & \{U(a,c_j) \mid j \leq k \}.
  \end{align*}
  It is known that $\tup{\calT, \calA \cup \Hyp} \models_\ar A(a)$ iff $\varphi$ is unsatisfiable, yielding the desired reduction.

  For $\leq$-minimal \ar-hypotheses, we use the same construction, but first transform $\varphi$ into a normal form,
  where~$\varphi$ has the property that removing the last clause~$c$ from~$\varphi$ always yields a satisfiable formula.
  This then allows us to obtain a singleton hypothesis candidate in the construction, which ensures $\leq$-minimality.
  Given a formula $\varphi = \{c_1, \dots, c_k\}$ with clauses $c_i$, the normal form can be achieved by letting $\varphi' \coloneqq \{c_1', \dots, c_{k+2}'\}$ with
  \begin{align*}
    c_i' &\coloneqq c_i \cup \{x_{n+1}\} \text{ for } 1 \leq i \leq k, \\
    c_{k+1}' &\coloneqq \neg x_{n+1} \lor x_{n+2}, \text{ and} \\
    c_{k+2}' &\coloneqq \neg x_{n+2}
  \end{align*}
  We observe that $\varphi'$ and $\varphi$ are equisatisfiable, but removing the clause $c_{k+2}'$ from $\varphi'$ yields a satisfiable sub-formula.

  Now for a formula $\varphi$ in our normal form with last clause~$c$, let \calT and \calA be as above, and define
  \[\calK' \coloneqq \tup{\calT, \calA \cup \{\, U(a,c') \mid c' \in \varphi \setminus \{c\} \,\}}\]
  Here, $\Hyp' \coloneqq \{U(a,c)\}$ is an \ar-hypothesis for $\tup{\calK', A(a)}$ iff $\varphi$ is unsatisfiable, since $\varphi \setminus \{c\}$ is never unsatisfiable.
  As $|\Hyp'| = 1$, $\leq$-minimality is ensured.
\end{proof}

Turning towards the existence problem, \cref{prop:brave-triv} and \cref{lem:ar-triv} show that for general \brave-hypotheses as well as for general and for conflict-confining \ar-hypotheses, it is sufficient to check the trivial hypothesis.
We next show that for \DLLite, the same applies to conflict-confining \brave-hypotheses by showing that here, \brave-hypotheses have a very simple structure.

\begin{restatable}{lemma}{dlliteSingletonHyps}\label{lem:dllite-singleton-hyps}
  Let $\tup{\calK, \alpha}$ be a \brave-abduction problem, where $\calK = \tup{\calT, \calA}$ is a \DLLite KB, and assume there is a \brave-hypothesis \Hyp for $\tup{\calK, \alpha}$.
  Then there is an assertion $\beta \in \Hyp$ s.t.\ $\{\beta\}$ is a \brave-hypothesis for $\tup{\calK, \alpha}$.
  Furthermore, $\Conf(\tup{\calT, \calA \cup \{\beta\}}) \subseteq \Conf(\tup{\calT, \calA \cup \Hyp})$ and for any conflict $\{\alpha, \gamma\} \in \Conf(\tup{\calT, \calA \cup \{\alpha\}})$, we have $\{\beta, \gamma\} \in \Conf(\tup{\calT, \calA \cup \{\beta\}})$.
\end{restatable}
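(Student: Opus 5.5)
The plan is to work directly from the two structural properties of \DLLite recalled in the preliminaries: conflicts have size $2$, and subset-minimal \calT-supports of concept assertions are singletons. The argument has three parts: extract $\beta$, check the conflict inclusion by monotonicity, and transfer the conflicts involving $\alpha$ to $\beta$.

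\emph{Finding $\beta$.} Since \Hyp is a \brave-hypothesis, there is a repair \calR of $\tup{\calT, \calA \cup \Hyp}$ with $\tup{\calT, \calR} \models \alpha$. As \calR is \calT-consistent, it contains a subset-minimal \calT-support of $\alpha$. By the \DLLite property this support is a single assertion $\beta \in \calR$ with $\tup{\calT, \{\beta\}} \models \alpha$.

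Next I show $\beta \notin \calA$. Suppose $\beta \in \calA$. Then $\{\beta\}$ is a \calT-consistent subset of \calA, so it extends to a repair of \calK. That repair entails $\alpha$, which contradicts $\calK \not\models_\brave \alpha$. Hence $\beta \in \Hyp$.

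Finally, $\{\beta\}$ is \calT-consistent, so it extends to a repair of $\tup{\calT, \calA \cup \{\beta\}}$. That repair entails $\alpha$, so $\{\beta\}$ is a \brave-hypothesis.

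\emph{Conflict inclusion.} This follows from a general monotonicity fact: for ABoxes $X \subseteq Y$ we have $\Conf(\tup{\calT, X}) \subseteq \Conf(\tup{\calT, Y})$. The reason is that a subset-minimal \calT-inconsistent subset of $X$ is also a subset of $Y$. It remains subset-minimal there, because minimality only concerns its own subsets. Since $\beta \in \Hyp$, we have $\calA \cup \{\beta\} \subseteq \calA \cup \Hyp$, which gives the claim.

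\emph{Transferring conflicts.} Let $\{\alpha, \gamma\} \in \Conf(\tup{\calT, \calA \cup \{\alpha\}})$. The first step is to show $\gamma \in \calA$ and $\gamma \neq \beta$.
\begin{itemize}
  \item Because conflicts have size exactly $2$ here, we have $\gamma \neq \alpha$, so $\gamma \in \calA$.
  \item If $\gamma = \beta$, then $\{\alpha, \beta\}$ is inconsistent, while $\beta$ alone entails $\alpha$. This would make $\{\beta\}$ itself \calT-inconsistent, which contradicts what was shown above.
\end{itemize}
So $\{\beta, \gamma\}$ is a two-element subset of $\calA \cup \{\beta\}$.

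It is \calT-inconsistent: every model of \calT and $\{\beta, \gamma\}$ satisfies $\alpha$, because $\tup{\calT, \{\beta\}} \models \alpha$. It would then be a model of $\{\alpha, \gamma\}$, which is impossible.

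It is also subset-minimal. The set $\{\beta\}$ is consistent as shown above. The set $\{\gamma\}$ is consistent by minimality of the conflict $\{\alpha, \gamma\}$. Hence $\{\beta, \gamma\}$ is a conflict of $\tup{\calT, \calA \cup \{\beta\}}$.

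The main point needing care is the first step: locating $\beta$ inside \Hyp rather than \calA, which is exactly where the promise $\calK \not\models_\brave \alpha$ is used. The rest is routine.
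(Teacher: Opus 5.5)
Your proof is correct and follows the paper's argument. Singleton minimal supports together with $\calK \not\models_\brave \alpha$ place $\beta$ in $\Hyp$, monotonicity gives the conflict inclusion, and $\tup{\calT, \{\beta\}} \models \alpha$ transfers conflicts; you also spell out the minimality checks that the paper leaves implicit.

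One small fix: \DLLite conflicts can be singletons (e.g.\ via an axiom $C \sqsubseteq \neg C$, as in \cref{ex:non-triv}), so conflict size does not justify $\gamma \neq \alpha$. Justify it instead by noting that $\{\alpha\}$ is \calT-consistent, because the consistent set $\{\beta\}$ entails it.
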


In particular, the last part of the lemma shows that existence of any conflict-confining \brave-hypothesis is equivalent to $\{\alpha\}$ being one.
Now, using the upper bound for checking that an ABox is conflict-confining in \Cref{lem:cc-compl}, we obtain following complexity results.
Surprisingly, the complexity under \ar semantics drops below even that of \ar entailment.
This can be explained by the relationship to brave non-entailment noted before \Cref{lem:ar-triv}.

\begin{restatable}{theorem}{dlliteExistCc}\label{thm:dllite-exist-cc}
  The existence problems for general and for conflict-confining \calS-hypotheses are \NL-complete for $\calS \in \{\brave, \ar\}$.
\end{restatable}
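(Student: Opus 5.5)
For the upper bounds, I will reduce all four existence problems to a check on the trivial hypothesis $\{\alpha\}$. For general \brave-hypotheses, \cref{prop:brave-triv} shows that existence is equivalent to $\{\alpha\}$ being a \brave-hypothesis. That in turn is equivalent to $\{\alpha\}$ being \calT-consistent, i.e.\ to the concept name in $\alpha$ being satisfiable w.r.t.\ \calT. Satisfiability of a basic concept in \DLLite is decidable in \NL: it amounts to checking that the concept does not reach, via the positive inclusions, two concepts declared disjoint. For general and conflict-confining \ar-hypotheses, \cref{lem:ar-triv} shows that existence of either kind is equivalent to $\{\alpha\}$ being conflict-confining for \calK. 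This property is decidable in \NL by \cref{lem:cc-compl}. (For the conflict-confining \ar case, any conflict-confining \ar-hypothesis is in particular an \ar-hypothesis, and $\{\alpha\}$ is then conflict-confining by (1)$\Rightarrow$(3).)

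For conflict-confining \brave-hypotheses, I will use \cref{lem:dllite-singleton-hyps}. Let \Hyp be a conflict-confining \brave-hypothesis and $\beta\in\Hyp$ the assertion it provides. Suppose some conflict of $\tup{\calT,\calA\cup\{\alpha\}}$ contained $\alpha$. Since conflicts have size at most $2$, it is either $\{\alpha\}$ or $\{\alpha,\gamma\}$. In the latter case the lemma gives $\{\beta,\gamma\}\in\Conf(\tup{\calT,\calA\cup\{\beta\}})\subseteq\Conf(\tup{\calT,\calA\cup\Hyp})=\Conf(\calK)$. But $\beta\in\Hyp$ cannot occur in a conflict of \calK unless $\beta\in\calA$, and this case must be handled. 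If $\beta\in\calA$, then $\{\beta\}$ adds nothing, so $\calK\models_\brave\alpha$ already, contradicting the promise. The singleton case $\{\alpha\}$ is impossible, because then no \brave-hypothesis exists at all. Hence $\{\alpha\}$ is conflict-confining, and it is a \brave-hypothesis since it is \calT-consistent. Existence is therefore equivalent to $\{\alpha\}$ being conflict-confining, which is again in \NL by \cref{lem:cc-compl}. The step I expect to need most care is the edge case $\beta\in\calA$, and confirming that a new conflict containing $\beta$ really cannot already lie in $\Conf(\calK)$.

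For \NL-hardness, I reduce from directed (non-)reachability. Since $\NL=\co\NL$, either direction suffices. For the conflict-confining and the \ar cases, I reuse the construction of \cref{lem:cc-compl}: take $\calT_{\text{unreach}}$, add a dummy conflict such as $\{D_1(b),D_2(b)\}$ with $D_1\sqsubseteq\neg D_2$ on fresh names to make \calK inconsistent, and set $\alpha=A_t(a)$. Then $\{\alpha\}$ is conflict-confining iff $A_t$ is satisfiable iff $t$ is reachable from $s$. The promise $\calK\not\models_\calS\alpha$ holds because $a$ does not occur in \calA. By the equivalences above, this single instance gives hardness for all four problems; for the general \brave case, existence is likewise equivalent to satisfiability of $A_t$.
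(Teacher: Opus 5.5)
Your proposal is correct and follows the paper's proof: every case is reduced to $\{\alpha\}$ being \calT-consistent (general \brave) or conflict-confining (the other three) via \cref{prop:brave-triv}, \cref{lem:ar-triv} and \cref{lem:dllite-singleton-hyps}, and hardness comes from the non-reachability TBox of \cref{lem:cc-compl} plus a dummy conflict. You are more careful than the paper about the case $\beta\in\calA$ and about making the hardness instance inconsistent. There are two harmless slips. First, $A_t$ is satisfiable w.r.t.\ $\calT_{\text{unreach}}$ iff $t$ is \emph{not} reachable from $s$, which does not matter given your $\NL=\co\NL$ remark. Second, your parenthetical test for \DLLite satisfiability misses unsatisfiability propagated through existential successors (e.g.\ $A\sqsubseteq\exists R$, $\exists R^-\sqsubseteq B$, $\exists R^-\sqsubseteq C$, $B\sqsubseteq\neg C$), although the \NL bound you invoke is standard.
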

\begin{proof}[Proof sketch]
  For general \brave-hypotheses, membership readily follows from \cref{prop:brave-triv}.
  The remaining cases are equivalent to checking that $\{\alpha\}$ is conflict-confining by lemmata~\ref{lem:ar-triv} and~\ref{lem:dllite-singleton-hyps}, and the complexity follows from \cref{lem:cc-compl}.
  Hardness for \brave-hypotheses can be shown by a similar reduction as the one used in \Cref{lem:cc-compl}, adding a dummy inconsistency.
\end{proof}

Building on these techniques, we next show that verification of conflict-confining and conflict-minimal hypotheses enjoys the same low complexity.

\begin{restatable}{theorem}{dlliteVerifCcCmin}\label{thm:dllite-verif-cc-cmin}
  Verification of conflict-confining and of $\preceq_c$-minimal \calS-hypotheses is \NL-complete for ${\preceq} \in \{\subseteq, \leq\}$ and $\calS \in \{\brave, \ar\}$.
\end{restatable}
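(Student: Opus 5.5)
The plan is to reduce every case to a polynomial number of checks, each of which is \NL-decidable. These are: \Tmc-entailment from a single assertion, \Tmc-consistency of a pair of assertions, and conflict-confinement (\Cref{lem:cc-compl}). We then combine these checks using $\mathsf{L}^{\NL} = \NL$, which holds since $\NL = \co\NL$. For the lower bounds we reuse the reachability construction from the proof of \Cref{thm:dllite-verif-gen-card}. We take $\calT = \{A_v \sqsubseteq A_w \mid (v,w)\in E\}$ with a dummy conflict $\{D_1(b), D_2(b)\}$, where $D_1 \sqsubseteq \neg D_2$, and set $\Hyp = \{A_s(a)\}$ with observation $A_t(a)$. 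Then $\Hyp$ introduces no new conflicts at all. Hence it is conflict-confining, $\preceq_c$-minimal and an \calS-hypothesis (for both \calS) exactly when $t$ is reachable from $s$. This gives \NL-hardness for all variants.

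\emph{Conflict-confining hypotheses.} Let \Hyp be conflict-confining for $\calK = \tup{\calT,\calA}$. Then no conflict of $\tup{\calT,\calA\cup\Hyp}$ meets \Hyp, so the repairs of $\tup{\calT,\calA\cup\Hyp}$ are exactly the sets $\calR\cup\Hyp$ for repairs \calR of \calK. Since minimal \calT-supports in \DLLite have size $1$, $\tup{\calT,\calR\cup\Hyp}\models\alpha$ holds iff some single assertion of $\calR\cup\Hyp$ entails $\alpha$. Suppose no $\beta\in\Hyp$ entails $\alpha$ on its own. Then every (resp.\ some) repair \calR of \calK would already entail $\alpha$, i.e.\ $\calK\models_\calS\alpha$, which contradicts the promise. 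Hence, for both $\calS\in\{\brave,\ar\}$:
\begin{center}
\Hyp is a conflict-confining \calS-hypothesis iff \Hyp is conflict-confining and there is $\beta\in\Hyp$ with $\tup{\calT,\{\beta\}}\models\alpha$.
\end{center}
For the \brave direction, consistency of $\{\beta\}$ is guaranteed by conflict-confinement. Both conjuncts are in \NL.

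\emph{$\preceq_c$-minimal hypotheses.} Write $\mathrm{New}(\Hyp) = \Conf(\tup{\calT,\calA\cup\Hyp})\setminus\Conf(\calK)$. For \ar, \Cref{lem:ar-triv} says that a hypothesis exists only if $\{\alpha\}$ is conflict-confining. In that case the least possible conflict set is $\Conf(\calK)$, and it is attained. So the $\preceq_c$-minimal \ar-hypotheses are exactly the conflict-confining ones, and we are back in the previous case.

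For \brave, \Cref{lem:dllite-singleton-hyps} shows that every hypothesis $\Hyp'$ contains some $\beta$ such that $\{\beta\}$ is a hypothesis and $\Conf(\tup{\calT,\calA\cup\{\beta\}})\subseteq\Conf(\tup{\calT,\calA\cup\Hyp'})$. It also gives an injection $\{\alpha,\gamma\}\mapsto\{\beta,\gamma\}$ from $\mathrm{New}(\{\alpha\})$ into $\mathrm{New}(\{\beta\})$.

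For $\leq_c$ this means $\{\alpha\}$ attains the minimum. So \Hyp is $\leq_c$-minimal iff \Hyp is a \brave-hypothesis and $|\mathrm{New}(\Hyp)| = |\mathrm{New}(\{\alpha\})|$. Both counts are over pairs of assertions, and each membership test is \NL-decidable, so the counts can be computed in logspace with an \NL oracle.

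For $\subseteq_c$, \Hyp fails to be minimal iff some singleton $\{\beta'\}$ with $\beta'$ over the signature of \calK and $\alpha$ (polynomially many candidates) is a \brave-hypothesis and satisfies $\Conf(\tup{\calT,\calA\cup\{\beta'\}})\subsetneq\Conf(\tup{\calT,\calA\cup\Hyp})$. We enumerate the candidates and test inclusion of these sets of size-$2$ conflicts pairwise with the \NL oracle.

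\emph{Main obstacle.} The delicate step is the $\subseteq_c$ case for \brave. There one must check carefully that \Cref{lem:dllite-singleton-hyps} indeed allows restricting competitors to singletons; it does, because any witness $\Hyp'$ can be replaced by its $\beta$. One must also check that strictness of the inclusion is preserved, which follows by transitivity. Everything else is routine bookkeeping that stays inside $\mathsf{L}^{\NL}=\NL$.
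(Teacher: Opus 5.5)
Your proof is correct in substance. The conflict-confining cases, the \ar{} cases and the hardness reduction match the paper: for \ar{} the paper uses essentially your characterization (``conflict-confining and \Hyp{} \calT-entails $\alpha$'') and reduces $\preceq_c$-minimality to conflict-confinement via \Cref{lem:ar-triv}.

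For the \brave{} $\preceq_c$ cases you take a different route. The paper first locates some $\beta\in\Hyp$ with $\{\beta\}$ a \brave-hypothesis and $\Conf(\tup{\calT,\calA\cup\{\beta\}})=\Conf(\tup{\calT,\calA\cup\Hyp})$; any $\preceq_c$-minimal \Hyp{} contains such a $\beta$ by \Cref{lem:dllite-singleton-hyps}. For $\leq_c$ it then compares $\beta$ with $\alpha$ structurally: there must be no $\gamma\in\calA$ with $\{\alpha,\gamma\}$ \calT-consistent but $\{\beta,\gamma\}$ \calT-inconsistent. For $\subseteq_c$ it case-splits on whether $\{\alpha\}$ is conflict-confining. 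If it is, $\{\beta\}$ is minimal iff it is conflict-confining; if not, every such $\{\beta\}$ is minimal.

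You instead compare \Hyp{} directly with the optimum. For $\leq_c$ you count new conflicts of \Hyp{} and of $\{\alpha\}$ with logspace counters, and for $\subseteq_c$ you enumerate all polynomially many singleton competitors. This is more mechanical and avoids both locating $\beta$ inside \Hyp{} and the case analysis. The counting works only because conflicts have size at most $2$, so the counts are polynomial. You also owe one sentence: the witness $\beta'$ of an arbitrary competitor lies over the names of \calT, the concept name of $\alpha$, and the individuals of \calK{} and $\alpha$. This holds because a \calT-consistent assertion using any other concept or role name cannot \calT-entail $\alpha$. The paper's route gives more structural information, such as the dichotomy for $\subseteq_c$.

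One step is false as written and needs a patch. The claim that ``no conflict of $\tup{\calT,\calA\cup\Hyp}$ meets \Hyp'' fails when $\Hyp\cap\calA\neq\emptyset$, and so does the right-to-left direction of your displayed equivalence. Take $\calT=\{B\sqsubseteq A,\ B\sqsubseteq\neg G\}$, $\calA=\{B(a),G(a)\}$, $\alpha=A(a)$ and $\Hyp=\{B(a)\}$. Then \Hyp{} is conflict-confining and $B(a)$ \calT-entails $\alpha$, yet $\calA\cup\Hyp=\calA$ and $\calK\not\models_\ar\alpha$. For \brave{} the same problem appears: an assertion $\beta\in\Hyp\cap\calA$ that forms a singleton conflict of \calK{} entails $\alpha$ vacuously. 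Conflict-confinement does not exclude such a $\beta$, so your remark that it guarantees consistency of $\{\beta\}$ fails.

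The fix is immediate. Being an \calS-hypothesis, being conflict-confining and being $\preceq_c$-minimal all depend only on $\calA\cup\Hyp$, so you may replace \Hyp{} by $\Hyp\setminus\calA$, which is computable in logspace. Conflict-confinement then does imply that no assertion of $\Hyp\setminus\calA$ lies in any conflict, and your characterization with $\beta\in\Hyp\setminus\calA$ is correct. The paper's ``obvious'' direction of its \ar{} claim has the same blind spot.
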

\begin{proof}[Proof sketch]
  Hardness for all cases follows by reduction from directed reachability as in the proof of \cref{thm:dllite-verif-gen-card}, noting that $\Hyp_{\text{reach}}$ is conflict-confining for $\calK_{\text{reach}}$.
  To verify that \Hyp is a conflict-confining \brave-hypothesis, verify separately that it is a \brave-hypothesis and that it is conflict-confining, each possible in \NL.
  For the latter, use that conflicts are of size at most $2$, similar to \cref{thm:dllite-exist-cc}.
  For~\ar, first note that here, a hypothesis is $\preceq_c$-minimal iff it is conflict-confining by \cref{lem:ar-triv}, so we can focus on the conflict-confining case.
  We show that conflict-confining hypotheses have a simpler structure than general ones:
  together with the TBox, such a hypothesis already entails $\alpha$ classically, leading to an \NL-algorithm.
  For the case of $\preceq_c$-minimal \brave-hypotheses, we first look for a singleton hypothesis $\{\beta\} \subseteq \Hyp$, which must exist in any hypothesis by \cref{lem:dllite-singleton-hyps}.
  This simplifies the minimality-checks, yielding \NL-algorithms for both cases by carefully comparing conflicts for $\{\alpha\}$ and $\{\beta\}$ and using \cref{lem:dllite-singleton-hyps}.
\end{proof}

The remaining cases under \brave semantics behave similarly, with membership again following from \cref{lem:dllite-singleton-hyps}.
In case of existence of non-trivial hypotheses for some observation $A(a)$, one may be tempted to only consider the direct subsumees of $A$ at least for \brave semantics.
Notably, this is not sufficient, as demonstrated by the following example, which is also incorporated in the construction for hardness.
\begin{example}\label{ex:non-triv}
  Define
  \[\calT \dfn \{B \subsum \exists r, \exists r \subsum A, A \subsum \neg \exists r^-, C \sqsubseteq \neg C\},\]
  and let $\calK \coloneqq \tup{\calT, \{C(a)\}}$.
  Then $\tup{\calK, A(a)}$ is a \brave-abduction problem, as \calK is obviously inconsistent and $\calK \not\models_\brave A(a)$.
  The ABox $\calB \dfn \{r(a,a)\}$, based on the direct subsumee $\exists r$ of $A$, is not a \brave-hypothesis for $\tup{\calK, A(a)}$, as $\tup{\calT, \calB} \models \bot$.
	Still, there is such a hypothesis, namely $\Hyp \dfn \{B(a)\}$.

  Intuitively, not allowing fresh individuals in a hypothesis has the effect that we have to find a subsumee satisfiable with the limited number of individuals present in the KB.
  Note that the last TBox axiom and the assertion $C(a)$ in the ABox are only necessary to ensure inconsistency, and do not interfere with the rest of the construction.
\end{example}

\begin{restatable}{theorem}{dlliteExistBraveSigNontriv}\label{thm:dllite-exist-brave-sig-nontriv}
    The existence problems for \SignatureRestricted and for non-trivial \brave-hypotheses are \NL-complete.
\end{restatable}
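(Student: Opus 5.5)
By \cref{lem:dllite-singleton-hyps}, whenever a \brave-hypothesis exists, some single assertion $\beta$ taken from it is already a \brave-hypothesis. A signature-restricted hypothesis only uses symbols of $\Sigma$, and so does any subset of it; a non-trivial hypothesis does not contain $\alpha$, and neither does any $\beta \in \Hyp$. Both existence problems therefore reduce to the question whether there is a \emph{single} assertion $\beta$ with the required property (all symbols in $\Sigma$, resp.\ $\beta \neq \alpha$) such that $\{\beta\}$ is a \brave-hypothesis. The individuals in $\beta$ range over those of $\calK$ and $\alpha$, so the candidates $\beta$ are polynomially many, and each can be written down in logarithmic space.

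\emph{Membership.} We need a test for ``$\{\beta\}$ is a \brave-hypothesis'' that avoids a global repair computation. I claim this holds iff $\{\beta\}$ is \calT-consistent and $\tup{\calT,\{\beta\}} \models \alpha$, or $\calK \models_\brave \alpha$; the latter is excluded by the promise. For ``if'', a \calT-consistent $\{\beta\}$ extends to a repair of $\tup{\calT, \calA \cup \{\beta\}}$ that contains $\beta$. For ``only if'', let \calR be a repair with $\tup{\calT,\calR}\models\alpha$. Minimal \calT-supports in \DLLite have size $1$, so some single assertion of \calR entails $\alpha$. This assertion is either $\beta$, giving $\tup{\calT,\{\beta\}} \models \alpha$ with $\{\beta\}$ consistent as a subset of \calR, or an assertion $\gamma \in \calA$. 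In the second case $\{\gamma\}$ is \calT-consistent, hence contained in a repair of \calK, so $\calK \models_\brave \alpha$, a contradiction.

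The \NL algorithm thus guesses $\beta$ and checks its constraint, \calT-consistency of $\{\beta\}$, and $\tup{\calT,\{\beta\}}\models\alpha$. Each of these checks is classical \DLLite reasoning over a single assertion and lies in \NL (consistency lies in co\NL, which equals \NL); since \NL is closed under these combinations, the whole procedure is in \NL. The main subtlety is that $\beta$ may be a role assertion $r(a,b)$, including $r(a,a)$, as in \cref{ex:non-triv}. Satisfiability of $\{\beta\}$ therefore has to be checked with the limited individuals in mind. This is still a reachability-style check over the concepts $\exists r$, $\exists r^-$ and names derivable for $a$ and $b$, together with role inclusions and role disjointness, so I expect it to stay in \NL; this is the step where I would be most careful.

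\emph{Hardness.} I reduce from directed reachability. Take the reachability TBox of \cref{thm:dllite-verif-gen-card}, add a dummy inconsistency $C \sqsubseteq \neg C$ with $C(c)$ on a fresh individual $c$, and use observation $A_t(a)$. For the non-trivial case, we need a hypothesis other than $\{A_t(a)\}$ to exist iff $t$ is reachable from $s$. Following \cref{ex:non-triv}, I would make the only alternative subsumees of $A_t$ unusable unless the path from $A_s$ exists, for instance by adding $A_t \sqsubseteq \neg A_{v}$-style disjointness axioms as in \cref{lem:cc-compl} to block shortcuts. Simplest of all, I would ensure $s \neq t$ and that $A_s(a)$ is the only candidate besides the trivial one. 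For the signature-restricted case, I would set $\Sigma = \{A_s, a\}$, so that $\{A_s(a)\}$ is the only possible nonempty hypothesis; it is a \brave-hypothesis iff $s$ reaches $t$. Both reductions are clearly computable in logarithmic space.
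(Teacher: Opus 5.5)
Your membership argument and your hardness reduction for the signature-restricted case are correct and match the paper's: you reduce to singleton candidates via \cref{lem:dllite-singleton-hyps}, and you use $\Sigma=\{A_s,a\}$ on the reachability TBox. Your explicit test ``$\{\beta\}$ is \calT-consistent and $\tup{\calT,\{\beta\}}\models\alpha$'' validly replaces citing the complexity of \brave-entailment. The step you flag as delicate is not actually delicate. Consistency of $\{r(a,a)\}$ is ordinary \DLLite KB consistency, where models may use anonymous elements, and this is in \NL. The restriction on individuals only limits which candidates $\beta$ you enumerate.

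The genuine gap is \NL-hardness for non-trivial \brave-hypotheses. On the reachability TBox $\{A_v \sqsubseteq A_w \mid (v,w)\in E\}$, every $A_v(a)$ with $v\neq t$ and a path from $v$ to $t$ (e.g.\ any in-neighbour of $t$) is a non-trivial \brave-hypothesis. So existence does not encode $s$-$t$-reachability. Your proposed fix via disjointness axioms cannot work. By your own characterisation, the only way to exclude $\{A_v(a)\}$ while $A_v \sqsubseteq_{\calT} A_t$ holds is to make $A_v$ unsatisfiable. Unsatisfiability then propagates along $\sqsubseteq_{\calT}$, so $A_s$ becomes unsatisfiable whenever $s$ reaches $t$ via some vertex $v\notin\{s,t\}$ --- precisely the yes-instances that matter. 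Saying that $A_s(a)$ should be ``the only candidate besides the trivial one'' restates the goal rather than achieving it.

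The missing idea, which is the one behind \cref{ex:non-triv}, is to represent vertices by basic concepts $\exists r_u$ instead of concept names. Take $A_s \sqsubseteq \exists r_s$, $\exists r_u \sqsubseteq \exists r_v$ for $(u,v)\in E$, $\exists r_t \sqsubseteq A_t$, and $\exists r_u \sqsubseteq \neg\exists r_u^-$. Starting from $A_s(a)$, these concepts are realised by anonymous elements. However, hypotheses may not use fresh individuals and $a$ is the only individual, so the only assertion yielding $\exists r_u(a)$ is $r_u(a,a)$, which is inconsistent. Hence, for $s\neq t$, $A_s(a)$ is the only non-trivial assertion that can support $A_t(a)$.

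This also requires placing the dummy inconsistency on $a$ itself, e.g.\ $B_1 \sqsubseteq \neg B_2$ with $B_1(a),B_2(a)$. Your fresh individual $c$ would make $r_u(a,c)$ a consistent non-trivial hypothesis whenever $u$ reaches $t$, which breaks the reduction again.
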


\begin{restatable}{theorem}{dlliteVerifBraveSubset}\label{thm:dllite-verif-brave-subset}
    Verification of $\subseteq$-minimal \brave-hypotheses is \NL-complete.
\end{restatable}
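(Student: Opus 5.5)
We reduce $\subseteq$-minimality for \brave-hypotheses to a condition on singletons, using \cref{lem:dllite-singleton-hyps}. If \Hyp is a \brave-hypothesis for $\tup{\calK, \alpha}$, that lemma gives an assertion $\beta \in \Hyp$ with $\{\beta\}$ itself a \brave-hypothesis. So a \brave-hypothesis \Hyp is $\subseteq$-minimal iff it is a singleton. Indeed, if $|\Hyp| \geq 2$, then $\{\beta\} \subsetneq \Hyp$ witnesses non-minimality. Conversely, if $\Hyp = \{\beta\}$, the only strict subset is $\emptyset$, and $\emptyset$ is not a hypothesis because $\calK \not\models_\brave \alpha$ holds by the promise of the abduction problem. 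Hence verification amounts to two checks: that $|\Hyp| = 1$, which is trivial in logspace, and that \Hyp is a \brave-hypothesis, which is in \NL by \cref{thm:dllite-verif-gen-card}.

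For the \NL upper bound, it remains only to note that the singleton test is logspace. We then call the \NL procedure for \brave-entailment on $\tup{\calT, \calA \cup \Hyp}$ and $\alpha$. This procedure guesses a consistent support of size $1$, i.e.\ an assertion $\gamma \in \calA \cup \Hyp$ with $\tup{\calT,\{\gamma\}} \models \alpha$ and $\{\gamma\}$ \calT-consistent. It suffices because any consistent subset extends to a repair, and minimal supports in \DLLite have size $1$. Both conditions are \NL-checkable for \DLLite.

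For hardness, I would reuse the reduction from directed reachability in the proof of \cref{thm:dllite-verif-gen-card}. The candidate there is $\Hyp_{\text{reach}} = \{A_s(a)\}$, which is a singleton. Take the graph $G$ with $s,t$, the TBox $\calT$ encoding edges, and the KB $\calK_\text{reach}$ containing a dummy inconsistency on a fresh individual. If $s \neq t$, then $\alpha = A_t(a)$ is not \brave-entailed by $\calK_\text{reach}$, so the abduction problem is well-formed. Moreover, $\{A_s(a)\}$ is a \brave-hypothesis iff $t$ is reachable from $s$, and whenever it is a hypothesis it is automatically $\subseteq$-minimal by the characterization above. So the same reduction yields \NL-hardness.

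The main point needing care is the promise. We must make sure the emptyset is excluded as a hypothesis, which is exactly non-entailment of $\alpha$ in \calK, and that the reduction instance satisfies this promise. Both are immediate here, so I expect no real obstacle beyond invoking \cref{lem:dllite-singleton-hyps} correctly.
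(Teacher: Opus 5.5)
Your proposal is correct and follows the paper's route. The paper likewise uses \cref{lem:dllite-singleton-hyps} to show that $\subseteq$-minimal and $\leq$-minimal (i.e.\ singleton) \brave-hypotheses coincide, and then inherits both the \NL upper bound and the reachability-based hardness from \cref{thm:dllite-verif-gen-card}. Your side condition $s \neq t$ is unnecessary: the dummy inconsistency is placed on a fresh individual and nothing is asserted about $a$, so the promise also holds when $s = t$.
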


The remaining cases of existence of \ar-hypotheses would seem to require guessing a hypothesis and verifying using \ar-entailment, which would result in a \SigmaP-algorithm.
At first, this seems further evidenced by the non-convexity of \ar-hypotheses, seen in \cref{obs:ar-non-convex}:
For example, for \mbox{$\subseteq$-minimality}, it is not sufficient to check all direct subsets only removing one element.
Surprisingly, we can still get around this obstacle, due to the following property.

\begin{restatable}{lemma}{dlliteArHyps}\label{thm:dllite-ar-hyps}
  Let $\tup{\calK, \alpha}$ be an \ar-abduction problem, where $\calK = \tup{\calT, \calA}$ is a \DLLite KB, and let \calB be a set of assertions.
  There is an \ar-hypothesis $\Hyp \subseteq \calB$ for $\tup{\calK, \alpha}$ iff for each repair \calR of \calK, there is a \calT-support $\{\beta\} \subseteq \calA \cup \calB$ of $\alpha$ s.t.\ $\calR \cup \{\beta\}$ is \calT-consistent.
\end{restatable}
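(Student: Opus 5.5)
We prove the two directions separately. Throughout we use the \DLLite facts recalled in the preliminaries: conflicts have size at most $2$, and subset-minimal \calT-supports of a concept assertion are singletons. A subtle point is that the repairs in the right-hand side are repairs of \calK, whereas \ar-entailment for $\calA \cup \Hyp$ quantifies over repairs of $\tup{\calT, \calA \cup \Hyp}$. The key bridge is that every repair of $\tup{\calT,\calA\cup\Hyp}$ has the form $\calR' = \calR \cup \calH'$, where $\calR' \cap \calA$ is consistent and hence contained in some repair of \calK. Conversely, every repair \calR of \calK extends to a repair of $\tup{\calT,\calA\cup\Hyp}$.

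\emph{($\Rightarrow$).} Let $\Hyp\subseteq\calB$ be an \ar-hypothesis and let \calR be a repair of \calK. Extend \calR to a repair $\calR'\supseteq\calR$ of $\tup{\calT,\calA\cup\Hyp}$. Then $\tup{\calT,\calR'}\models\alpha$, so $\calR'$ contains a subset-minimal \calT-support of $\alpha$. By the \DLLite property this support is a singleton $\{\beta\}\subseteq\calR'\subseteq\calA\cup\calB$. Since $\calR\cup\{\beta\}\subseteq\calR'$ is \calT-consistent, we are done.

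\emph{($\Leftarrow$).} We take $\Hyp \coloneqq \{\beta\in\calB \mid \{\beta\}$ is a \calT-support of $\alpha$ and $\{\beta\}$ is \calT-consistent$\}$. This set is a subset of \calB. Now let $\calR'$ be an arbitrary repair of $\tup{\calT,\calA\cup\Hyp}$; we must show $\tup{\calT,\calR'}\models\alpha$. The set $\calR'\cap\calA$ is consistent, so it is contained in some repair \calR of \calK. By assumption there is $\beta\in\calA\cup\calB$ with $\{\beta\}$ a support of $\alpha$ and $\calR\cup\{\beta\}$ consistent.

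\emph{Case $\beta\in\calA$.} By maximality of \calR we get $\beta\in\calR$. It remains to argue $\beta\in\calR'$. Suppose not. By maximality of $\calR'$, the set $\calR'\cup\{\beta\}$ contains a conflict involving $\beta$. Conflicts have size $\le 2$, so this is $\{\beta\}$ itself (impossible, since $\calR \cup \{\beta\}$ is consistent) or $\{\beta,\gamma\}$ with $\gamma\in\calR'$. If $\gamma\in\calA$, then $\gamma\in\calR$, contradicting consistency of $\calR \cup \{\beta\}$. So $\gamma\in\Hyp$. This is the problematic case: a hypothesis assertion may push out $\beta$. However, $\gamma$ is itself a support of $\alpha$ contained in $\calR'$, so $\tup{\calT,\calR'}\models\alpha$ anyway.

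\emph{Case $\beta\in\calB\setminus\calA$.} Then $\beta\in\Hyp$. If $\beta\in\calR'$ we are done. Otherwise, as above, there is $\gamma\in\calR'$ with $\{\beta,\gamma\}$ a conflict. The case $\gamma\in\calA$ again contradicts consistency of $\calR\cup\{\beta\}$, and the case $\gamma\in\Hyp$ gives a support of $\alpha$ in $\calR'$.

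The main obstacle is exactly this interaction between the chosen supports and other hypothesis assertions. It is resolved by including in \Hyp only assertions that are themselves singleton supports of $\alpha$: any conflict partner from \Hyp then restores the entailment. This uses crucially that conflicts are binary. Without that, a conflict could mix several assertions of \calA and \Hyp, and the case analysis would break, consistent with the non-convexity in \cref{obs:ar-non-convex}.
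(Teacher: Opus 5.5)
Your proof is correct. The ($\Rightarrow$) direction matches the paper's, but the ($\Leftarrow$) direction takes a different route.

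\textbf{Choice of hypothesis.} The paper fixes one witness $\beta_i$ for each repair $\mathcal{R}_i$ of $\mathcal{K}$ and takes $\{\beta_1,\dots,\beta_m\}\setminus\mathcal{A}$. You take the canonical, repair-independent set of all $\mathcal{T}$-consistent singleton supports in $\mathcal{B}$.

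\textbf{Verification.} For a repair $\mathcal{R}'$ of the extended KB, the paper uses a two-way split. Either $\mathcal{R}'$ contains one of the added supports and $\alpha$ is entailed. Or $\mathcal{R}'\subseteq\mathcal{A}$, so $\mathcal{R}'$ is itself a repair of $\mathcal{K}$, and its own witness contradicts maximality. You instead project $\mathcal{R}'$ onto $\mathcal{A}$, enlarge it to a repair $\mathcal{R}$ of $\mathcal{K}$, and do a case analysis on the conflict that expels $\beta$.

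That two-way split also works verbatim for your $\mathcal{H}$. If $\mathcal{R}'\cap\mathcal{H}=\emptyset$, then $\mathcal{R}'$ is a repair of $\mathcal{K}$. Its witness lies either in $\mathcal{A}$, hence in $\mathcal{R}'$ by maximality, or in $\mathcal{H}$, contradicting maximality. So the bound on conflict size is not needed anywhere. Your closing claim that binary conflicts are crucial, and the link you draw to non-convexity, are therefore inaccurate. The only DL-Lite-specific ingredient is that minimal supports are singletons, which is what makes every added assertion a stand-alone support of $\alpha$.

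\textbf{Trade-off.} Your choice gives a canonical witness: a hypothesis inside $\mathcal{B}$ exists iff the set of all consistent singleton supports in $\mathcal{B}$ is one. The paper's choice gives a shorter argument that never reasons about conflicts at all.
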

\begin{proof}[Proof sketch]
  ($\Leftarrow$) Let $\calR_1, \dots, \calR_m$ be the repairs of \calK and $\beta_1, \dots, \beta_m$ the \calT-supports of $\alpha$, where $\calR_i \cup \{\beta_i\}$ is \calT-consistent.
  Each repair \calR of $\tup{\calT, \calA \cup \{\beta_1, \dots, \beta_m\}}$ must contain at least one of the $\beta_i$, and hence a \calT-support of $\alpha$:
  Otherwise, $\calR \subseteq \calA$ and $\calR \cup \{\beta_j\}$ must be \calT-consistent for some $j$, leading to contradiction.

  ($\Rightarrow$) Any repair \calR of \calK is contained in some repair $\calR'$ of $\tup{\calT, \calA \cup \Hyp}$.
  As \Hyp is an \ar-hypothesis for $\alpha$, there is some \calT-support $\{\beta\} \subseteq \calR'$ of $\alpha$.
  As $\calR \cup \{\beta\} \subseteq \calR'$ and $\calR'$ is a repair, $\calR \cup \{\beta\}$ is \calT-consistent.
\end{proof}

We use this to find hypotheses $\Hyp \subseteq \calB$ in \coNP.
This resolves the remaining cases, starting with the remaining existence problems.

\begin{restatable}{theorem}{dlliteExistArSigNontriv}\label{thm:dllite-exist-ar-sig-nontriv}
  The existence problems for \SignatureRestricted and for non-trivial \ar-hypotheses are \coNP-complete.
\end{restatable}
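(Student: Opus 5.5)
For membership we instantiate \cref{thm:dllite-ar-hyps} with a fixed polynomial-size candidate set $\calB$. For the \SignatureRestricted case, let $\calB_\Sigma$ be the set of all assertions over symbols in $\Sigma$ using individuals of $\calK$ and $\alpha$. Since hypotheses are arbitrary ABoxes over $\Sigma$, a hypothesis exists iff one exists as a subset of $\calB_\Sigma$. This set is polynomial, because \DLLite assertions are unary or binary and the individuals are fixed. For the non-trivial case, take $\calB$ to be all assertions over the signature of $\calK$ and $\alpha$, minus $\alpha$ itself. Here we first argue that fresh concept and role names never help, since they cannot participate in supports or conflicts except trivially; so we may restrict to $\calK$'s signature. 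By \cref{thm:dllite-ar-hyps}, existence then reduces to the following condition: for every repair $\calR$ of $\calK$ there is $\beta \in \calA \cup \calB$ with $\tup{\calT,\{\beta\}}\models\alpha$ and $\calR\cup\{\beta\}$ $\calT$-consistent. The complement is in \NP. We guess a subset $\calR\subseteq\calA$ and check in polynomial time that it is a repair, which needs consistency plus maximality and uses conflicts of size $2$. We then check, for every $\beta$ in the polynomial set $\calA\cup\calB$, that $\beta$ is not a $\calT$-support of $\alpha$ or that $\calR\cup\{\beta\}$ is inconsistent. Each of these checks is a polynomial \DLLite reasoning task, which gives \coNP\ membership.

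For hardness we reuse the unsatisfiability reduction $\calK_{\text{unsat}}$ from \cref{thm:dllite-verif-gen-card}. For the \SignatureRestricted variant, take $\Sigma$ to contain only $U$ and the individuals $a,c_1,\dots,c_k$, so $A\notin\Sigma$ and the trivial hypothesis is excluded. The point to verify is the claim that an \ar-hypothesis over $\Sigma$ exists iff $\Hyp=\{U(a,c_j)\mid j\le k\}$ is one. The backward direction is immediate. For the forward direction we use \cref{thm:dllite-ar-hyps}: every $\calT$-support of $A(a)$ over $\Sigma$ has the form $U(a,x)$, and consistency with a repair $\calR$ requires that $x$ has no $P$/$N$-successor in $\calR$. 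An assertion $U(a,x)$ with $x$ not a clause node is always consistent with every repair. To block this, we add $\calT$ axioms and ABox facts so that every individual other than clause nodes gets an unremovable $P$-edge, for example a consistent fact $P(x,d)$ for a fresh $d$ with no $N$-edges into $d$. After that, only clause nodes $c_j$ remain relevant, and the condition says that every repair leaves some clause node with no surviving edge. This is exactly the encoding of unsatisfiability: repairs correspond to choosing a polarity for each variable, and a clause loses all its edges iff it is falsified.

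For non-trivial hypotheses the same instance works once we ensure that all non-trivial supports of $A(a)$ are $U$-assertions, which holds since $A$'s only subsumee is $\exists U$. We must additionally make sure that $A(a)$ has no other supports such as $U(a,a)$ that are always consistent. This is handled by the same padding as above: give $a$ itself an unremovable $P$-edge and make $\exists P\sqsubseteq\neg\exists U^-$ apply to it. We also need to preserve inconsistency of $\calK$ and non-entailment of $\alpha$, which hold whenever $\varphi$ has at least one variable in both polarities, or after adding a dummy conflict.

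I expect the main obstacle to be this padding step: excluding spurious singleton supports $U(a,x)$ on non-clause individuals, including $a$ and the variable nodes, without disturbing the correspondence between repairs and assignments. This is where the details of the construction must be checked carefully.
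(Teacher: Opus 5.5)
Your membership argument is the paper's: \cref{thm:dllite-ar-hyps} combined with a universal guess of a repair, as in Algorithm~1. Your signature-restricted reduction also uses the paper's construction with the same $\Sigma = \{U, a, c_1, \dots, c_k\}$. The only difference there is how you rule out the spurious support $U(a,a)$. You add an unremovable edge $P(a,d)$, so that $\exists P \sqsubseteq \neg \exists U^-$ fires; the paper adds the axiom $\exists U \sqsubseteq \neg \exists U^-$. Both work, because $d$ and the variable nodes lie outside $\Sigma$.

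The gap is in the non-trivial case, exactly in the padding step you flagged. Without a signature restriction, a hypothesis may use every individual of $\calK$, and that includes your fresh padding individual $d$. Take your example padding: $P(x,d)$ for $x \in \{a, x_1, \dots, x_n\}$, with no edges leaving $d$. Then $U(a,d)$ is a non-trivial $\calT$-support of $A(a)$. It could only be refuted by an outgoing $P$- or $N$-edge at $d$, and there is none. So $U(a,d)$ is consistent with every repair, and by \cref{thm:dllite-ar-hyps} the set $\{U(a,d)\}$ is a non-trivial \ar-hypothesis for every $\varphi$. The reduction therefore maps every formula to a yes-instance.

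Your principle \enquote{every non-clause individual gets an unremovable $P$-edge} must also cover the padding individuals, and a chain of fresh individuals never closes. Two fixes work:
\begin{itemize}
\item Use a single shared $d$ with a self-loop $P(d,d)$. This is harmless: no $N$-edge enters $d$, and the padding edges $P(x_i,d)$ only put $x_i$ into $\exists P$. They do not touch the $\exists P^- \sqsubseteq \neg \exists N^-$ conflicts at the variable nodes.
\item Add an axiom $D \sqsubseteq \exists P$ together with facts $D(a)$ and $D(x_i)$. This introduces no new individuals at all.
\end{itemize}
The same caveat applies to your optional dummy conflict. Placing it on a fresh individual $b$ reintroduces an always-consistent support $U(a,b)$. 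Put it on an already padded individual instead, or assume that some variable of $\varphi$ occurs in both polarities.

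With this repair your UNSAT reduction goes through and gives one uniform construction for both variants. The paper takes a different route for the non-trivial case. It reduces from validity of a $\forall$-QBF in DNF, using a role-free KB over the single individual $a$. There, repairs of $\{T_{x_i}(a), F_{x_i}(a)\}$ encode assignments, and the only non-trivial supports of the observation are the term assertions $C_j(a)$. This sidesteps spurious supports altogether.
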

\begin{proof}[Proof sketch]
  The following is a \coNP-algorithm for existence of \SignatureRestricted \ar-hypotheses by \cref{thm:dllite-ar-hyps}:
  Given a \SignatureRestricted \ar-abduction problem $\tup{\calK, \alpha, \Sigma}$, let \calB be the set of assertions over $\Sigma$.
  Guess a repair of \calK and check for a \calT-support $\{\beta\} \subseteq \calA \cup \calB$ s.t.\ $\calR \cup \{\beta\}$ is \calT-consistent.
  Both checks are in $\NL \subseteq \Ptime$.
  The non-trivial case can be handled with the same algorithm, only changing \calB to be the set of assertions over the signature of $\tup{\calK, \alpha}$, except for $\alpha$ itself.

  Hardness for existence of \SignatureRestricted \ar-hypotheses is shown by adapting the construction used in \cref{thm:dllite-verif-gen-card}: 
  we choose the signature $\Sigma \coloneqq \{\, U, a, c_j \mid j \leq k \,\}$ and prevent the only remaining unintended hypothesis $\{U(a,a)\}$ by adding the axiom $\exists U \sqsubseteq \neg \exists U^-$.

  To show hardness for non-trivial \ar-hypotheses, we instead reduce from checking whether a given $\forall$-QBF $\forall X \varphi(X)$ is true, where $\varphi = \{C_1, \dots, C_k\}$ is in DNF.
  We encode $\varphi$ into the KB $\calK = \tup{\calT, \calA}$ as follows:
  \begin{align*}
    \calT \coloneqq {} &\{\, C_j \sqsubseteq A_\varphi \mid 1 \leq j \leq m \,\} \cup \{\, T_{x_i} \sqsubseteq \neg C_j \mid \neg x_i \in C_j \,\} \cup \{\, F_{x_i} \sqsubseteq \neg C_j \mid x_i \in C_j \,\} \cup {} \\
    & \{\, T_{x_i} \sqsubseteq \neg F_{x_i} \mid 1 \leq i \leq n \,\} \text{ and} \\
    \calA \coloneqq {} & \{\, T_{x_i}(a), F_{x_i}(a) \mid 1 \leq i \leq n \,\}.
  \end{align*}
  Here, \calT encodes that $\varphi$ is true, if at least one clause $C_j$ is true, and a clause $C_j$ is falsified when at least one of its literals is falsified.
  It remains to show that $\forall X \varphi(X)$ is true iff there is a non-trivial \ar-hypothesis for $\tup{\calK, A_\varphi(a)}$.
  For this, note that the repairs of \calK correspond to assignments over $X$, and that the only non-trivial assertions that can help entailment of $A(a)$ are assertions of the form $C_j(a)$.
\end{proof}

Even though verification of $\subseteq$-minimal \ar-hypotheses has higher complexity than the previous cases, the upper bound again relies on \cref{thm:dllite-ar-hyps}.
  
\begin{restatable}{theorem}{dlliteVerifArSubset}\label{thm:dllite-verif-ar-subset}
  Verification of $\subseteq$-minimal \ar-hypotheses is \DP-complete. 
\end{restatable}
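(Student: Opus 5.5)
For membership in \DP, we split the verification of $\subseteq$-minimality into two independent checks. The first is the \coNP-check that \Hyp is an \ar-hypothesis, which is possible by \cref{thm:dllite-verif-gen-card}. The second is the check that no strict subset of \Hyp is an \ar-hypothesis. \Cref{obs:ar-non-convex} shows that we cannot simply test all subsets $\Hyp \setminus \{\gamma\}$. Instead, we use \cref{thm:dllite-ar-hyps} with $\calB \coloneqq \Hyp \setminus \{\gamma\}$ for each $\gamma \in \Hyp$.

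Every strict subset of \Hyp is contained in some $\Hyp \setminus \{\gamma\}$. So some strict subset is an \ar-hypothesis iff, for some $\gamma \in \Hyp$, the condition of \cref{thm:dllite-ar-hyps} holds for $\calB = \Hyp \setminus \{\gamma\}$. Note that the lemma asks for a hypothesis \emph{contained in} $\calB$, which is exactly what makes the global check local here.

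The complement of minimality is therefore:
\begin{quote}
there exists $\gamma \in \Hyp$ such that for every repair \calR of \calK there is a \calT-support $\{\beta\} \subseteq \calA \cup (\Hyp \setminus \{\gamma\})$ of $\alpha$ with $\calR \cup \{\beta\}$ \calT-consistent.
\end{quote}
A priori this has the shape $\exists\forall$. Since there are only polynomially many $\gamma$, the existential quantifier is a disjunction of polynomially many \coNP-conditions, which is again in \coNP. Minimality is then the negation of this, namely an \NP-condition. Concretely, the \NP-algorithm guesses, for every $\gamma \in \Hyp$, a repair $\calR_\gamma$ of \calK. It then verifies in polynomial time that $\calR_\gamma$ is a repair, using that consistency checking and maximality checking are in \NL. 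Finally, it verifies that no $\beta \in \calA \cup (\Hyp \setminus \{\gamma\})$ that is a \calT-support of $\alpha$ is \calT-consistent with $\calR_\gamma$. Combining the \coNP-part and the \NP-part yields \DP.

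For \DP-hardness, we reduce from SAT-UNSAT: given CNFs $\varphi$ and $\psi$, decide whether $\varphi$ is satisfiable and $\psi$ is unsatisfiable. We take two disjoint copies of the construction $\calK_{\text{unsat}}$ from \cref{thm:dllite-verif-gen-card}, using fresh names and individuals $c_j$ and $d_l$ for the clauses of $\varphi$ and $\psi$ respectively. Both copies share the observation $A(a)$, but the roles are renamed, $U_\varphi$ and $U_\psi$, with axioms $\exists U_\varphi \sqsubseteq A$ and $\exists U_\psi \sqsubseteq A$. The candidate hypothesis is $\Hyp \coloneqq \{U_\varphi(a,c_j)\} \cup \{U_\psi(a,d_l)\}$.

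The idea is to make \Hyp an \ar-hypothesis in any case, and to let minimality fail exactly when a proper part already suffices. That happens when the $\psi$-part alone works, which requires $\psi$ to be unsatisfiable; we want the opposite behaviour, so the roles have to be arranged accordingly. Concretely, we put $\psi$ in the normal form from \cref{thm:dllite-verif-gen-card}, so that dropping clauses from $\psi$ yields satisfiable formulas. We then want \Hyp to be an \ar-hypothesis iff $\psi$ is unsatisfiable, and to be minimal iff additionally no subset of it works, which we aim to tie to satisfiability of $\varphi$. The intended mechanism is the following: any proper subset omits some $U_\psi(a,d_l)$, so its $\psi$-part fails, and we want its $\varphi$-part to succeed iff $\varphi$ is unsatisfiable. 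To make this work, we would use a single shared observation where both parts contribute conjunctively via an intermediate concept, or else place the $\varphi$-gadget in the ABox $\calA$ in such a way that satisfiability of $\varphi$ blocks alternative supports.

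Getting this gadget exactly right is the main obstacle. First try: add the axiom $\exists U_\varphi \sqsubseteq A$, put $\{U_\varphi(a,c_j)\}$ inside \Hyp, and pad $\varphi$ so that it is never made unsatisfiable by subsets. If the naive combination fails, we fall back to a generic argument: the problem is \NP-hard (minimality reduces from SAT via the same gadget) and \coNP-hard, and for \DLLite the disjoint union of two instances over distinct individuals yields the intersection, provided that the repairs factorise over the two components.
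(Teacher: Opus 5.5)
Your \DP-membership argument is correct and is essentially the paper's. You apply \cref{thm:dllite-ar-hyps} with $\calB = \Hyp\setminus\{\gamma\}$ for each $\gamma\in\Hyp$, so non-minimality is a polynomial disjunction of \coNP-conditions, minimality is in \NP, and together with the \coNP\ entailment check this gives \DP.

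The hardness part has a genuine gap: you never give a working construction, and both the sketched gadget and the fallback fail for a structural reason. In \DLLite there is no conjunction on the left of inclusions, and minimal \calT-supports of $A(a)$ are singletons. So ``both parts contribute conjunctively via an intermediate concept'' is not expressible. Consider two copies of the unsat gadget over disjoint individuals, with $\exists U_\varphi \sqsubseteq A$ and $\exists U_\psi \sqsubseteq A$. All conflicts are component-local, so every repair is a union $\calR_1\cup\calR_2$ of component repairs. Such a repair entails $A(a)$ iff one of its components contains a $U$-assertion. Hence $\tup{\calT,\calA\cup\Hyp_1\cup\Hyp_2}\models_\ar A(a)$ iff the selected $\varphi$-clauses are unsatisfiable or the selected $\psi$-clauses are. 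The disjoint union therefore realises a \emph{disjunction}, not an intersection. Worse, whenever $\Hyp_1,\Hyp_2$ are both non-empty and $\Hyp_1\cup\Hyp_2$ is a hypothesis, one of $\Hyp_1,\Hyp_2$ is already a strictly smaller hypothesis, so the union is never $\subseteq$-minimal. For the same reason, separate \NP- and \coNP-hardness give you nothing without an AND-construction. A smaller error: the normal form from \cref{thm:dllite-verif-gen-card} only guarantees that dropping the \emph{last} clause yields a satisfiable formula, not that dropping arbitrary clauses does.

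The missing idea is that no combination of two instances is needed. Reduce instead from recognising a minimal unsatisfiable subset (MUS) of a CNF, which is itself \DP-hard. Use the single gadget of \cref{thm:dllite-verif-gen-card}, keeping only the $P$/$N$-assertions in the ABox, and map $\psi\subseteq\varphi$ to $\Hyp_\psi=\{U(a,c_j)\mid c_j\in\psi\}$. By the Bienvenu et al.\ argument, for every $\psi'\subseteq\varphi$ the ABox $\Hyp_{\psi'}$ is an \ar-hypothesis iff $\psi'$ is unsatisfiable. Every strict subset of $\Hyp_\psi$ has the form $\Hyp_{\psi'}$ with $\psi'\subsetneq\psi$. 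Hence $\Hyp_\psi$ is a $\subseteq$-minimal \ar-hypothesis iff $\psi$ is a MUS of $\varphi$. Routing through SAT-UNSAT instead would require reproducing the nontrivial SAT-UNSAT-to-MUS reduction.
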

\begin{proof}[Proof sketch]
  To verify a $\subseteq$-minimal \ar-hypothesis $\Hyp = \{\beta_1, \dots, \beta_n\}$, one needs to check that it is an \ar-hypothesis in \coNP, and that it is $\subseteq$-minimal.
  For the latter, define the direct subsets $\calB_i \coloneqq \Hyp \setminus \{\beta_i\}$, noting that \Hyp is not $\subseteq$-minimal iff there is any \ar-hypothesis $\Hyp' \subseteq \calB_i$ for some $i$.
  Hence, we can check non-minimality in \coNP using $n$ subsequent runs of the algorithm from the membership proof of \cref{thm:dllite-exist-ar-sig-nontriv}, replacing $\calB$ by the different $\calB_i$.
  This yields an \NP-algorithm for $\subseteq$-minimality, and \DP-membership in total.
  For hardness, we reuse the construction in the proof of~\cref{thm:dllite-verif-gen-card}.
  Specifically, we reduce from the known \DP-hard problem of checking whether a set of clauses $\psi \subseteq \varphi$ is a \emph{minimal unsatisfiable subset} (MUS) of $\varphi$~\cite{Liberatore05}.
\end{proof}

\section{The Case of \ELbot} 
\label{sec:elbot}

For complexity in \ELbot, we begin with cases that behave analogous to the corresponding cases for \DLLite.
Verification of general hypotheses is again essentially equivalent to entailment for both semantics.
Moreover,
\cref{prop:brave-triv} and \cref{lem:ar-triv} allow us to extend results to the $\leq$-minimal case and to handle existence of general hypotheses.

\begin{restatable}{theorem}{elbotVerifGeneral}\label{thm:elbot-verif-general}
  Verification of general \calS-hypotheses as well as $\leq$-minimal \calS-hypotheses is
  \begin{enumerate*}[label=(\arabic*)]
    \item \NP-complete for $\calS = \brave$, and
    \item \coNP-complete for $\calS = \ar$.
  \end{enumerate*}
\end{restatable}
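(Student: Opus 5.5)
Upper bounds come from \calS-entailment; lower bounds come from reductions that produce singleton candidate hypotheses, mirroring \cref{thm:dllite-verif-gen-card}.

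\emph{Membership.} Verifying that \Hyp is an \calS-hypothesis amounts to deciding $\tup{\calT, \calA \cup \Hyp} \models_\calS \alpha$. We also check that \Hyp only uses individuals from \calK and $\alpha$, which is trivial. By the known combined complexity of entailment of concept assertions in \ELbot~\cite{BienvenuB16}, this is in \NP for \brave and in \coNP for \ar. For $\leq$-minimality we use \cref{prop:brave-triv} and \cref{lem:ar-triv}. Whenever any \calS-hypothesis exists, the singleton $\{\alpha\}$ is one. Hence a hypothesis is $\leq$-minimal iff it is a hypothesis of size $1$, since the empty set is never a hypothesis because $\calK \not\models_\calS \alpha$. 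The additional test $|\Hyp| = 1$ is trivial, so membership carries over unchanged.

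\emph{Hardness for \brave.} We reduce from \brave-entailment of concept assertions in \ELbot, which is \NP-hard. Given an instance $\tup{\tup{\calT,\calA}, A(a)}$, we may first add a dummy inconsistency, e.g.\ $D \sqsubseteq \bot$ and $D(d)$ with fresh $D, d$, so that the KB is inconsistent without changing repairs relevant to $A(a)$. We then use a fresh concept name $B$ and an observation $O(a)$, adding $A \sqcap B \sqsubseteq O$ to the TBox, and take $\Hyp \coloneqq \{B(a)\}$. Since $O$ and $B$ are fresh and involved in no conflicts, $B(a)$ lies in every repair of the extended KB, and $O(a)$ is not \brave-entailed before adding \Hyp. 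It follows that \Hyp is a \brave-hypothesis iff $\tup{\calT,\calA}\models_\brave A(a)$. As $|\Hyp| = 1$, this simultaneously gives hardness for the $\leq$-minimal case.

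\emph{Hardness for \ar.} Hardness is inherited directly from \cref{thm:dllite-verif-gen-card}. The \DLLiteCore construction there, including the normal-form variant with the singleton $\Hyp' = \{U(a,c)\}$, can be translated into \ELbot. We write $\exists P^- \sqsubseteq \neg \exists N^-$ as $\exists P.\top$-style axioms on the other side: conflicts of the form $P(c,x), N(c',x)$ require inverse roles. Alternatively, we use the same fresh-name trick as above with \ar-entailment, which is \coNP-hard for \ELbot. The latter is cleaner: with $A \sqcap B \sqsubseteq O$ and $\Hyp = \{B(a)\}$, the problem again reduces to \ar-entailment of $A(a)$, and \Hyp is a singleton.

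The only delicate step is ensuring that the added fresh assertion neither creates conflicts nor alters the repairs. This holds because $B$ and $O$ occur only in the left- and right-hand sides of one non-$\bot$ axiom.
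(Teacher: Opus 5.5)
Your proof is correct and is essentially the paper's: membership via $\calS$-entailment plus the size-$1$ check from \cref{prop:brave-triv} and \cref{lem:ar-triv}, and hardness by the fresh-name reduction from $\calS$-entailment. The paper uses $C \sqcap B \sqsubseteq A$ with $\Hyp=\{B(a)\}$, which is your $A \sqcap B \sqsubseteq O$ up to renaming, and your dummy inconsistency is a harmless extra. Drop the first suggested route for \ar, though: the construction of \cref{thm:dllite-verif-gen-card} needs inverse roles (e.g.\ $\exists P^- \sqsubseteq \neg\exists N^-$), which \ELbot{} cannot express, so hardness is not inherited directly. \coNP-hardness there rests only on your second argument, which uses \coNP-hardness of \ar-entailment in \ELbot.
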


\begin{restatable}{theorem}{elbotExistGeneral}\label{thm:elbot-exist-general}
  The existence problem for general \calS-hypotheses is
  \begin{enumerate*}[label=(\arabic*)]
    \item \Ptime-complete for $\calS = \brave$, and
    \item \coNP-complete for $\calS = \ar$.
  \end{enumerate*}
\end{restatable}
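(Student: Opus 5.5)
By \cref{prop:brave-triv}, a \brave-hypothesis exists iff $\{\alpha\}$ is one, and \cref{lem:ar-triv} does the same job for \ar. So in both cases the existence question reduces to a question about the single ABox $\{\alpha\}$, and I will analyse each reduced problem separately.

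\emph{\brave, membership.} The proof of \cref{prop:brave-triv} shows that $\{\alpha\}$ is a \brave-hypothesis exactly when $\{\alpha\}$ is \calT-consistent. Indeed, any consistent assertion extends to a repair of $\tup{\calT, \calA \cup \{\alpha\}}$, and that repair entails $\alpha$. For $\alpha = A(a)$, this amounts to checking that $A$ is satisfiable w.r.t.\ \calT, which is classical \ELbot reasoning and can be done in polynomial time. This yields membership in \Ptime.

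\emph{\brave, hardness.} I would reduce from \Ptime-hard \ELbot reasoning, namely concept unsatisfiability (equivalently, subsumption $A \sqsubseteq B$, via $B \sqcap B' \sqsubseteq \bot$ with a fresh $B'$ and the query concept $A \sqcap B'$, suitably normalised). Given an instance, I make the KB inconsistent with a dummy conflict, say the axiom $C \sqsubseteq \bot$ together with the assertion $C(b)$ for fresh $C$ and $b$. I must also guarantee the promise $\calK \not\models_\brave \alpha$. This holds because the only repair is $\emptyset$, and an empty ABox does not entail $A(a)$. The resulting instance then has a \brave-hypothesis iff $A$ is satisfiable w.r.t.\ \calT. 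Since \Ptime is closed under complement, \Ptime-hardness of satisfiability transfers to this problem. This hardness step is the one I expect to need the most care: the source problem must be stated in \Ptime-hard form under logspace reductions, and the encoding has to respect the promise.

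\emph{\ar, membership.} By \cref{lem:ar-triv}, existence is equivalent to $\{\alpha\}$ being conflict-confining for \calK. By \cref{lem:cc-compl}(2), this property can be checked in \coNP for \ELbot.

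\emph{\ar, hardness.} The \coNP lower bound in \cref{lem:cc-compl} was obtained by reducing from \brave non-entailment. I would make this explicit for the trivial hypothesis. Take a \brave-entailment instance $\tup{\tup{\calT, \calA}, B(a)}$, which is \NP-hard for \ELbot. Add the axiom $B \sqcap A' \sqsubseteq \bot$ with a fresh concept name $A'$, and set the observation to $\alpha = A'(a)$. Then $\{\alpha\}$ is conflict-confining iff no repair $\calR$ of \calK makes $\calR \cup \{A'(a)\}$ inconsistent, which holds iff $\calK \not\models_\brave B(a)$.

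To make this a valid instance I must ensure that \calK is inconsistent, which I do by again adding a dummy conflict if needed. I must also ensure the promise $\calK \not\models_\ar A'(a)$. This holds because $A'$ is fresh and is not entailed by any consistent subset of \calA.
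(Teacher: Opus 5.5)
Your proposal is correct and follows the paper's proof. Both arguments reduce existence to the trivial hypothesis via \cref{prop:brave-triv} and \cref{lem:ar-triv}. For \brave they then use \Ptime-completeness of concept satisfiability in \ELbot, and for \ar they use \cref{lem:cc-compl}, whose hardness reduction from \brave non-entailment (adding $B \sqcap A' \sqsubseteq \bot$ for a fresh $A'$) already works with a singleton ABox. Your version only spells out what the paper leaves implicit: the dummy inconsistency and the check that the promise $\calK \not\models_\calS \alpha$ holds. For \brave this check is cleanest with a fresh query concept name, since otherwise $\calT \models \top \sqsubseteq A$ could make the empty repair entail $A(a)$.
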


We next address each property of hypotheses in turn. 
An appropriate \SignatureRestriction can disallow the trivial hypothesis, so that for deciding existence in this setting, reasoning for only the trivial hypothesis is not sufficient.
This interestingly leads to existence of \brave-hypothesis still enjoying the same complexity as \brave-entailment, while the complexity for \ar semantics jumps to the second level of the polynomial hierarchy.
\begin{restatable}{theorem}{elbotExistSig}\label{thm:elbot-exist-sig}
  The existence problem for \SignatureRestricted \calS-hypotheses is
  \begin{enumerate*}[label=(\arabic*)]
    \item \NP-complete for $\calS = \brave$, and
    \item \SigmaP-complete for $\calS = \ar$.
  \end{enumerate*}
\end{restatable}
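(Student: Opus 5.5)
We treat the four bounds separately: membership by a guess-and-check on top of the known entailment complexities, and hardness by adapting reductions from earlier in the paper.

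\emph{Membership for \brave.} A signature-restricted \brave-hypothesis may be large, but we only need a polynomial witness. If some \Hyp over $\Sigma$ works, there is a repair $\calR'$ of $\tup{\calT,\calA\cup\Hyp}$ with $\tup{\calT,\calR'}\models\alpha$. Hence $\calR'$ contains a minimal \calT-consistent support $\calS\subseteq\calA\cup\Hyp$ of $\alpha$. Then $\calS\cap\Hyp$ is itself a \brave-hypothesis: $\calS$ is consistent, so it extends to a repair of $\tup{\calT,\calA\cup(\calS\cap\Hyp)}$, and that repair entails $\alpha$. Moreover, $\calS\cap\Hyp$ is polynomially bounded, since it only uses individuals of the input and names from $\Sigma$. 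So the \NP\ algorithm guesses $\calS_A\subseteq\calA$ and $\calS_H$ over $\Sigma$ and checks in polynomial time (\ELbot\ classical reasoning) that $\calS_A\cup\calS_H$ is \calT-consistent and entails $\alpha$.

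\emph{Membership for \ar.} We guess \Hyp over $\Sigma$ (polynomial size, as above) and verify \ar-entailment with a \coNP\ oracle, which gives \SigmaP.

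\emph{Hardness for \brave.} We reduce from SAT. For variables $x_i$, let $\Sigma$ contain concept names $T_i,F_i$ and the individual $a$, with $T_i\sqcap F_i\sqsubseteq\bot$. For each clause $c_j$, add axioms $T_i\sqsubseteq C_j$ when $x_i\in c_j$ and $F_i\sqsubseteq C_j$ when $\neg x_i\in c_j$, plus $C_1\sqcap\dots\sqcap C_k\sqsubseteq A$. We add a dummy inconsistency on a separate individual to make \calK\ inconsistent, and exclude $A$ and the $C_j$ from $\Sigma$. A consistent hypothesis entailing $A(a)$ then corresponds to a partial assignment satisfying every clause. Since hypotheses may add arbitrary assertions, consistency of \Hyp\ is what forces a genuine assignment, so \brave\ semantics coincide here with classical consistency plus entailment.

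\emph{Hardness for \ar.} This is the main obstacle. We reduce from $\exists\forall$-QBF $\exists Y\forall X\,\varphi$ with $\varphi$ in DNF. The universal part reuses the construction of \cref{thm:dllite-exist-ar-sig-nontriv}: the ABox contains $T_{x}(a),F_{x}(a)$ with $T_x\sqcap F_x\sqsubseteq\bot$, so repairs encode assignments to $X$. The existential part uses the SAT-style gadget above: $\Sigma=\{a,T_y,F_y\}$, with disjointness $T_y\sqcap F_y\sqsubseteq\bot$ and axioms $T_{y}\sqsubseteq\neg$-style killers for terms. In \ELbot\ the negation must be encoded differently: we use ``$L\sqcap D_j\sqsubseteq\bot$'' for literals $L$ falsifying term $j$, and introduce each term $D_j$ positively only through support. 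Concretely, we set $\exists r.(\ldots)$ or conjunctions $T_y\sqcap\ldots\sqsubseteq D_j$ so that $D_j$ is derived exactly when the $Y$-literals of term $j$ are in \Hyp\ and the $X$-literals hold in the repair, with $D_j\sqsubseteq A$. The delicate part is ensuring that a $Y$-hypothesis containing both $T_y$ and $F_y$, or extra assertions, cannot cheat. Such conflicts either drop hypothesis facts in some repair, which only weakens entailment, or create conflicts with ABox facts that spawn repairs lacking support. We must argue that every repair of $\calA\cup\Hyp$ keeps a consistent choice for $Y$ and that, for a ``cheating'' \Hyp, some repair realizes a falsifying $X$-assignment. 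We would first try making $T_y,F_y$ conflicts absorbable: in any repair exactly one of them survives, which only yields more repairs. Since AR quantifies over all of them, a cheating hypothesis behaves like a universally chosen $y$ and cannot help, so correctness reduces to the QBF semantics.
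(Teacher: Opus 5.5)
Your membership arguments and your \NP-hardness reduction for \brave\ semantics are essentially the paper's. Both guess a hypothesis over $\Sigma$ together with a consistent subset of $\calA\cup\Hyp$ that entails $\alpha$, and both reduce from SAT with a dummy inconsistency outside $\Sigma$. One imprecision: under \brave\ semantics it is the witnessing repair that must be consistent and so encodes the partial assignment, not $\Hyp$ itself.

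For \SigmaP-hardness under \ar\ semantics you take a genuinely different route. The paper reduces from $\exists Y\forall Z\,\neg\varphi$ with $\varphi$ in CNF. It encodes the \enquote{bad} event, satisfaction of $\varphi$, as a conflict between a derived $A_\varphi(m)$ and the ABox fact $A_{\bar\varphi}(m)$. Fewer $Y$-literals make this event less likely, and repairs may even drop hypothesis literals to keep $A_{\bar\varphi}(m)$. So the paper needs the gadget $A_y\sqsubseteq V_y$, $A_{\bar y}\sqsubseteq V_y$, $\bigsqcap_{y}V_y\sqcap A_{\bar\varphi}\sqsubseteq C$, which ensures only repairs carrying a full $Y$-assignment entail $C(m)$.

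You instead derive each term concept $D_j$ from the conjunction of all literals of the DNF term. This is essentially the paper's construction for non-trivial \ar-hypotheses in \cref{thm:elbot-exist-nontriv}, with the signature doing the job of non-triviality. The only conflicts are then complementary literal pairs, and entailment is monotone in the literals present. Hence the intended hypothesis lies in every repair, and a hypothesis missing some $y$ can only lose. A hypothesis containing both $T_y(a)$ and $F_y(a)$ just turns $y$ into a universally chosen variable. This is exactly your closing argument, and no gadget is needed. What the paper's conflict-based encoding buys is reuse. It is adapted for the \SigmaP-hardness of conflict-confining \brave-hypotheses (\cref{thm:elbot-exist-cc}), where the universal quantifier must be simulated by conflict-confinement rather than by \ar's quantification over repairs.

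To turn your sketch into a proof, commit to that variant. With $D_j$ derived from all its literals, the axioms $L\sqcap D_j\sqsubseteq\bot$ are redundant, since every inconsistency they produce already contains a complementary pair. The $\exists r.(\ldots)$ option is moot because $\Sigma$ contains no role names.

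Do not drift into the hybrid where $D_j$ is derived from its $Y$-literals only and killed by falsifying $X$-literals. Then hypothesis facts conflict with ABox facts, and repairs can discard them. For $\varphi=(y\wedge x)\vee(y\wedge\neg x)$ the QBF $\exists y\forall x\,\varphi$ is true. But with $\Hyp=\{T_y(a)\}$, any two of $T_x(a)$, $F_x(a)$, $T_y(a)$ conflict. So $\{T_x(a)\}$ is a repair not entailing $A(a)$, and no other hypothesis over $\Sigma$ fares better.

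Finally, you must secure the promise $\calK\not\models_\ar A(a)$. A term made only of $X$-literals could already force $A(a)$ in every repair. Conjoining a fresh existential variable $y_0$ to every term fixes this. Alternatively, add to $\Sigma$ a fresh concept that every term requires, like $C$ in \cref{thm:elbot-exist-nontriv}.
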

\begin{proof}[Proof Sketch]
	Membership for (1): Guess an ABox \Hyp over signature $\Sigma$ as well as a candidate repair $\calR \subseteq \calA \cup \Hyp$, and verify that \calR is \calT-consistent and $\tup{\calT, \calR} \models \alpha$ in polynomial time.
	Membership for (2): Guess an ABox \Hyp over $\Sigma$ and check that $\tup{\calT, \calA \cup \Hyp} \models_\ar \alpha$.
	The latter can be handled by an \NP-oracle, yielding $\SigmaP$-membership.

	Hardness for (1) follows by reduction from \SAT.
	Let $\varphi(X) \dfn \{c_1,\dots, c_n\}$ be a CNF with clauses $c_i$.
	We model the satisfaction of $\varphi$ by the TBox 
	\begin{align*}
		\calT\dfn {} & \{\,A_x\sqcap A_{\bar x}\subsum \bot \mid x\in X\,\}\cup {}\\
		& \{\,A_\ell\subsum A_c \mid \ell\in c,c\in\varphi\,\}\cup {} \left\{\,\bigsqcap_{c\in\varphi}A_c\subsum A_\varphi\,\right\}.
	\end{align*}
	Let $\alpha \coloneqq A_\varphi(m)$ and $\Sigma \dfn \{A_x,A_{\bar x}\mid x\in X\}\cup\{m\}$.
	The reduction actually works with an empty ABox and hence also applies to the existence of a hypothesis over $\Sigma$ in a consistent KB $\calK$.
	Any hypothesis \calH over $\Sigma$ corresponds to an assignment $\theta_\calH$ over $X$, and entailment $A_\varphi(m)$ in $\tup{\calT, \calA \cup \Hyp}$ is equivalent to $\theta_\calH \models \varphi$.
	
	Hardness for (2) follows by reduction from $\exists\forall$-QBF, utilising a similar construction as in (1).
  Given an $\exists\forall$-QBF $\Phi = \exists Y \forall Z. \varphi$ with $\varphi$ in DNF, the main differences are that we encode satisfaction of a DNF instead of a CNF, and that the assertions encoding an assignment over $Z$ are included in the KB, while those encoding an assignment over $Y$ have to be part of the hypothesis.
\end{proof}

Existence of non-trivial hypotheses turns out to have the same complexity as existence of \SignatureRestricted hypotheses, but for \brave semantics requires us to incorporate the idea from \cref{ex:non-triv} into the reduction for \NP-hardness.

\begin{restatable}{theorem}{elbotExistNontriv}\label{thm:elbot-exist-nontriv}
  The existence problem for non-trivial \mbox{\calS-hypotheses} is
  \begin{enumerate*}[label=(\arabic*)]
    \item \NP-complete for $\calS = \brave$, and
    \item \SigmaP-complete for $\calS = \ar$.
  \end{enumerate*}
\end{restatable}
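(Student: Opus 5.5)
We treat the two semantics separately, each with a membership argument and a hardness reduction.

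\emph{Membership for (1).} Guess a non-trivial ABox \Hyp over the signature of $\tup{\calK,\alpha}$ with $\alpha \notin \Hyp$, together with a candidate repair $\calR \subseteq \calA \cup \Hyp$. Then check in polynomial time that \calR is \calT-consistent and that $\tup{\calT,\calR} \models \alpha$. Since \ELbot reasoning is polynomial, this is an \NP-algorithm. We do not need to check that \calR is maximal: any consistent subset supporting $\alpha$ extends to a repair that still entails $\alpha$. We also need that polynomial-size hypotheses suffice. This holds because \Hyp only uses individuals of the KB and names from the input signature, so there are only polynomially many candidate assertions.

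\emph{Membership for (2).} Guess \Hyp as above and verify $\tup{\calT,\calA\cup\Hyp}\models_\ar\alpha$ with a \coNP-oracle (\Cref{thm:elbot-verif-general}). This yields \SigmaP.

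\emph{Hardness for (1).} We start from the \SAT reduction of \Cref{thm:elbot-exist-sig}. The problem is that without a signature restriction, the trivial hypothesis $A_\varphi(m)$ is excluded, but $\{A_c(m)\mid c\in\varphi\}$ would be an easy non-trivial hypothesis, and so would the clause-level names. The plan is to use the idea of \cref{ex:non-triv} so that every direct route except through assignment assertions becomes self-inconsistent with the few available individuals.

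Concretely, we replace each intermediate concept $A_c$, and $A_\varphi$'s conjunction, by existential restrictions $\exists r_c$ that are made unsatisfiable on the single available individual. For $\exists r_c$ we add $A_c \sqsubseteq \neg\exists r_c^-$-style disjointness, encoded in \ELbot as $A_c\sqcap \exists r_c^- \ldots$. Since \ELbot has no inverses, we use a second individual and axioms such as $\exists r.\top \sqcap X\sqsubseteq\bot$ to mimic this. We also add a fresh disjoint pair $C\sqcap C'\sqsubseteq\bot$ with $C(a),C'(a)$ in \calA for inconsistency.

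A cleaner option is to make the role names of the chain, and the concept names $A_c$, $A_\varphi$, each conflict with the observation individual through axioms $A_c\sqcap D\sqsubseteq\bot$, where $D(m)$ is forced by $\alpha$'s derivation. Only the variable-level names $A_x, A_{\bar x}$ then remain usable. We would then check that any \calT-consistent support of $A_\varphi(m)$ avoiding $\alpha$ encodes a satisfying assignment.

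The main obstacle is exactly this step: ruling out shortcut hypotheses through intermediate names while keeping \ELbot syntax. We would first try the \cref{ex:non-triv} gadget, with $A_c$ defined only as $\exists s_c.B_c$, and make $s_c(m,m)$ inconsistent via reflexive-style conflicts.

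\emph{Hardness for (2).} We adapt the $\exists\forall$-QBF reduction from \Cref{thm:elbot-exist-sig} in the same way. The $Z$-assignment conflicts stay in \calA, and the same shortcut-blocking gadget forces non-trivial \ar-hypotheses to consist of $Y$-literal assertions. A non-trivial \ar-hypothesis then exists iff $\exists Y\forall Z\,\varphi$ is true.
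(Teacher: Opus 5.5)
Your membership arguments for both semantics are correct and match the paper's. The gap is in both hardness parts. You correctly identify the obstacle: intermediate names such as $A_c$ or $V_y$ yield non-trivial shortcut hypotheses. But you never fix a gadget or verify one, and several of the options you list cannot work.

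In \ELbot an asserted $A_c(m)$ and a derived $A_c(m)$ have the same consequences. So an axiom like $A_c\sqcap D\sqsubseteq\bot$ that punishes the shortcut punishes the intended derivation just as much. The only usable asymmetry is between anonymous elements, which existential restrictions create in the intended derivation, and named individuals, which are the only things a hypothesis can assert. A second individual destroys exactly this asymmetry. If some $a\neq m$ occurs in the input, e.g.\ through your dummy inconsistency $C(a),C'(a)$, then $\{r_c(m,a)\mid c\in\varphi\}\cup\{B(a)\}$ is a consistent shortcut.

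The construction that works keeps $m$ as the only individual, with the dummy conflict placed on $m$ over fresh names. It uses $A_x\sqcap A_{\bar x}\sqsubseteq\bot$, $A_\ell\sqsubseteq\exists r_c.B$ for $\ell\in c$, $\bigsqcap_{c\in\varphi}\exists r_c.B\sqsubseteq A_\varphi$, and, crucially, $A_\varphi\sqcap B\sqsubseteq\bot$. An $r_c$-successor in $B$ then comes either anonymously from some $A_\ell(m)$, or from asserting $r_c(m,m)$ and $B(m)$. The latter clashes with the conclusion $A_\varphi(m)$ itself, so no repair containing $B(m)$ entails $\alpha$, and non-triviality excludes $A_\varphi(m)$. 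What blocks the shortcut is this clash of the named witness with the observation, not inconsistency of $r_c(m,m)$ as such.

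For (2), ``the same gadget'' on top of the construction from Theorem~\ref{thm:elbot-exist-sig} is not enough as stated. There, $\{V_y(m)\mid y\in Y\}$ is a non-trivial \ar-hypothesis as soon as no full $Z$-assignment on its own satisfies all clauses. This holds for $\varphi=(y\lor z)\land(\neg y\lor\neg z)$, although $\exists y\forall z\,\neg\varphi$ is false. So the $V_y$ would need blocking too, and the new conflicts the gadget introduces would have to be checked against all repairs.

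The paper sidesteps any gadget. For $\exists Y\forall Z\,\psi$ with $\psi$ in DNF, it uses $A_x\sqcap A_{\bar x}\sqsubseteq\bot$ and one axiom $C\sqcap\bigsqcap_{\ell\in t}A_\ell\sqsubseteq A_\psi$ per term $t$. The ABox is $\{A_z(m),A_{\bar z}(m)\mid z\in Z\}$ and the observation is $A_\psi(m)$. This works for four reasons:
\begin{itemize}
\item There are no intermediate names to shortcut through.
\item $C$ guarantees $\calK\not\models_\ar\alpha$.
\item Every non-trivial hypothesis must contain $C(m)$ and otherwise, w.l.o.g., only $Y$-literal assertions. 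If it contains both $A_y(m)$ and $A_{\bar y}(m)$, dropping one only removes repairs.
\item The repairs range over all full $Z$-assignments.
\end{itemize}
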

\begin{proof}[Proof Sketch]
  Membership in both cases is shown similar as in the proof of \cref{thm:elbot-exist-sig}, so we focus on hardness.

	For (1), we build on the construction from the proof of \cref{thm:elbot-exist-sig}.
	The main change is to use the idea from \cref{ex:non-triv} to replace the explicit signature-restriction, enforcing use of the assertions of the form $A_\ell(m)$ for a literal $\ell$ in any hypothesis.
  This can be achieved by replacing each $A_\ell \subsum A_c$ by $A_\ell\subsum \exists r_c.B$ and using the axiom $\bigsqcap_{c\in\varphi} \exists r_c.B\subsum A_\varphi$ to entail $A_\varphi$ instead. 
	Moreover, we add $A_\varphi \sqcap B \subsum \bot$ to trigger a conflict when trying to entail $A_\varphi(m)$ using assertions $r_c(m,m)$ and $B(m)$ instead of the intended assertions of the form $A_\ell(m)$.
  Note that this makes use of the fact that fresh individuals are not allowed, and that that non-triviality rules out using the assertion $A_\varphi(m)$ in the hypothesis.

	As before, the reduction works with an empty ABox and hence also applies to consistent KBs.
	Our TBox takes the following final form:
	\begin{align*}
		\calT\dfn & \{\,A_x\sqcap A_{\bar x}\subsum \bot \mid x\in X\,\}\cup {}\\
		& \{\,A_\ell\subsum \exists r_c.B \mid \ell\in c,c\in\varphi\,\}\cup \left\{\bigsqcap_{c\in\varphi}\exists r_c.B\subsum A_\varphi\right\}.
	\end{align*}
	For (2): we reduce directly from an instance of $\exists\forall$-QBF.
	Let $\Psi\dfn \exists Y\forall Z. \psi$ be a formula  with $\psi\dfn \{t_1,\dots,t_n\}$ a DNF where each $t_i$ is a term over $X=Y\cup Z$. 
	We construct the KB $\calK\dfn \tup{\calT,\calA}$, where 
	\begin{align*}
		\calT\dfn {} & \{\,A_x\sqcap A_{\bar x}\subsum \bot \mid x\in X\, \}\cup {} \\
		& \left\{\, C\sqcap \bigsqcap_{\ell\in t} A_\ell \subsum A_\psi \mid t\in \psi\,\right\}, \text{ and }\\
		\calA\dfn {} & \{\,A_z(m),A_{\bar z}(m) \mid z\in Z\,\}.
	\end{align*}
	Moreover, we let $\alpha\dfn A_\psi(m)$ be our observation.
	Intuitively, although $\{\alpha\}$ is a trivial \ar-hypotheses, any \emph{non-trivial} \ar-hypothesis corresponds to a satisfying assignment over $Y$ for $\Psi$.
	Thus, $\Psi$ is true iff $\alpha$ admits a non-trivial \ar-hypothesis in $\calK$.
\end{proof}

Existence of conflict-confining \ar-hypotheses behaves very similar as for \DLLite, with our proofs mostly relying on \cref{lem:ar-triv}.
In sharp contrast, the setting behaves very different under \brave semantics.
While this seems counter-intuitive, the following observation applies to \ELbot, as demonstrated in the example below.

\begin{restatable}{observation}{obsElbotBraveCc}\label{obs:elbot-brave-cc}
  There is a \brave-abduction problem $\tup{\calK, \alpha}$ s.t.\ there is a conflict-confining \brave-hypothesis for $\tup{\calK, \alpha}$, but $\{\alpha\}$ is not such a hypothesis.
\end{restatable}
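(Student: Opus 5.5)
We give an explicit \ELbot counterexample. The idea is to let $\alpha$ clash directly with an ABox assertion $D(a)$, while $\alpha$ can also be derived from a conjunction $B \sqcap E$. Here $E(a)$ is already in the ABox and already conflicts with $D(a)$. Adding only $B(a)$ then entails $\alpha$ in the repair containing $E(a)$. Every new inconsistency it could cause already contains the old conflict $\{D(a), E(a)\}$, so no new subset-minimal conflict arises. The conjunction on the left-hand side is essential, which matches the fact that \cref{lem:dllite-singleton-hyps} rules out such behaviour in \DLLite.

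Concretely, let
\begin{align*}
  \calT &\coloneqq \{\, A \sqcap D \sqsubseteq \bot,~ E \sqcap D \sqsubseteq \bot,~ B \sqcap E \sqsubseteq A \,\}, \\
  \calA &\coloneqq \{D(a), E(a)\}, \qquad \alpha \coloneqq A(a), \qquad \Hyp \coloneqq \{B(a)\}.
\end{align*}
First I check that $\tup{\calK, \alpha}$ is a \brave-abduction problem. The KB $\calK = \tup{\calT,\calA}$ is inconsistent, with unique conflict $\{D(a),E(a)\}$ and repairs $\{D(a)\}$ and $\{E(a)\}$. Neither repair entails $A(a)$, since $A$ can only be derived from $B \sqcap E$ and $B(a)$ is absent. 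Hence $\calK \not\models_\brave \alpha$.

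Next, $\{\alpha\}$ is not conflict-confining, because $\{A(a), D(a)\}$ is a new conflict. (It is still a \brave-hypothesis, but that does not matter for the observation.)

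Then I verify that $\Hyp$ is a conflict-confining \brave-hypothesis. For the \brave part, the set $\{B(a), E(a)\}$ is \calT-consistent, as witnessed by a one-element model with $a \in B \cap E \cap A$ and $D$ empty. It extends to a repair of $\tup{\calT, \calA \cup \Hyp}$, and that repair entails $A(a)$. For conflict-confinement, I enumerate all subsets of $\calA \cup \Hyp = \{B(a), D(a), E(a)\}$ that contain $B(a)$. The sets $\{B(a)\}$, $\{B(a),D(a)\}$ and $\{B(a),E(a)\}$ are all consistent: for $\{B(a),D(a)\}$, the concept $A$ is not derivable without $E$, so $A \sqcap D \sqsubseteq \bot$ never fires. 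The only inconsistent set containing $B(a)$ is the full set $\{B(a), D(a), E(a)\}$. It strictly contains the conflict $\{D(a),E(a)\}$, so it is not subset-minimal. Therefore $\Conf(\tup{\calT,\calA\cup\Hyp}) = \Conf(\calK)$.

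The only step needing real care is this last conflict enumeration, in particular confirming $\{B(a),D(a)\}$ is consistent. It is easy here because the signature is tiny and a two-element model, or a direct saturation argument, settles each case.
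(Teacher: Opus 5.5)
Your counterexample is correct, and it is the paper's \cref{ex:brave-cc} with the concept names renamed: your $D$, $E$, $B$ play the roles of the paper's $B$, $C$, $D$, and the argument is the same ($\{\alpha\}$ creates a new conflict with the disjoint ABox assertion, while the added assertion only completes the conjunction inside the repair that has already dropped that assertion). Your explicit enumeration of the subsets containing $B(a)$ makes the conflict-confinement check more carefully than the paper's one-line remark does.
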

\begin{example}\label{ex:brave-cc}
  Let $\calK = \tup{\calT, \calA}$, where
  \begin{align*}
  	\calT &= \{\, A \sqcap B \sqsubseteq \bot, B \sqcap C \sqsubseteq \bot, C \sqcap D \sqsubseteq A \,\}, \text{ and} \\
  	\calA &= \{\, B(a), C(a) \,\},
  \end{align*}
  and let $\alpha = A(a)$. 
  It is easy to see that $\{\alpha\}$ is a \brave-hypothesis for $\tup{\calK, \alpha}$, but results in a new conflict $\{A(a),B(a)\}$ of $\tup{\calT,\calA\cup\{\alpha\}}$.
  Hence, $\{\alpha\}$ is not conflict-confining in $\calK$.
  However, $\calH\dfn \{D(a)\}$ entails $A(a)$ in the repair $\{C(a), D(a)\}$ and is consistent with all repairs of \calK. 
  Intuitively, the only repair of $\calA\cup\calH$ where $\alpha$ is entailed, is the one that already got rid of the conflict with $\alpha$.
\end{example}
This contrasts with the situation for \DLLite, where a \brave-abduction problem with this property does not exist due to \cref{lem:dllite-singleton-hyps}.
Surprisingly, the result is that adding the conflict-confining restriction decreases the complexity under \ar semantics, while it increases it for \brave semantics.
Note that this is the opposite behaviour to what was seen for \SignatureRestriction and non-triviality in Theorems~\ref{thm:elbot-exist-sig} and~\ref{thm:elbot-exist-nontriv}, where complexity increased under \ar semantics and stayed the same for \brave semantics.

\begin{restatable}{theorem}{elbotExistCc}\label{thm:elbot-exist-cc}
  The existence problem for conflict-confining \calS-hypotheses is 
  \begin{enumerate*}[label=(\arabic*)]
	  \item \SigmaP-complete for $\calS=\brave$, and
	  \item \coNP-complete for $\calS=\ar$.
  \end{enumerate*}
\end{restatable}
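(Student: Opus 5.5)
We treat the two semantics separately, starting with the easier \ar case. For $\calS = \ar$, we first observe that a conflict-confining \ar-hypothesis exists iff $\{\alpha\}$ is conflict-confining for \calK. If some conflict-confining \ar-hypothesis exists, then in particular an \ar-hypothesis exists, and \cref{lem:ar-triv} gives that $\{\alpha\}$ is conflict-confining. Conversely, if $\{\alpha\}$ is conflict-confining, \cref{lem:ar-triv} says it is an \ar-hypothesis, hence a conflict-confining one. So the existence problem reduces to checking that $\{\alpha\}$ is conflict-confining, which is in \coNP by \cref{lem:cc-compl}. For hardness we reuse the reduction behind the \coNP-hardness in \cref{lem:cc-compl}, i.e.\ from non-entailment under \brave semantics, specialised to the trivial hypothesis. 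Concretely, we reduce brave non-entailment of some $\neg$-style assertion to ``no repair $\calR$ with $\calR\cup\{\alpha\}$ inconsistent''. We encode this by taking a fresh concept name $A$, adding $A \sqcap Q \sqsubseteq \bot$ where $Q$ is the queried concept, and adding a dummy inconsistency so that \calK is inconsistent. We also need $\calK\not\models_\ar A(a)$, which holds trivially since $A$ is fresh.

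For $\calS = \brave$, membership in \SigmaP goes by guessing an ABox \Hyp over the individuals of \calK and $\alpha$, together with a subset $\calR \subseteq \calA\cup\Hyp$. We then check in polynomial time that \calR is \calT-consistent and $\tup{\calT,\calR}\models\alpha$. Finally, we check with one \coNP-oracle call that \Hyp is conflict-confining, using \cref{lem:cc-compl}. It suffices to guess a consistent subset entailing $\alpha$, since it extends to a repair, so the existential part stays in \NP. The size of \Hyp is polynomial because only flat assertions over the (polynomial) signature and given individuals are relevant. Symbols outside the signature of $\tup{\calK,\alpha}$ cannot help: assertions with fresh concept or role names add no entailments, so we may drop them, and dropping assertions never creates conflicts.

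The main obstacle is \SigmaP-hardness for \brave. We plan to reduce from $\exists\forall$-QBF $\exists Y\forall Z.\varphi$ with $\varphi$ in CNF, phrasing it as ``exists $Y$-assignment such that $\varphi\wedge(\text{assignment})$ has no falsifying $Z$-assignment'', in the spirit of \cref{ex:brave-cc}. The hypothesis should choose the $Y$-assignment via assertions $A_y(m)$ or $A_{\bar y}(m)$. It must entail $\alpha=A(m)$ in some repair; this is easy to arrange with, e.g., $\bigsqcap_{y}(\text{choice for } y)\sqsubseteq A$, and via markers $S_y$ with $A_y\sqsubseteq S_y$, $A_{\bar y}\sqsubseteq S_y$ and $\bigsqcap_y S_y \sqsubseteq A$. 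Conflict-confinement must then encode the universal part. We put $A_z(m), A_{\bar z}(m)$ with $A_z\sqcap A_{\bar z}\sqsubseteq\bot$ into \calA, so that repairs correspond to $Z$-assignments. We then add axioms deriving $\bot$ exactly when the chosen literals falsify a clause of $\varphi$, namely $\bigsqcap_{\ell\in c}A_{\bar\ell}\sqsubseteq\bot$. In this way \Hyp is consistent with every repair iff its $Y$-assignment satisfies $\varphi$ for all $Z$. We also need $A_y\sqcap A_{\bar y}\sqsubseteq\bot$, which rules out choosing both $A_y(m)$ and $A_{\bar y}(m)$, since such a pair would be a new conflict.

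The delicate points are as follows. First, $\{\alpha\}$ and other shortcut hypotheses must not be conflict-confining. For this we add an assertion $B(m)$ with $A\sqcap B\sqsubseteq\bot$ and a disjoint partner as in \cref{ex:brave-cc}, so that every direct use of $A$ creates a new conflict, while the intended route yields $A$ only in repairs lacking $B(m)$. Second, \calK must be inconsistent with $\calK\not\models_\brave\alpha$. Third, we must check that partial assignments or extra assertions cannot sneak in. Verifying this last point for all $\Hyp$ is where the real work lies.
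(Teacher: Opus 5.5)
Your \ar{} part (reducing to conflict-confinement of $\{\alpha\}$ via \cref{lem:ar-triv}, with membership and hardness from \cref{lem:cc-compl}) and your \SigmaP-membership argument for \brave{} are correct and match the paper. The gap is in the \SigmaP-hardness for \brave{}. The problem you reduce from, $\exists Y\forall Z.\varphi$ with $\varphi$ a CNF, is not \SigmaP-hard. For a fixed $Y$-assignment $\theta_Y$, the formula $\varphi(\theta_Y,Z)$ is valid over $Z$ iff every clause is satisfied by $\theta_Y$ or contains a complementary pair of $Z$-literals. So this problem lies in \NP.

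Your encoding inherits this weakness. With the axioms $\bigsqcap_{\ell\in c}A_{\bar\ell}\sqsubseteq\bot$, a clash between an intended $\Hyp$ and a repair of $\calK$ is always witnessed by a single clause $c$. The witness is the fixed set of $Z$-assertions falsifying the $Z$-part of $c$. Testing conflict-confinement therefore needs no search over repairs, and the instances you produce can at best show \NP-hardness. In addition, clauses without $Y$-literals become conflicts of $\calK$ itself and are never detected as new ones. For example, for the false formula $\exists y\forall z.\,z$, the hypothesis $\{A_y(m)\}$ creates no new conflict.

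The missing idea is that the universal quantifier implicit in conflict-confinement (``for every repair of $\calK$'') must range over an \NP-hard event. A new conflict should arise iff some repair, read as a $Z$-assignment, satisfies the entire CNF together with $\Hyp$. The paper therefore reduces from non-validity of $\forall Y\exists Z.\varphi$ with $\varphi$ in CNF, and proceeds as follows:
\begin{itemize}
\item it derives $A_c$ from $A_\ell$ for $\ell\in c$;
\item it derives $A_\varphi$ from $\bigsqcap_c A_c$ together with $\bigsqcap_y V_y$, so that the resulting conflict always involves $\Hyp$ and is new even when a partial $Z$-assignment alone satisfies $\varphi$;
\item it lets $A_\varphi\sqcap A_{\bar\varphi}\sqsubseteq\bot$ with $A_{\bar\varphi}(m)\in\calA$ produce that conflict;
\item the observation $C(m)$ is entailed from $\bigsqcap_y V_y\sqcap A_{\bar\varphi}$ in a repair that keeps $A_{\bar\varphi}(m)$.
\end{itemize}

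The shortcut-blocking you leave open also needs care. With $A\sqcap B\sqsubseteq\bot$, $B(m)\in\calA$ and an unguarded route $\bigsqcap_y S_y\sqsubseteq A$, the intended $\Hyp$ itself forms a new conflict with $B(m)$. The paper handles this with a guard concept $C_d$: $C_d(m),B_d(m)\in\calA$ and $C_d\sqcap B_d\sqsubseteq\bot$. Every derivation step is conditioned on $C_d$, and $B_d$ is made disjoint from each intermediate concept $C$, $A_\varphi$, $A_c$, $V_y$.
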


\begin{proof}[Proof Sketch]
	For membership in (1): although one can guess a hypothesis and a witnessing candidate repair simultaneously, the conflict-confinement of this hypothesis requires oracle calls.
	In contrast, the membership for (2) holds due to the observation that if there exists some conflict-confining \ar-hypothesis, the observation itself must be conflict-confining and hence an \ar-hypothesis. Hence, one only requires to check this later condition for the observation, which can be done in \coNP due to Lemma~\ref{lem:cc-compl}.
	
	For hardness in (1): we reduce from $\exists\forall$-QBF reusing and adapting ideas from the reduction for \ar semantics in the proof of Theorem~\ref{thm:elbot-exist-sig}.
	Intuitively, the outer quantifier simulates the existence of a hypothesis, whereas the inner one now checks the conflict-confinement.
	The required changes include (1) shifting from \ar to \brave entailment, (2) causing new conflicts when a hypothesis contains an assertion for concepts outside $\Sigma$, and (3) encoding the situation when a partial assignment over $Z$ already satisfies the formula. Here (3) is essential to encode that new conflicts are not subsumed by existing ones.
\end{proof}

Verification of conflict-confining and $\subseteq_c$-minimal hypotheses for can be treated using similar ideas.
Again, the complexity under \brave semantics increases when requiring the hypotheses to be conflict-confining (or $\subseteq_c$-minimal), while it stays the same as for verification of general hypotheses in case of \ar, due to the effect of the trivial hypothesis.

\begin{restatable}{theorem}{elbotVerifCc}\label{thm:elbot-verif-cc}
  Verification of conflict-confining \calS-hypotheses is
  \begin{enumerate*}[label=(\arabic*)]
	  \item \DP-complete for $\calS=\brave$, and
	  \item \coNP-complete for $\calS=\ar$.
  \end{enumerate*}
\end{restatable}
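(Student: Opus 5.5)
We treat the two semantics separately, since in each case the membership argument splits the verification into independent checks.

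\emph{Membership.} For (1), verifying that \Hyp is a conflict-confining \brave-hypothesis amounts to two independent checks. First, $\tup{\calT, \calA \cup \Hyp} \models_\brave \alpha$, which is in \NP by \cref{thm:elbot-verif-general}. Second, \Hyp is conflict-confining for \calK, which is in \coNP by \cref{lem:cc-compl}. The intersection of these two conditions gives \DP-membership.

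For (2), we use that a conflict-confining \ar-hypothesis cannot introduce conflicts involving its own assertions. We expect to show that \Hyp is a conflict-confining \ar-hypothesis iff \Hyp is conflict-confining and $\tup{\calT, \calR \cup \Hyp} \models \alpha$ for every repair \calR of \calK. Under conflict-confinement, the repairs of $\tup{\calT, \calA \cup \Hyp}$ are exactly the sets $\calR \cup \Hyp$ with \calR a repair of \calK, which gives the equivalence. Both conditions are \coNP: one universally guesses either a candidate fresh conflict or a repair \calR of \calK, and the remaining checks are polynomial. Alternatively, one can check \ar-entailment directly in \coNP and conflict-confinement in \coNP, and intersect.

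\emph{Hardness.} For (2), reduce from verification of general \ar-hypotheses or from non-\brave-entailment. Conflict-confinement alone is \coNP-hard by \cref{lem:cc-compl}. We would take its hardness instance (from \brave non-entailment) and arrange an \ar-abduction problem where \Hyp trivially entails $\alpha$ classically, for instance by letting \Hyp contain $\alpha$, and where \calK is made inconsistent by a dummy conflict on fresh symbols. Then \Hyp is a conflict-confining \ar-hypothesis iff it is conflict-confining. We must take care that $\alpha$ is not already \ar-entailed, which we ensure by choosing $\alpha$ over a fresh concept name.

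For (1), reduce from SAT-UNSAT pairs $(\varphi,\psi)$ by taking a disjoint union of two instances over disjoint signatures. The first is the \NP-hard \brave-verification instance for $\varphi$ from \cref{thm:elbot-verif-general}, assuming it is conflict-confining or can be made so. The second is the \coNP-hard conflict-confinement instance for $\psi$ from \cref{lem:cc-compl}, whose hypothesis part $\Hyp_2$ over a fresh signature cannot influence $\alpha$. We set $\Hyp = \Hyp_1 \cup \Hyp_2$. Disjointness of signatures ensures that conflicts and entailments decompose, so \Hyp is a conflict-confining \brave-hypothesis iff $\Hyp_1$ is a \brave-hypothesis and $\Hyp_2$ is conflict-confining.

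The main obstacle is the first half of this reduction: making the \NP-hardness instance for \brave verification have a conflict-confining \Hyp. In the \SAT-style encodings the hypothesis typically supplies facts that clash with ABox facts. We would try to place all conflicting choice assertions in \calA and let \Hyp consist only of a trigger assertion on a fresh concept used in the left-hand side of the entailing axiom, so that \Hyp introduces no new conflicts.
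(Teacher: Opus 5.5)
Your proposal is correct. The membership arguments match the paper's. For $\calS=\ar$, your ``alternative'' of intersecting the \coNP{} check for \ar-entailment with the \coNP{} check for conflict-confinement is essentially the paper's argument.

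The hardness parts take a different route.

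For $\calS=\ar$, the paper reduces from \ar-entailment using the trigger gadget of \cref{thm:elbot-verif-general}. It adds $X\sqcap A\sqsubseteq C$ with $X$ fresh and sets $\Hyp=\{X(a)\}$. That $\Hyp$ is trivially conflict-confining, so hardness comes from the entailment side. You instead put $\alpha$, over a fresh name, into $\Hyp$. Being an \ar-hypothesis is then trivial, and hardness comes from the conflict-confinement side via \cref{lem:cc-compl}. Both work.

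For $\calS=\brave$, the paper reduces from ``$\calK\models_\brave\alpha_1$ and $\calK\not\models_\brave\alpha_2$''. A single assertion $C(a)$ plays both roles, through $C\sqcap A\sqsubseteq D$ and $C\sqcap B\sqsubseteq\bot$. You take a signature-disjoint union of the two existing single-purpose reductions and use a two-element hypothesis. Your version is more modular. It rests on the fact that for \ELbot{} KBs over disjoint concept and role names, repairs, conflicts and entailments decompose componentwise; this is true but should be stated, ideally with disjoint individuals for simplicity. The paper's version is more compact and gives hardness already for singleton hypotheses, though it uses the same disjoint-union fact implicitly to justify its base problem.

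Two small corrections.

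First, the ``main obstacle'' you flag does not exist. In the instance from \cref{thm:elbot-verif-general}, $\Hyp=\{B(a)\}$ with $B$ fresh and occurring only in $C\sqcap B\sqsubseteq A$, where $A$ is also fresh. So $B(a)$ lies in no conflict, and $\Hyp$ is already conflict-confining. Your proposed fix is exactly this construction.

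Second, your first characterisation for \ar-membership fails when $\Hyp\cap\calA\neq\emptyset$. If $\beta\in\Hyp\cap\calA$ and $\beta\notin\calR$, then $\calR\cup\Hyp$ is $\calT$-inconsistent and vacuously entails $\alpha$. For example, take $\calT=\{B\sqcap C\sqsubseteq\bot,\ B\sqsubseteq D\}$, $\calA=\{B(a),C(a)\}$, $\alpha=D(a)$ and $\Hyp=\{B(a)\}$. Then $\Hyp$ is conflict-confining and satisfies your right-hand side, but it is not an \ar-hypothesis. Under conflict-confinement the repairs of $\tup{\calT,\calA\cup\Hyp}$ are the sets $\calR\cup(\Hyp\setminus\calA)$, so phrase the condition with those, or simply rely on your alternative argument.
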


\begin{restatable}{theorem}{elbotVerifCminSubset}\label{thm:elbot-verif-cmin-subset}
  Verification of $\subseteq_c$-minimal \calS-hypotheses is
  \begin{enumerate*}[label=(\arabic*)]
    \item \PiP-complete for $\calS = \brave$, and
    \item \coNP-complete for $\calS = \ar$.
  \end{enumerate*}
\end{restatable}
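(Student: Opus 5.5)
For part (2), the approach is to reduce to earlier results. By \cref{lem:ar-triv}, $\tup{\calK,\alpha}$ has an \ar-hypothesis iff $\{\alpha\}$ is conflict-confining for \calK. If some \ar-hypothesis exists, then $\{\alpha\}$ is a conflict-confining \ar-hypothesis, so $\Conf(\tup{\calT,\calA\cup\{\alpha\}})=\Conf(\calK)$. Since $\Conf(\calK)\subseteq\Conf(\tup{\calT,\calA\cup\Hyp})$ always holds, an \ar-hypothesis \Hyp is $\subseteq_c$-minimal iff it is conflict-confining. Hence membership reduces to \cref{thm:elbot-verif-cc}(2): check in \coNP that \Hyp is an \ar-hypothesis and that it is conflict-confining (\cref{lem:cc-compl}). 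Hardness should follow from the \coNP-hardness reduction used in \cref{thm:elbot-verif-cc}(2), or from \ar-entailment, choosing the instance so that the candidate \Hyp is conflict-confining.

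For part (1), membership in \PiP comes from the complement. \Hyp fails to be a $\subseteq_c$-minimal \brave-hypothesis iff either it is not a \brave-hypothesis (a \coNP condition, hence in \SigmaP) or there exist an ABox $\Hyp'$ over the individuals of $\calK$ and $\alpha$, together with a \calT-consistent $\calR\subseteq\calA\cup\Hyp'$ with $\tup{\calT,\calR}\models\alpha$, such that $\Conf(\tup{\calT,\calA\cup\Hyp'})\subsetneq\Conf(\tup{\calT,\calA\cup\Hyp})$.

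I would restrict $\Hyp'$ to polynomially many assertions over the signature of the input, since new symbols cannot help. The inclusion of conflict sets is a \coNP condition: for every subset $C$ of $\calA\cup\Hyp'$, if $C$ is a conflict then $C$ is a conflict of $\calA\cup\Hyp$. Checking that a given set is a conflict takes polynomial time (inconsistency plus minimality via single deletions). Strictness is witnessed by a guessed conflict of $\calA\cup\Hyp$ that is not a subset of $\calA\cup\Hyp'$. Hence non-minimality is in \SigmaP, and verification is in \PiP.

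For \PiP-hardness I would reduce from $\forall\exists$-QBF, in the form of the complement of $\exists\forall$-QBF, reusing the construction from \cref{thm:elbot-exist-cc}(1). That construction already produces, for $\exists Y\forall Z\,\varphi$, a KB in which conflict-confining \brave-hypotheses correspond to witnessing assignments over $Y$. To it I would add a fixed designated hypothesis $\Hyp_0$, for example a fresh concept $D$ with $D\sqsubseteq A_\alpha$ and the assertion $D(m)$, together with exactly one new dummy conflict involving $D(m)$ and some added ABox fact. Then $\Hyp_0$ is a \brave-hypothesis whose conflict set is $\Conf(\calK)$ plus one extra conflict, so it is $\subseteq_c$-minimal iff no conflict-confining hypothesis exists, i.e.\ iff the $\exists\forall$-formula is false.

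The main obstacle is guaranteeing that no other hypothesis $\Hyp'$ has a conflict set strictly between these. Any alternative hypothesis that avoids the dummy conflict must add no new conflicts at all, and hypotheses mixing $D(m)$ with the QBF gadget must not create incomparable conflict sets that spoil the equivalence. I would handle this by making the dummy conflict $\{D(m),E(m)\}$ with $E(m)\in\calA$. Any hypothesis using a concept outside the gadget's intended signature then gets conflicts that are not strictly smaller, and I would verify that the gadget's new-conflict behaviour from \cref{thm:elbot-exist-cc} is preserved.
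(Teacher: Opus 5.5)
Your proposal is correct and follows the paper's route. For \ar, \cref{lem:ar-triv} collapses $\subseteq_c$-minimality to conflict-confinement. For \brave membership, you guess a counter-witness $\Hyp'$ and check conflict-set inclusion with an oracle. For \brave hardness, you add a singleton designated hypothesis with a dummy conflict to the construction of \cref{thm:elbot-exist-cc}. The paper does the same with $C_d \sqcap X \sqsubseteq C$, $X \sqcap Y \sqsubseteq \bot$, $Y(m)\in\calA$ and $\Hyp=\{X(m)\}$.

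One detail is wrong. That construction has observation $C(m)$, the axiom $C \sqcap B_d \sqsubseteq \bot$, and $B_d(m)\in\calA$. With your axiom $D \sqsubseteq C$, the set $\{D(m),B_d(m)\}$ is therefore a second new conflict, so your claim that $\{D(m)\}$ adds exactly one conflict is false. The paper's guard $C_d$ avoids this, because $\{X(m),C_d(m),B_d(m)\}$ is not minimal.

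Your reduction still works, and the reason also resolves the ``main obstacle'' you flag. Take a singleton $\Hyp_0=\{\beta\}$ with $\beta\notin\calA'$. Every $\Hyp'$ with $\beta\in\Hyp'$ satisfies $\Conf(\tup{\calT',\calA'\cup\Hyp'})\supseteq\Conf(\tup{\calT',\calA'\cup\Hyp_0})$, so it is never strictly below $\Hyp_0$. Every $\Hyp'$ with $\beta\notin\Hyp'$ and $\Conf(\tup{\calT',\calA'\cup\Hyp'})\subseteq\Conf(\tup{\calT',\calA'\cup\Hyp_0})$ can only have conflicts contained in $\calA'$, so it is conflict-confining. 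Hence $\Hyp_0$ is $\subseteq_c$-minimal iff no conflict-confining \brave-hypothesis exists, however many new conflicts $\beta$ introduces.
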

\begin{proof}[Proof Sketch]
	We sketch the proof for (1).
	For membership: Given an ABox $\Hyp$, we check whether \Hyp is a \brave-hypothesis and guess a counter-witness $\Hyp'$ to \Hyp being $\subseteq_c$-minimal.
  Verifying that either \Hyp is not a \brave-hypothesis, or $\Hyp'$ is a \brave-hypothesis and causes fewer conflicts than \Hyp is done using an \NP-oracle.
	This yields membership in $\PiP$.

	For hardness: we reuse the reduction from the proof of Theorem~\ref{thm:elbot-exist-cc}.
	To this aim, we construct the KB $\calK'$ from \calK in the previous construction by adding to the TBox the axioms
  \[C_d \sqcap X \subsum C \quad \text{and} \quad X\sqcap Y\subsum \bot\]
  and to the ABox the assertion $Y(m)$.
	Then, let $\alpha \dfn C(m)$ as before and take $\calH \dfn \{X(m)\}$.
	Intuitively, \calH uses these new axioms to entail $\alpha$ while inducing exactly one new conflict in $\calK'$, namely $\{X(m), Y(m)\}$.
	We then argue that \calH is not a $\subseteq_c$-minimal hypotheses for $\tup{\calK, C(m)}$ iff $\tup{\calK, C(m)}$ admits a conflict-confining \brave-hypothesis.
  For this, observe that the latter is also a conflict-confining \brave-hypothesis for $\tup{\calK', C(m)}$,
  and therefore a hypothesis introducing less conflicts than~\Hyp.
\end{proof}

The case of $\leq_c$-minimal \brave-hypotheses is more challenging: Here, when adapting the algorithm for the case of $\subseteq_c$-minimality, we would have to compare the \emph{number} of conflicts introduced by the given candidate hypothesis \Hyp to the \emph{number} of conflicts introduced by a potential counter-witness.
As it is not hard to see that the number of conflicts can be exponential, this would naively require an oracle for the counting class \cP.
It seems likely that this case requires quite different techniques, and we for now only observe that the problem is certainly in \PSPACE by a straightforward algorithm that counts the number of conflicts, and \PiP-hard by the same proof as for the case of $\subseteq_c$-minimality above.
In contrast, verification of $\leq_c$-minimal \ar-hypotheses can again be handled using similar techniques as above, due to the effect of the trivial hypothesis.

\begin{restatable}{theorem}{elbotVerifCminCard}\label{thm:elbot-verif-cmin-card}
   Verification of $\leq_c$-minimal \ar-hypotheses is \coNP-complete.
\end{restatable}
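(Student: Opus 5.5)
The plan is to reduce $\leq_c$-minimality under \ar semantics to conflict-confinement, mirroring the argument the paper sketches for \DLLite in \cref{thm:dllite-verif-cc-cmin}, and then to invoke \cref{lem:cc-compl} for the upper bound and \cref{thm:elbot-verif-general} for the lower bound.

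\emph{Key claim.} For an \ar-abduction problem $\tup{\calK, \alpha}$ that has some \ar-hypothesis, an \ar-hypothesis \Hyp is $\leq_c$-minimal iff it is conflict-confining. To prove this, I would use \cref{lem:ar-triv}: whenever any \ar-hypothesis exists, $\{\alpha\}$ is a conflict-confining \ar-hypothesis. Then $\Conf(\tup{\calT, \calA \cup \{\alpha\}}) = \Conf(\calK)$.

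For the direction from minimality to confinement, I would first note that for every ABox $\Hyp'$, every conflict of \calK remains a conflict of $\tup{\calT, \calA \cup \Hyp'}$, since conflicts are minimal inconsistent subsets and are thus preserved under supersets of the ABox. Hence $\Conf(\calK) \subseteq \Conf(\tup{\calT, \calA \cup \Hyp'})$. A $\leq_c$-minimal \Hyp therefore has at most $|\Conf(\calK)|$ conflicts, and by this inclusion exactly the conflicts of \calK, so it is conflict-confining.

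Conversely, a conflict-confining hypothesis attains the lower bound $|\Conf(\calK)|$, which no hypothesis can undercut. So it is $\leq_c$-minimal.

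\emph{Membership.} Given \Hyp, I will check (i) that \Hyp is an \ar-hypothesis, which is in \coNP by \ar-entailment in \ELbot, and (ii) that \Hyp is conflict-confining, which is in \coNP by \cref{lem:cc-compl}. Both are \coNP conditions, so their conjunction is in \coNP. The key claim applies because condition (i) witnesses that some hypothesis exists.

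\emph{Hardness.} The expected main obstacle is making sure the reduction yields a conflict-confining instance. I would reuse the \coNP-hardness from \cref{thm:elbot-verif-general} for \ar, or transfer the \DLLite construction of \cref{thm:dllite-verif-gen-card} to \ELbot by replacing inverse roles and negations with $\exists$/$\bot$ axioms. In that construction the hypothesis $\{U(a,c)\}$ may introduce conflicts, so I would check that under the key claim, being an \ar-hypothesis coincides with the condition we need. The alternative is a direct reduction from \brave non-entailment in \ELbot, the route \cref{lem:cc-compl} uses:
\begin{itemize}
\item take an instance deciding whether $\{\alpha\}$ is conflict-confining;
\item ask whether $\Hyp = \{\alpha\}$ is a $\leq_c$-minimal \ar-hypothesis, which by \cref{lem:ar-triv} and the key claim is equivalent exactly to that confinement.
\end{itemize}
This gives \coNP-hardness, provided the \coNP-hard instances of \cref{lem:cc-compl} can be arranged so that $\calK$ is inconsistent and $\calK \not\models_\ar \alpha$. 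I would ensure inconsistency by adding a dummy conflict, and non-entailment by choosing $\alpha$ with a fresh concept name.
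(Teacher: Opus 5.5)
Your proposal is correct and follows the paper's route: via \cref{lem:ar-triv}, $\leq_c$-minimal \ar-hypotheses are exactly the conflict-confining ones, so the problem inherits both bounds from \cref{thm:elbot-verif-cc}, whose \coNP-hardness reduction is the same fresh-concept reduction from \cref{thm:elbot-verif-general} that you propose to reuse. The obstacle you anticipate there does not arise, because the hypothesis uses a fresh concept name and is trivially conflict-confining (your trivial-hypothesis reduction via \cref{lem:cc-compl} also works); the \DLLite transfer, by contrast, would fail, since $\{U(a,c)\}$ always introduces a new conflict while $\{A(a)\}$ is conflict-confining, so that candidate is never $\leq_c$-minimal.
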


Finally, $\subseteq$-minimality increases the complexity of verification for both semantics compared to the $\leq$-minimality.
In case of \ar semantics, the non-convexity of the set of \ar-hypotheses comes into play.
In contrast to the case of \DLLite, where \cref{thm:dllite-ar-hyps} allowed us to circumvent this, we can here combine the idea of \cref{ex:ar-non-convex} with an encoding of a $\forall\exists$-QBF to show \PiP-hardness using conjunction.

\begin{restatable}{theorem}{elbotVerifSubset}\label{thm:elbot-verif-subset} 
  Verification of $\subseteq$-minimal $\calS$-hypotheses is
  \begin{enumerate*}[label=(\arabic*)]
    \item \DP-complete for $\calS=\brave$, and
    \item \PiP-complete for $\calS=\ar$.
  \end{enumerate*}
\end{restatable}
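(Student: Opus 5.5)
Both parts split into an upper and a lower bound; I take Brave first.

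\emph{(1) \brave, membership.} \Hyp is a $\subseteq$-minimal \brave-hypothesis iff (a) $\tup{\calT,\calA\cup\Hyp}\models_\brave\alpha$ and (b) no proper subset $\Hyp'\subsetneq\Hyp$ satisfies this. Condition (a) is in \NP: guess a \calT-consistent $\calR\subseteq\calA\cup\Hyp$ with $\tup{\calT,\calR}\models\alpha$. For (b), Brave entailment is monotone in the ABox, because any consistent subset of $\calA\cup\Hyp'$ is also one of $\calA\cup\Hyp$ and extends to a repair. Hence it suffices to check the direct subsets $\Hyp\setminus\{\beta\}$, and (b) becomes a conjunction of polynomially many Brave non-entailment checks, which is in \coNP. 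Altogether this is \DP.

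\emph{(1) \brave, hardness.} I reduce from a \DP-complete problem such as SAT-UNSAT: given a pair $(\varphi,\psi)$, decide whether $\varphi$ is satisfiable and $\psi$ is unsatisfiable. For $\varphi$ I reuse the SAT encoding from the proof of \cref{thm:elbot-exist-sig}, with the assignment assertions placed in the ABox as conflicting pairs $A_x(m),A_{\bar x}(m)$. Then the repairs are the assignments, and $\tup{\calT,\calA\cup\{H(m)\}}\models_\brave A_\varphi(m)$ iff $\varphi$ is satisfiable, provided $A_\varphi$ is guarded by a fresh conjunct, e.g.\ $H\sqcap\bigsqcap_c A_c\sqsubseteq A_\varphi$.

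For $\psi$ I build a second, disjoint copy on fresh names in the same style, with a fresh guard $G$. I add $G(m)$ to the ABox, so that $\psi$-satisfiability alone would make the observation Brave-entailed without any hypothesis. The candidate hypothesis is $\Hyp=\{H(m),H'(m)\}$, and the intended correspondences are:
\begin{itemize}
\item $\alpha$ is entailed by $\Hyp$ iff $\varphi$ is satisfiable, via axioms $A_\varphi\sqcap H'\sqsubseteq O$;
\item the subset $\{H'(m)\}$ alone is a hypothesis iff $\psi$ is satisfiable, via $A_\psi\sqcap H'\sqsubseteq O$ with $A_\psi$ requiring only $G$;
\item $\{H(m)\}$ alone never suffices.
\end{itemize}
The promise condition (non-entailment of $O(m)$ in \calK and inconsistency of \calK) holds because $H'$ is absent from \calK.

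The main obstacle is arranging these gadgets so that the promise holds and that no unintended subset, such as the empty set, yields entailment. I would first try the simplest two-assertion version above and check every subset by hand.

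\emph{(2) \ar, membership in \PiP.} \Hyp is an \ar-hypothesis (a \coNP condition), and for all $\Hyp'\subsetneq\Hyp$ it holds that $\Hyp'$ is not an \ar-hypothesis. The latter is \NP\ per $\Hyp'$, so the combined condition has the form $\forall\exists$, which gives \PiP.

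\emph{(2) \ar, hardness.} By \cref{obs:ar-non-convex}, minimality must be checked globally over all subsets. I reduce from $\forall\exists$-QBF $\forall Y\exists Z.\varphi$ with $\varphi$ in CNF. Let \Hyp contain both $A_y(m)$ and $A_{\bar y}(m)$ for every $y\in Y$, together with an extra assertion $B_1(m)$ as in \cref{ex:ar-non-convex}. The pairs $A_y(m),A_{\bar y}(m)$ conflict with each other, and I rely on these internal conflicts to break \ar-entailment, as in that example. A subset choosing exactly one literal per $y$ then fixes an assignment to $Y$. Let the ABox contain conflicting pairs for $Z$, so that the repairs range over $Z$-assignments. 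I arrange the axioms so that such a subset \ar-entails $\alpha$ iff every $Z$-assignment falsifies $\varphi$ under the chosen $Y$-assignment, while the full \Hyp entails $\alpha$ through $B_1$.

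\Hyp is then minimal iff no $Y$-assignment makes $\forall Z\neg\varphi$ true, that is, iff $\forall Y\exists Z.\varphi$ holds. The delicate step is ensuring that other subsets, including those with both literals or none for some $y$, are non-hypotheses. The non-convexity gadget of \cref{ex:ar-non-convex} should handle this, since including both $A_y$ and $A_{\bar y}$ introduces a conflict that kills \ar-entailment through $B_1$ unless the subset is the full \Hyp.
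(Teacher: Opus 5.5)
Your \brave{} part and the \PiP{} membership in part (2) are fine. In the \brave{} reduction, $\{H(m),H'(m)\}$ is a hypothesis iff $\varphi$ \emph{or} $\psi$ is satisfiable, not iff $\varphi$ is. But $\{H'(m)\}$ is a hypothesis exactly when $\psi$ is satisfiable and the other proper subsets never are, so you still get minimality iff $\varphi$ is satisfiable and $\psi$ is not. This is a legitimate alternative to the paper's reduction from a \brave-entailment/non-entailment pair.

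The gap is the \PiP-hardness for \ar, and there your mechanism defeats itself. The full $\Hyp$ contains both $A_y(m)$ and $A_{\bar y}(m)$ for every $y$. So any conflict that is supposed to kill entailment for a subset containing a complementary pair is present for $\Hyp$ as well, and each repair of $\tup{\calT,\calA\cup\Hyp}$ keeps at most one literal per $y$.

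Concretely, for an assignment $s$ over $Y$ let $\Hyp_s=\{A_\ell(m)\mid \ell\in s\}\cup\{B_1(m)\}$. Suppose, as in your description, that the $Y$-literals occur only in the conflicts $\{A_y(m),A_{\bar y}(m)\}$. Then each $A_\ell(m)$ with $\ell\in s$ lies in no conflict of $\tup{\calT,\calA\cup\Hyp_s}$, hence in every repair $\calR$ of it. Adding $A_{\bar\ell}(m)$ to $\calR$ is therefore inconsistent, so $\calR$ is also a repair of $\tup{\calT,\calA\cup\Hyp}$. Thus whenever $\Hyp$ is an \ar-hypothesis, so is every $\Hyp_s\subsetneq\Hyp$, and your reduction maps every instance to a no-instance.

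A bypass ``through $B_1$'' cannot rescue this. If it uses only what some proper subset also contains, that subset is a hypothesis; copying \cref{ex:ar-non-convex} literally, $\{B_1(m)\}$ already is one. If it needs $A_y$ and $A_{\bar y}$ together, it never fires in a consistent repair.

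The missing idea is to make the full $\Hyp$ \emph{consistent}. The paper takes $\Hyp=\{T_{x,i}(a),F_{x,i}(a)\mid i\le n\}$ with no disjointness between $T_{x,i}$ and $F_{x,i}$. The conjunctive bypass $B_1\sqcap\bigsqcap_i(T_{x,i}\sqcap F_{x,i})\sqsubseteq A$ then fires precisely in repairs that contain all of $\Hyp$. The assignment semantics arises only indirectly, through conflicts such as $B_1'\sqcap T_{x,i}\sqcap C_j\sqsubseteq\bot$ with the ABox term assertions $C_j(a)$, together with the rules $B_1\sqcap C_j\sqsubseteq A$.

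Dropping literals then removes conflicts and makes entailment easier, so subsets omitting both literals of a variable must be excluded separately. The paper does this with the disjoint assertion $B_2(a)$ and the axioms $B_2\sqcap\bigsqcap_i\mathrm{HAVE}_i\sqsubseteq A$ and $T_{x,i},F_{x,i}\sqsubseteq\mathrm{HAVE}_i$. These split the repairs and force every \ar-hypothesis $\Hyp'\subsetneq\Hyp$ to contain a literal for each variable. Neither ingredient comes from the non-convexity example, and your proposal offers no substitute.
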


\begin{proof}[Proof sketch]
  Membership for both semantics is relatively straightforward.
  For hardness under \brave semantics, we use a reduction from the combination of an entailment (\NP-hard) and a non-entailment question (\coNP-hard) under \brave semantics, which is \DP-hard. 

  Hardness for $\calS = \ar$ is more involved.
  Here, we reduce a $\forall\exists$-QBF $\Phi = \exists Y \forall Z \varphi(Y,Z)$ to an \ar-abduction problem $\tup{\calK, A(a)}$ and ABox \Hyp s.t.\ \Hyp is always an \ar-hypothesis of $\tup{\calK, A(a)}$, and there is some \ar-hypothesis $\Hyp' \subsetneq \Hyp$ of $\tup{\calK, A(a)}$ iff $\Phi$ is true.
  Intuitively, subsets $\Hyp' \subsetneq \Hyp$ encode assignments over $Y$.

  The construction builds on two main ideas:
  First, two disjoint concepts $B_1$ and $B_2$ split the set of repairs into two parts.
  Each repair \calR in the first part encodes an assignment over $Z$, and entailment of $A(a)$ in $\tup{\calT, \calR}$ is equivalent to $\varphi$ being true under assignment encoded by $\Hyp'$ and \calR.
  The second part contains a single repair ensuring that $\Hyp'$ encodes a full assignment over $Y$.
  Second, building on the idea from \cref{ex:ar-non-convex}, we use an additional axiom to ensure that \Hyp is always an \ar-hypothesis for $\tup{\calK, A(a)}$ without interfering with the rest of the construction.
\end{proof}

\section{Conclusion and Outlook}

We extend abduction from the classical setting to KBs that admit inconsistencies.
This is achieved by designing properties that govern abductive hypotheses under commonly studied repair semantics.
We present a comprehensive, detailed, and almost complete complexity landscape under all considered criteria, with only the complexity of verification for $\leq_c$-minimal \brave-hypotheses in \ELbot remaining open.
In summary, abduction for \DLLite behaves largely similarly to entailment under the corresponding semantics, with certain cases where it becomes even easier (e.g., the existence of conflict-confining \ar-hypotheses).
Nevertheless, membership in many cases for \DLLite is derived using new insights for considered properties.
For \ELbot, the situation differs, as there are cases where the complexity exceeds the one of the corresponding entailment problem under both semantics.
Our complexity classification highlights how different properties of \emph{preferred} hypotheses affect the considered (\brave and \ar) semantics.

Additional findings are certain effects observed for abduction under the considered semantics.
We mention two notable cases here.
First, \emph{non-convexity} (Observation~\ref{obs:ar-non-convex}) applies to both DLs but leads to higher complexity only for \ELbot, as a workaround exists for \DLLite (see Lemma~\ref{thm:dllite-ar-hyps}).
Second, an observation may admit a conflict-confining \brave-hypothesis even if the observation itself is not conflict-confining for \ELbot KBs (Observation~\ref{obs:elbot-brave-cc}).
Regarding \ELbot, it is also worth noting how the complexity landscape compares to the setting of consistent KBs. 
While brave semantics often behaves as in the consistent case, conflict-confinement increases complexity and has no meaningful counterpart in the consistent setting. 
Moreover, verification in nearly all cases incurs higher complexity, in contrast to the classical setting, where the problem reduces to classical entailment and can be decided efficiently. 
This can be partially explained again by the non-convex nature of the space of admissible hypotheses.

We see several directions to further pursue this work.
Beyond completing the picture by determining the precise complexity for the only remaining open case, it seems interesting, first and foremost, to study the so-called \emph{data complexity} of abductive reasoning.
Here, one looks at the complexity purely in terms of the ABox while keeping the TBox fixed, motivated by the situation in practice, where the amount of data often far exceeds the size of the TBox.
While several of our results for \DLLite already extend to this setting, as the construction for hardness uses a TBox of constant size (e.g.\ \coNP-hardness in \cref{thm:dllite-verif-gen-card}), a systematic complexity analysis remains open for further exploration.

Studying the combinations of properties and seeing their effects on the complexity is another interesting direction to pursue.
We can already make interesting observations that further motivate this investigation. 
First, considering signature-restriction or non-triviality together with conflict-confinement, 
the complexity for both semantics jumps to the second level of {the} polynomial hierarchy.
\begin{restatable}{corollary}{elbotExistSigCC}\label{cor:existence-cc+sig-el}
	For $\ELbot$, 
	the existence problem for signature-restricted (or non-trivial) conflict-confining $\calS$-hypotheses is $\SigmaP$-complete for $\calS\in\{\brave,\ar\}$. 
\end{restatable}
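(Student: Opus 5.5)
For membership, I will give a uniform $\SigmaP$ procedure for all four combinations: existentially guess an ABox $\Hyp$ over $\Sigma$ (resp.\ over the signature of $\tup{\calK,\alpha}$ with $\alpha\notin\Hyp$). Since only individuals from $\calK$ and $\alpha$ are allowed, and I may restrict to candidates contained in the set of all assertions over the relevant signature, $\Hyp$ has polynomial size. I then verify two properties with an $\NP$ oracle. The first is that $\Hyp$ is an $\calS$-hypothesis. For \brave this is an $\NP$ check, since I guess a repair candidate, check it is $\calT$-consistent, and check that it entails $\alpha$. For \ar it is a $\coNP$ check (\cref{thm:elbot-verif-general}). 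The second is that $\Hyp$ is conflict-confining, which is in $\coNP$ by \cref{lem:cc-compl}. Both checks are therefore polynomial with an $\NP$ oracle, which gives $\SigmaP$ membership.

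For hardness under \brave semantics, I will reuse the $\SigmaP$-hardness construction for existence of conflict-confining \brave-hypotheses (\cref{thm:elbot-exist-cc}). There, the intended hypotheses encode assignments over $Y$, using assertions over concepts from the intended $\Sigma$, and any other assertion causes new conflicts. So I take $\Sigma$ to be exactly this intended signature. The only thing to verify is that the trivial hypothesis $\{\alpha\}$ is never conflict-confining in that construction; otherwise it would be a spurious witness. This is what that proof must already ensure, analogous to \cref{ex:brave-cc}, where $\alpha$ conflicts with an ABox assertion. Given this, the same reduction works verbatim for the non-trivial variant.

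For hardness under \ar semantics, I will adapt the $\exists\forall$-QBF reduction from \cref{thm:elbot-exist-nontriv}(2) and \cref{thm:elbot-exist-sig}(2). There the ABox contains $A_z(m),A_{\bar z}(m)$, and non-trivial or signature-restricted \ar-hypotheses correspond to $Y$-assignments. The key issue is conflict-confinement. A hypothesis containing $A_y(m)$ and $A_{\bar y}(m)$ simultaneously would create a new conflict $\{A_y(m),A_{\bar y}(m)\}$, since there are no $Y$-assertions in $\calA$. Every consistent assignment-encoding hypothesis over $Y$ (plus possibly $C(m)$) adds no new conflicts, because the only disjointness axioms concern complementary literals. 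So, restricting to conflict-confining hypotheses, we get exactly partial assignments over $Y$. Such a hypothesis is an \ar-hypothesis iff every repair, i.e.\ every full $Z$-assignment, satisfies some term; otherwise we need to extend to a full assignment. Thus $\Psi$ is true iff a signature-restricted (with $\Sigma=\{A_y,A_{\bar y},C,m\}$) or non-trivial conflict-confining \ar-hypothesis exists. For the non-trivial variant, I must exclude hypotheses containing $A_z$ assertions. Such assertions are harmless anyway: adding $A_z(m)$ creates no new conflict because $\{A_z(m),A_{\bar z}(m)\}$ already exists, and it does not force entailment across all repairs. Still, I will double-check that $C(m)$ together with $Z$-literals cannot trivialise \ar-entailment.

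I expect the main obstacle to be the \brave case: confirming that the construction from \cref{thm:elbot-exist-cc} really forbids $\{\alpha\}$ and any out-of-$\Sigma$ hypothesis. If it does not, I would add a fresh assertion $D(m)$ to $\calA$ with $D\sqcap A_\alpha\sqsubseteq\bot$ for the observation concept. This makes $\{\alpha\}$ non-conflict-confining, while intended hypotheses derive $\alpha$ only in repairs that drop $D(m)$. I would need to check that this does not create new conflicts for those hypotheses; it does not, since the conflict $\{D(m),\dots\}$ is with derived facts whose supports already conflict, in the style of \cref{ex:brave-cc}.
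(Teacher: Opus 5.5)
Your proof is correct. Membership and the \brave{} hardness match the paper's argument: reuse the construction of \cref{thm:elbot-exist-cc} with $\Sigma=\{A_y,A_{\bar y}\mid y\in Y\}\cup\{m\}$.

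The checks you worry about in the \brave{} case are unnecessary. When $\Phi$ is false, the witness built there is already non-trivial and over $\Sigma$. Conversely, any signature-restricted or non-trivial conflict-confining \brave-hypothesis is in particular conflict-confining. So both directions are immediate. In any case, $\{C(m)\}$ is not conflict-confining because of $C\sqcap B_d\sqsubseteq\bot$ and $B_d(m)\in\calA$, so the $D(m)$ fallback is never needed.

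For \ar{} your route differs from the paper's. The paper only cites \cref{thm:elbot-exist-sig}. You effectively use the construction of \cref{thm:elbot-exist-nontriv}(2) with $\Sigma=\{A_y,A_{\bar y},C,m\}$ for both variants. You should commit to that construction explicitly rather than adapting both. Your justification that the only disjointness axioms concern complementary literals is valid only there, and so is putting $C$ into $\Sigma$. In the construction of \cref{thm:elbot-exist-sig}, $C(m)$ is the observation itself, so $\{C(m)\}$ would be a signature-restricted conflict-confining witness.

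Your choice actually buys something. The construction of \cref{thm:elbot-exist-sig} does handle the signature-restricted variant, but it needs a small extra argument. Its witness $\{A_\ell(m)\mid \ell\in s\}$ is conflict-confining because no part of $s$, together with a partial assignment over $Z$, satisfies every clause. Reused verbatim, however, it fails for the non-trivial variant. The set $\{V_y(m)\mid y\in Y\}$ creates no new conflicts, and it is an \ar-hypothesis whenever no assignment over $Z$ alone satisfies every clause. An example is $\varphi=(y\lor z)\land(\neg y\lor\neg z)$, where $\exists y\forall z\,\neg\varphi$ is false. In the construction of \cref{thm:elbot-exist-nontriv}(2), conflict-confinement of the intended witnesses is immediate and such spurious witnesses cannot arise.
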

Interestingly, when considering \SignatureRestriction, one might expect that non-triviality can be obtained ``for free'' by omitting from the signature some of the symbols of the observation, disallowing the trivial hypothesis.
However, a non-trivial hypothesis may reuse the symbols of the observation in a different manner e.g., to satisfy an existential restriction involving the concept from the observation.
This effect is illustrated in the following example, and motivates to also study the combination of non-triviality with \SignatureRestriction.
\begin{example}\label{ex:nt+sig}
	Let $\calK\dfn \tup{\calT,\calA}$ be the KB with $\calT\dfn \{A\sqcap B \subsum C, D\sqcap \exists r.C\subsum A\}$ and $\calA\dfn \{B(m),r(m,n)\}$.
	Moreover, let $\alpha\dfn C(m)$ and $\Sigma \coloneqq \{C,D,m,n\}$.
	Here, the trivial hypothesis $\{C(m)\}$ is in fact also \SignatureRestricted, while the hypothesis $\Hyp_1\dfn\{A(m)\}$ is non-trivial, but outside our signature.
  Finally, $\Hyp \dfn \{C(n),D(m)\}$ is non-trivial and \SignatureRestricted.
  It uses $C$ and $m$, that is, both the concept name and the individual from the observation.
\end{example}
Another prominent future direction is to consider expressive observations in an abduction problem.
This includes extending our analysis from \textit{flat} to \textit{complex} concepts in an observation, as well as to Boolean (conjunctive) queries. The hardness results from our work already transfer, whereas we anticipate increased complexity in certain cases.
Additional future work includes exploring alternative definitions of \emph{preferred} hypotheses, such as semantically minimal ones~\cite{du2015towards}.
In classical abduction, a hypothesis \Hyp is \emph{semantically minimal}, if there exists no hypothesis $\Hyp'$ such that
$\tup{\calT, \calA \cup \Hyp} \models \Hyp'$, but $\tup{\calT, \calA \cup \Hyp'} \not\models\Hyp$.
We argue that while such a minimality criterion is natural for \ar semantics, its meaning is unclear for $\brave$-hypotheses. 
Further exploration of this minimality criterion is therefore left for future work.

Allowing fresh individuals in hypotheses also seems interesting, as it may impact the complexity in certain cases.
Indeed, even some of our hardness proofs rely on the fact that we do not admit fresh individuals 
(e.g., Theorems~\ref{thm:elbot-exist-general} and~\ref{thm:elbot-exist-nontriv}).
With fresh individuals in the hypotheses, an intriguing case arises for $\ELbot$:
(1) existence of non-trivial \brave-hypotheses becomes easy, since one now only needs to consider direct subsumees of the concept appearing in the
observation, and
(2) existence of conflict-confining hypothesis might get harder as there is no obvious polynomial bound for the size of hypotheses in this setting.
Indeed, it is known that admitting fresh individuals in the signature-restricted setting may lead to exponentially large hypotheses~\cite{Koopmann21a}.

\section*{Acknowledgments}

We thank all anonymous reviewers for their valuable feedback that helped us improve the exposition of our paper.
The third author appreciates funding by the  Deutsche Forschungsgemeinschaft (DFG, German Research Foundation), grant TRR 318/1 2021 – 438445824.

\bibliographystyle{plainurl}
\bibliography{bibliography}

@article{Liberatore05,
  author       = {Paolo Liberatore},
  title        = {Redundancy in logic {I:} {CNF} propositional formulae},
  journal      = {Artif. Intell.},
  volume       = {163},
  number       = {2},
  pages        = {203--232},
  year         = {2005},
  url          = {https://doi.org/10.1016/j.artint.2004.11.002},
  doi          = {10.1016/J.ARTINT.2004.11.002}
}

@inproceedings{Rosati11,
  author       = {Riccardo Rosati},
  editor       = {Toby Walsh},
  title        = {On the Complexity of Dealing with Inconsistency in Description Logic
                  Ontologies},
  booktitle    = {{IJCAI} 2011, Proceedings of the 22nd International Joint Conference
                  on Artificial Intelligence},
  pages        = {1057--1062},
  publisher    = {{IJCAI/AAAI}},
  year         = {2011},
  url          = {https://doi.org/10.5591/978-1-57735-516-8/IJCAI11-181},
  doi          = {10.5591/978-1-57735-516-8/IJCAI11-181}
}

@inproceedings{BienvenuIK24,
  author       = {Meghyn Bienvenu and
                  Katsumi Inoue and
                  Daniil Kozhemiachenko},
  editor       = {Pierre Marquis and
                  Magdalena Ortiz and
                  Maurice Pagnucco},
  title        = {Abductive Reasoning in a Paraconsistent Framework},
  booktitle    = {Proceedings of the 21st International Conference on Principles of
                  Knowledge Representation and Reasoning, {KR} 2024},
  year         = {2024},
  url          = {https://doi.org/10.24963/kr.2024/13},
  doi          = {10.24963/KR.2024/13}
}

@book{DBLP:books/daglib/0092426,
	author = {Nicholas Pippenger},
	publisher = {Cambridge University Press},
	title = {Theories of computability},
	year = {1997}}

@inproceedings{lukasiewicz2022explanations,
  author       = {Thomas Lukasiewicz and
                  Enrico Malizia and
                  Cristian Molinaro},
  title        = {Explanations for Negative Query Answers under Inconsistency-Tolerant
                  Semantics},
	booktitle = {Proceedings of the Thirty-First International Joint Conference on
Artificial Intelligence, {IJCAI-22}},
  pages        = {2705--2711},
  publisher    = {ijcai.org},
  year         = {2022},
  doi          = {10.24963/IJCAI.2022/375},
}

@article{bienvenu2019computing,
  title={Computing and explaining query answers over inconsistent {DL-Lite} knowledge bases},
  author={Bienvenu, Meghyn and Bourgaux, Camille and Goasdou{\'e}, Fran{\c{c}}ois},
  journal={Journal of Artificial Intelligence Research},
  volume={64},
  pages={563--644},
  year={2019}
}

@inproceedings{COST-BASED,
  TITLE = {Data Complexity of Querying Description Logic Knowledge Bases under Cost-Based Semantics},
  AUTHOR = {Bienvenu, Meghyn and Maniere, Quentin},
  URL = {https://hal.science/hal-05360907},
  BOOKTITLE = {Proceedings of the Annual AAAI Conference on Artificial Intelligence},
  ADDRESS = {Singapore, Singapore},
  YEAR = {2026},
  publisher = {AAAI Press}
}

@article{peirce1878deduction,
  title={Deduction, induction and hypothesis: Popular Science Monthly, v. 13},
  author={Peirce, CS},
  year={1878}
}

@inproceedings{ignatiev2019abduction,
  title={Abduction-based explanations for machine learning models},
  author={Ignatiev, Alexey and Narodytska, Nina and Marques-Silva, Joao},
  booktitle={Proceedings of the AAAI Conference on Artificial Intelligence},
  volume={33},
  pages={1511--1519},
  year={2019}
}

@inproceedings{SchlobachC03,
  author       = {Stefan Schlobach and
                  Ronald Cornet},
  title        = {Non-Standard Reasoning Services for the Debugging of Description Logic
                  Terminologies},
  booktitle    = {Proceedings of the Eighteenth International Joint Conference
                  on Artificial Intelligence 2003},
  pages        = {355--362},
  publisher    = {Morgan Kaufmann},
  year         = {2003},
  bibsource    = {dblp computer science bibliography, https://dblp.org}
}

@inproceedings{Koopmann21a,
  author       = {Patrick Koopmann},
  title        = {Signature-Based Abduction with Fresh Individuals and Complex Concepts
                  for Description Logics},
  booktitle    = {Proceedings of the Thirtieth International Joint Conference on Artificial
                  Intelligence},
  pages        = {1929--1935},
  publisher    = {ijcai.org},
  year         = {2021},
  url          = {https://doi.org/10.24963/ijcai.2021/266},
  bibsource    = {dblp computer science bibliography, https://dblp.org}
}

@article{Klarman2011,
  title={{ABox} abduction in the description logic {$\mathcal{ALC}$}},
  author={Klarman, Szymon and Endriss, Ulle and Schlobach, Stefan},
  journal={Journal of Automated Reasoning},
  volume={46},
  number={1},
  pages={43--80},
  year={2011},
  publisher={Springer}
}

@inproceedings{BaaderKKN22,
  author       = {Franz Baader and
                  Patrick Koopmann and
                  Francesco Kriegel and
                  Adrian Nuradiansyah},
  title        = {Optimal {ABox} Repair w.r.t. Static {$\mathcal{EL}$ TBoxes}: From Quantified
                  {ABoxes} Back to {ABoxes}},
  booktitle    = {The Semantic Web -- 19th International Conference, {ESWC} 2022},
  pages        = {130--146},
  year         = {2022},
  bibsource    = {dblp computer science bibliography, https://dblp.org},
  publisher    = {Springer},
  series       = {Lecture Notes in Computer Science},
  volume       = {13261},
}

@inproceedings{BienvenuR13,
  author       = {Meghyn Bienvenu and
                  Riccardo Rosati},
  editor       = {Francesca Rossi},
  title        = {Tractable Approximations of Consistent Query Answering for Robust
                  Ontology-based Data Access},
  booktitle    = {{IJCAI} 2013, Proceedings of the 23rd International Joint Conference
                  on Artificial Intelligence, Beijing, China, August 3-9, 2013},
  pages        = {775--781},
  publisher    = {{IJCAI/AAAI}},
  year         = {2013},
  url          = {http://www.aaai.org/ocs/index.php/IJCAI/IJCAI13/paper/view/6904},
  bibsource    = {dblp computer science bibliography, https://dblp.org}
}

@inproceedings{du2015towards,
  author       = {Jianfeng Du and
                  Kewen Wang and
                  Yi{-}Dong Shen},
  title        = {Towards Tractable and Practical {ABox} Abduction over Inconsistent Description
                  Logic Ontologies},
  booktitle    = {Proceedings of the Twenty-Ninth {AAAI} Conference on Artificial Intelligence},
  pages        = {1489--1495},
  publisher    = {{AAAI} Press},
  year         = {2015},
  url          = {https://doi.org/10.1609/aaai.v29i1.9393},
}

@inproceedings{BienvenuB16,
  author       = {Meghyn Bienvenu and
                  Camille Bourgaux},
  title        = {Inconsistency-Tolerant Querying of Description Logic Knowledge Bases},
  booktitle    = {Reasoning Web: Logical Foundation of Knowledge Graph Construction
                  and Query Answering --- 12th International Summer School 2016},
  series       = {Lecture Notes in Computer Science},
  volume       = {9885},
  pages        = {156--202},
  publisher    = {Springer},
  year         = {2016},
  doi          = {10.1007/978-3-319-49493-7\_5},
}

@inproceedings{WeiKleinerDragisicLambrix2014,
  author    = {Fang Wei{-}Kleiner and
               Zlatan Dragisic and
               Patrick Lambrix},
  title     = {Abduction Framework for Repairing Incomplete $\mathcal{EL}$
Ontologies: Complexity
               Results and Algorithms},
  booktitle = {Proceedings of the Twenty-Eighth {AAAI} Conference on Artificial
Intelligence},
  pages     = {1120--1127},
  year      = {2014},
  publisher = {AAAI Press},
  url       = {http://www.aaai.org/ocs/index.php/AAAI/AAAI14/paper/view/8239},
  bibsource = {dblp computer science bibliography, https://dblp.org}
}

@book{DBLP:books/daglib/0041477,
  author       = {Franz Baader and
                  Ian Horrocks and
                  Carsten Lutz and
                  Ulrike Sattler},
  title        = {An Introduction to Description Logic},
  publisher    = {Cambridge University Press},
  year         = {2017}
}

@article{LLRRS-JWS-15,
	author       = {Domenico Lembo and
	Maurizio Lenzerini and
	Riccardo Rosati and
	Marco Ruzzi and
	Domenico Fabio Savo},
	title        = {Inconsistency-tolerant query answering in ontology-based data access},
	journal      = {J. Web Semant.},
	volume       = {33},
	pages        = {3--29},
	year         = {2015},
	doi          = {10.1016/J.WEBSEM.2015.04.002},
}

@inproceedings{Elsenbroich2006,
  author    = {Corinna Elsenbroich and
               Oliver Kutz and
               Ulrike Sattler},
  title     = {A Case for Abductive Reasoning over Ontologies},
  booktitle = {Proceedings of the OWLED'06 Workshop on {OWL:} Experiences and
Directions},
  year      = {2006},
  url       = {http://ceur-ws.org/Vol-216/submission\_25.pdf},
  bibsource = {dblp computer science bibliography, https://dblp.org},
  series    = {{CEUR} Workshop Proceedings},
  volume    = {216},
  publisher = {CEUR-WS.org},
}

@inproceedings{Ceylan2020,
  author       = {{\.I}smail {\.I}lkan Ceylan and
                  Thomas Lukasiewicz and
                  Enrico Malizia and
                  Cristian Molinaro and
                  Andrius Vaicenavicius},
  title        = {Explanations for Negative Query Answers under Existential Rules},
  booktitle    = {Proceedings of the 17th International Conference on Principles of
                  Knowledge Representation and Reasoning, {KR} 2020},
  pages        = {223--232},
  year         = {2020},
  url          = {https://doi.org/10.24963/kr.2020/23},
  publisher    = {{AAAI} Press},
  bibsource    = {dblp computer science bibliography, https://dblp.org}
}

@article{Calvanese2013,
  author    = {Diego Calvanese and
               Magdalena Ortiz and
               Mantas Simkus and
               Giorgio Stefanoni},
  title     = {Reasoning about Explanations for Negative Query Answers in
{DL-Lite}},
  journal   = {J. Artif. Intell. Res.},
  volume    = {48},
  pages     = {635--669},
  year      = {2013},
  url       = {https://doi.org/10.1613/jair.3870},
  doi       = {10.1613/jair.3870},
  bibsource = {dblp computer science bibliography, https://dblp.org}
}

@inproceedings{Haifani2022,
  author       = {Fajar Haifani and
                  Patrick Koopmann and
                  Sophie Tourret and
                  Christoph Weidenbach},
  editorX       = {Jasmin Blanchette and
                  Laura Kov{\'{a}}cs and
                  Dirk Pattinson},
  title        = {Connection-Minimal Abduction in {$\mathcal{EL}$} via Translation to {FOL}},
  booktitle    = {Proceedings of the 11th International Joint Conference on Automated Reasoning {IJCAR}
                  2022},
  series       = {Lecture Notes in Computer Science},
  volume       = {13385},
  pages        = {188--207},
  publisher    = {Springer},
  year         = {2022},
  url          = {https://doi.org/10.1007/978-3-031-10769-6\_12},
  bibsource    = {dblp computer science bibliography, https://dblp.org}
}

@inproceedings{Koopmann2020,
  author       = {Patrick Koopmann and
                  Warren Del{-}Pinto and
                  Sophie Tourret and
                  Renate A. Schmidt},
  title        = {Signature-Based Abduction for Expressive Description Logics},
  booktitle    = {Proceedings of the 17th International Conference on Principles of
                  Knowledge Representation and Reasoning, {KR} 2020},
  pages        = {592--602},
  year         = {2020},
  publisher    = {{AAAI} Press},
  url          = {https://doi.org/10.24963/kr.2020/59},
  bibsource    = {dblp computer science bibliography, https://dblp.org}
}

@inproceedings{du2017practical,
  author    = {Jianfeng Du and
               Hai Wan and
               Huaguan Ma},
  title     = {Practical {TBox} Abduction Based on Justification Patterns},
  booktitle = {Proceedings of the Thirty-First {AAAI} Conference on Artificial
Intelligence},
  booktitleREST={,
               February 4-9, 2017, San Francisco, California, {USA.}},
  pages     = {1100--1106},
  year      = {2017},
  url       = {http://aaai.org/ocs/index.php/AAAI/AAAI17/paper/view/14402},
  bibsource = {dblp computer science bibliography, https://dblp.org}
}

@article{PARACONSISTENT,
  author       = {Frederick Maier and
                  Yue Ma and
                  Pascal Hitzler},
  title        = {Paraconsistent {OWL} and related logics},
  journal      = {Semantic Web},
  volume       = {4},
  number       = {4},
  pages        = {395--427},
  year         = {2013},
  url          = {https://doi.org/10.3233/SW-2012-0066},
  doi          = {10.3233/SW-2012-0066},
  bibsource    = {dblp computer science bibliography, https://dblp.org}
}

@inproceedings{AlrabbaaBFHKKKK24,
  author       = {Christian Alrabbaa and
                  Stefan Borgwardt and
                  Tom Friese and
                  Anke Hirsch and
                  Nina Knieriemen and
                  Patrick Koopmann and
                  Alisa Kovtunova and
                  Antonio Kr{\"{u}}ger and
                  Alexej Popovic and
                  Ida S. R. Siahaan},
  editor       = {Pierre Marquis and
                  Magdalena Ortiz and
                  Maurice Pagnucco},
  title        = {Explaining Reasoning Results for {OWL} Ontologies with {Evee}},
  booktitle    = {Proceedings of the 21st International Conference on Principles of
                  Knowledge Representation and Reasoning, {KR} 2024, Hanoi, Vietnam.
                  November 2-8, 2024},
  year         = {2024},
  url          = {https://doi.org/10.24963/kr.2024/67},
  doi          = {10.24963/KR.2024/67},
  bibsource    = {dblp computer science bibliography, https://dblp.org}
}

\clearpage
\appendix
\section{Full Proofs for \cref{sec:abd}}

\ccCompl*
\begin{proof}
  Let $\calK = \tup{\calT, \calA}$.
  We begin with the case of \DLLite.
  For membership, recall that conflicts in \DLLite are of size at most $2$.
  Hence, we can check whether \calH is conflict-confining for \calK by checking that no pair of assertions from \Hyp is \calT-inconsistent, and that for each pair of some $\gamma_1 \in \calA$ and $\gamma_2 \in \calH$, either $\{\gamma_1\}$ is \calT-inconsistent, or $\{\gamma_1, \gamma_2\}$ is \calT-consistent.
  As consistency and inconsistency can both be checked in \NL for \DLLite, this yields an \NL-algorithm.

  For hardness, we reduce from directed non-reachability.
  Given a directed graph $G = (V,E)$ and $s,t \in V$, define
  \begin{align*}
    \calT \coloneqq {} &\{\, A_v \sqsubseteq A_w \mid (v,w) \in E \text{ and } t \not\in \{v,w\} \,\} \cup {} \\
    &\{A_t \sqsubseteq A_s\} \cup \{\, A_v \sqsubseteq \neg A_t \mid (v,t) \in E \,\}
  \end{align*}
  and $\calK \coloneqq \tup{\calT, \emptyset}$.
  The TBox \calT for the most part simply encodes edges of $G$ as concept inclusions, but removes all edges incident on $t$, adds an edge from $t$ to $s$, and adds disjointness between $t$ and every predecessor of $t$.
  Hence, $\{A_t(a)\}$ is conflict-confining for \calK iff there is no $s$-$t$-path in $G$.
  Note that the reduction from reachability in the proof of \cref{thm:dllite-verif-gen-card} is a simpler variant of this construction.

	We now turn to \ELbot.
	For membership, one can guess a set $\calC\subseteq \calA\cup\calH$ such that (i) $\tup{\calT,\calC}\models \bot$, (ii) $\tup{\calT,\calC'}\not\models \bot$ for any $\calC'=\calC\setminus\{\gamma\}$ with $\gamma\in \calC$, and (iii) $\calC\not\subseteq \calA$.
	The verification can be done in polynomial time and ensures that $\calC$ is indeed a conflict in $\tup{\calT,\calA\cup\calH}$ but not in $\calK$, hence $\calH$ is not conflict-confining in $\calK$.
	
	For hardness, we reduce from non-entailment under \brave semantics.
	Given a KB $\calK \dfn \tup{\calT,\calA}$ and concept assertion $\alpha = C(a)$, let $\calK'\dfn \tup{\calT',\calA}$ where $\calT'\dfn \calT\cup\{A\sqcap C\subsum \bot\}$ for a fresh concept $A$.
	Moreover, define the ABox $\calH\dfn \{A(a)\}$.
	We prove the following claim 
  \begin{claim}
	  $\calK\not\models_\brave C(a)$ iff $\calH$ is conflict-confining in $\calK$.
  \end{claim}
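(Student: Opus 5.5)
We prove the claim (reading ``conflict-confining in \calK'' as ``in $\calK'$'', the KB built with $\calT'$) by first characterising the conflicts that $\calH = \{A(a)\}$ can newly introduce, and then relating them to repairs of \calK. Throughout, we use that $A$ is fresh: it occurs neither in \calT, nor in \calA, nor in $C$.

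\emph{Step 1: adding $A$ changes nothing without $A$-assertions.} For every $\calS \subseteq \calA$, \calS is $\calT'$-consistent iff it is \calT-consistent. Any model of $\tup{\calT,\calS}$ becomes a model of $\tup{\calT',\calS}$ once we interpret $A$ as empty, and the converse is immediate. Consequently $\Conf(\calK') = \Conf(\calK)$, and $\calK'$ and \calK have the same repairs. Since $\calH = \{A(a)\}$, every conflict of $\tup{\calT', \calA \cup \calH}$ not in $\Conf(\calK')$ must contain $A(a)$.

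\emph{Step 2: characterising sets containing $A(a)$.} Let $\calS \subseteq \calA$ be \calT-consistent. We show that $\calS \cup \{A(a)\}$ is $\calT'$-inconsistent iff $\tup{\calT,\calS} \models C(a)$.
\begin{itemize}
  \item Suppose $\tup{\calT,\calS} \models C(a)$. Then every model of $\tup{\calT',\calS\cup\{A(a)\}}$ places $a$ in $A \sqcap C$, which contradicts $A \sqcap C \sqsubseteq \bot$.
  \item Suppose instead that some model \calI of $\tup{\calT,\calS}$ has $a^\calI \notin C^\calI$. Set $A^\calI \coloneqq \{a^\calI\}$. Freshness of $A$ means \calT and $C$ are unaffected, so this yields a model of $\tup{\calT',\calS\cup\{A(a)\}}$.
\end{itemize}

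\emph{Step 3: the two directions of the claim.}

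($\Leftarrow$, by contraposition.) Suppose \calH is not conflict-confining. By Step 1 there is a new conflict $\calS \cup \{A(a)\}$ with $\calS \subseteq \calA$. By subset-minimality of conflicts, \calS is $\calT'$-consistent, hence \calT-consistent by Step 1. Step 2 then gives $\tup{\calT,\calS} \models C(a)$. Extending \calS to a repair \calR of \calK gives $\tup{\calT,\calR} \models C(a)$ by monotonicity, so $\calK \models_\brave C(a)$.

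($\Rightarrow$, by contraposition.) Suppose $\calK \models_\brave C(a)$, witnessed by a repair \calR with $\tup{\calT,\calR}\models C(a)$. By Step 2, $\calR \cup \{A(a)\}$ is $\calT'$-inconsistent, so it contains a subset-minimal $\calT'$-inconsistent subset, i.e.\ a conflict of $\tup{\calT', \calA \cup \calH}$. This conflict cannot be contained in \calR, because \calR is $\calT'$-consistent by Step 1. It therefore contains $A(a)$ and is not in $\Conf(\calK')$, so \calH is not conflict-confining.

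\emph{Main obstacle and remaining checks.} The only delicate point is the model-extension argument in Steps 1 and 2. It needs $A$ to be fresh, so that reinterpreting $A$ does not affect \calT or $C$. The argument is unchanged if $C$ is a complex \ELbot concept, since $A \sqcap C \sqsubseteq \bot$ is still an \ELbot axiom.

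It remains to confirm that the reduction is valid. $\calK'$ is built in polynomial time (indeed logspace). It is inconsistent exactly when \calK is, since Step 1 shows the two have the same conflicts. Because \brave entailment for \ELbot is \NP-hard, the claim therefore yields \coNP-hardness of checking conflict-confinement.
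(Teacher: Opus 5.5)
Your proof is correct and follows essentially the same route as the paper: both directions rest on the fact that, for a $\mathcal{T}$-consistent $\mathcal{S}\subseteq\mathcal{A}$, the set $\mathcal{S}\cup\{A(a)\}$ is $\mathcal{T}'$-inconsistent exactly when $\langle\mathcal{T},\mathcal{S}\rangle\models C(a)$, combined with extending consistent supports to repairs. Your version is somewhat more careful than the paper's. You make the model-extension argument for the fresh name $A$ explicit, and in the ($\Rightarrow$) direction you take an arbitrary minimal $\mathcal{T}'$-inconsistent subset of $\mathcal{R}\cup\{A(a)\}$, whereas the paper uses a minimal support $\mathcal{C}'\cup\{A(a)\}$ and leaves its minimality as a conflict implicit.
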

  \begin{claimproof}
	  Suppose $\tup{\calT,\calR}\models C(a)$ for some repair $\calR$ of $\calK$. 
	  Moreover, let $\calC'\subseteq \calR$ be the smallest set such that $\tup{\calT,\calC}\models C(a)$. It follows that $\tup{\calT',\calC} \models \bot$ where $\calC\dfn \calC'\cup\calH$. As a result, $\calC$ is a new conflict in $\calK'$ and hence $\calH$ is not conflict-confining in $\calK'$.
	  
	  Conversely, suppose $\calK\not\models_\brave C(a)$.
	  Then, there is no $\calC\subseteq \calA$ such that $\tup{\calT,\calC}\models C(a)$, as otherwise one can extend $\calC$ to a repair entailing $C(a)$.
	  Moreover, there is also no set $\calC\subseteq \calA\cup\calH$ such that $\tup{\calT',\calC}\models C(a)$. 
	  Consequently, no new conflict arises due to $\calH$ in $\calK'$ since such a conflict must entail $C(a)$ in $\calT'$. We conclude that $\calH$ is conflict-confining in $\calK'$.
  \end{claimproof}
\end{proof}

\arTriv*
\begin{proof}
  The implication \emph{(2) $\Rightarrow$ (1)} is obvious.

  \emph{(3) $\Rightarrow$ (2)}:
	Let $\calK_\alpha \coloneqq \tup{\calT, \calA \cup \{\alpha\}}$.
	We show that $\calK_\alpha \models_{\text{AR}} \alpha$, that is, $\tup{\calT, \calR} \models \alpha$ for all repairs $\calR$ of $\calK_\alpha$.
  Consider any repair \calR of $\calK_\alpha$.
  It is sufficient to show that $\alpha \in R$, as this readily implies $\tup{\calT, \calR} \models \alpha$.
  For the sake of contradiction, assume $\alpha \not\in \calR$.
	As $\calR$ is a repair of $\calK_\alpha$, it is also a repair of $\calK$.
	As $\{\alpha\}$ is conflict-confining, we have $\tup{\calT, \calR \cup \{\alpha\}} \not\models \bot$.
	Therefore, $\calR$ is not maximally consistent, which is a contradiction.
	
  If $\{\alpha\}$ is conflict-confining, then it is $\preceq$-minimal by our assumption that $\calK\not\models_\ar\alpha$	and the fact that $\{\alpha\}$ is of size $1$.
  Furthermore, it is $\preceq_c$-minimal by definition as it does not introduce any additional conflicts.

  \emph{(1) $\Rightarrow$ (3)}:
  We show this by proving the contrapositive.
	Let $\calK = \tup{\calT, \calA}$.
  If $\{\alpha\}$ is not conflict-confining for $\calK$, there is a repair $\calR$ of $\calK$ such that $\tup{\calT, \calR \cup \{\alpha\}} \models \bot$.
	For the sake of contradiction, assume that there is a hypothesis $\Hyp$ of $\alpha$ in $\calK$, that is: for all repairs $\calR''$ of $\calK_{\Hyp} \coloneqq \tup{\calT, \calA \cup \Hyp}$, we have $\tup{\calT, \calR''} \models \alpha$.
	As $\calR$ is consistent with $\calT$, there is some repair $\calR'$ of $\calK_{\Hyp}$ with $\calR \subseteq \calR'$.
	But then we have $\tup{\calT, \calR'} \models \tup{\calT, \calR \cup \{\alpha\}} \models \bot$, which is a contradiction.
\end{proof}

\section{Full Proofs for \cref{sec:dllite}}

\dlliteVerifGenCard*
\begin{proof}
  We begin by showing membership.
  Given an \calS-abduction problem $\tup{\calK, \alpha}$ for some \DLLite KB $\calK = \tup{\calT, \calA}$, and an ABox \Hyp, we can verify that \Hyp is an \calS-hypothesis for $\tup{\calK, \alpha}$ by checking that $\tup{\calT, \calA \cup \Hyp} \models_\calS \alpha$.
  Hence, the complexity of \calS-entailment yields an upper bound of \NL for $\calS = \brave$ and \coNP for $\calS = \ar$.
  Further, any $\leq$-minimal \calS-hypothesis is of size $1$ by \cref{lem:ar-triv} and \cref{lem:dllite-singleton-hyps} for $\calS = \brave$ and $\calS = \ar$, resp.
  Consequently, for $\leq$-minimality we only need to check that $|\Hyp| = 1$, which does not change the complexity.

  We next show \NL-hardness for $\calS = \brave$, for both general and $\leq$-minimal hypotheses.
  This can be done by a straightforward reduction from reachability in directed graphs, more direct than the one from directed non-reachability in the proof of \cref{lem:cc-compl}.
	Let $G = (V,E)$ be a directed graph and $s,t \in V$.
	Define $\calT' \coloneqq \{\, A_v \sqsubseteq A_w \mid (v,w) \in E \,\}$.
	Now there is an $s$-$t$-path in $G$ iff $\tup{\calT', \{A_s(a)\}} \models A_t(a)$.
	To obtain a \brave-abduction problem, we add an artifical inconsistency.
	Let $\calK \coloneqq \tup{\calT, \calA}$, where $\calT \coloneqq \calT' \cup \{B_1 \sqsubseteq \neg B_2\}$ and $\calA \coloneqq \{B_1(b), B_2(b)\}$.
  Further, let $\Hyp \coloneqq \{A_s(a)\}$.
	Obviously, $\calK \models \bot$ and $\calK \not\models_\brave A_t(a)$.
  It is easy to see that this reduction is correct, which is stated in the following claim for later reference.
  \begin{claim}\label{cl:redu-reach}
    There is an $s$-$t$-path in $G$ iff \Hyp is a \brave-hypothesis for $\tup{\calK, A_t(a)}$.
  \end{claim}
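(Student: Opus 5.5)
We prove the claim directly from the construction: $\calT = \calT' \cup \{B_1 \sqsubseteq \neg B_2\}$, $\calA = \{B_1(b), B_2(b)\}$ and $\Hyp = \{A_s(a)\}$. The plan is to first describe the repairs of $\tup{\calT, \calA \cup \Hyp}$ explicitly, and then reduce \brave-entailment of $A_t(a)$ to classical entailment from $\tup{\calT', \{A_s(a)\}}$. That classical entailment in turn corresponds to reachability in $G$.

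\emph{Step 1: the repairs.} The only negative inclusion in $\calT$ is $B_1 \sqsubseteq \neg B_2$. The names $B_1, B_2$ do not occur in $\calT'$, and $\calT'$ contains only positive inclusions between the names $A_v$. Hence every subset of $\calA \cup \Hyp$ that does not contain both $B_1(b)$ and $B_2(b)$ is \calT-consistent. In particular $\{A_s(a)\}$ is \calT-consistent, and it does not interact with the conflict $\{B_1(b), B_2(b)\}$. So the repairs of $\tup{\calT, \calA \cup \Hyp}$ are exactly $\calR_i = \{B_i(b), A_s(a)\}$ for $i \in \{1,2\}$.

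\emph{Step 2: reduction to classical entailment.} By definition, $\Hyp$ is a \brave-hypothesis iff $\tup{\calT, \calR_i} \models A_t(a)$ for some $i$. For each $i$, this holds iff $\tup{\calT', \{A_s(a)\}} \models A_t(a)$. To see the nontrivial direction, take any model of $\tup{\calT', \{A_s(a)\}}$ and interpret $B_1$ and $B_2$ as the empty set, except that $b$ is placed in $B_i$. This gives a model of $\tup{\calT, \calR_i}$ with the same extension of $A_t$. The other direction follows from monotonicity, since $\calT' \subseteq \calT$ and $\{A_s(a)\} \subseteq \calR_i$.

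\emph{Step 3: reachability.} We show that $\tup{\calT', \{A_s(a)\}} \models A_t(a)$ iff $t$ is reachable from $s$ in $G$.
\begin{itemize}
\item If there is a path $s = v_0, \dots, v_k = t$, then chaining the inclusions $A_{v_j} \sqsubseteq A_{v_{j+1}}$ along the path gives $A_t(a)$.
\item Conversely, suppose $t$ is not reachable from $s$. Take the single-element model with domain $\{a\}$ in which $a \in A_v$ iff $v$ is reachable from $s$. This interpretation satisfies every inclusion $A_v \sqsubseteq A_w$ for $(v,w) \in E$, because reachability is closed under edges. It satisfies $A_s(a)$, but $a \notin A_t$.
\end{itemize}

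None of these steps is a real obstacle. The only point needing care is the model-extension argument in Step 2, which shows that the dummy inconsistency over the fresh names $B_1, B_2$ neither blocks nor contributes to entailment of $A_t(a)$.
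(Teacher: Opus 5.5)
Your proof is correct and takes the same route as the paper. The paper only notes that an $s$-$t$-path exists iff $\tup{\calT', \{A_s(a)\}} \models A_t(a)$, and that the dummy conflict $\{B_1(b), B_2(b)\}$ over fresh names does not matter, leaving the claim as ``easy to see''; you make the repair analysis and the model-extension step explicit.

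One minor point: your one-element countermodel in Step~3 forces $a$ and $b$ to denote the same element. That is harmless without the unique name assumption, and if that assumption is adopted, a two-element domain with $b$ in no $A_v$ works equally well.
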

  Further, since \Hyp is a singleton set, \Hyp is a \brave-hypothesis iff it is a $\leq$-minimal \brave-hypothesis.
  The reduction can be computed in logarithmic space, showing \NL-hardness under logspace many-one reductions.

  We now show \coNP-hardness for verification of general hypotheses in case of $\calS = \ar$.
  Here, we reuse the following reduction from unsatisfiability to AR-entailment~\cite{bienvenu2019computing}.
	Let $\varphi = \{c_1, \dots, c_k\}$ over propositions $X=\{x_1,\dots, x_n\}$, where the $c_i$ are clauses.
  We construct $\calK = \tup{\calT, \calA}$ using a single concept name $A$ and role names $N = \{P, N, U\} $, defining
	\begin{align*}
	  \calT \coloneqq {} & \{\exists U \subsum A, \exists P^{-} \subsum \neg \exists N^{-} \} \cup {} \\
    & \{ \exists P \subsum \neg \exists U^{-}, \exists N \subsum \neg \exists U^{-} \}, \text{ and} \\
	  \calA \coloneqq {} & \{P(c_j,x_i) \mid x_i \in c_j \} \cup \{N(c_j,x_i) \mid \neg x_i \in c_j \} \cup {} \\
    & \{U(a,c_j) \mid j \leq k \}.
  \end{align*}
	Moreover, let $\alpha \dfn A(a)$.
  The correctness, stated in the following fact, was shown in~\cite{bienvenu2019computing}.
	\begin{fact}\label{fac:redu-meghyn}
	  It holds that $\calK \models_\ar A(a)$ iff $\varphi$ is unsatisfiable.
	\end{fact}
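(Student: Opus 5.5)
The plan is to characterise directly, for every repair $\calR$ of $\calK$, whether $\tup{\calT,\calR}\models A(a)$, and then to read off both directions of the fact. First I list the conflicts of $\calK$ and the supports of $A(a)$. Inspecting $\calT$, the only axiom with $A$ on the right-hand side is $\exists U \sqsubseteq A$. The only assertions involving $a$ are the $U(a,c_j)$. Hence for any consistent $\calR\subseteq\calA$ we have $\tup{\calT,\calR}\models A(a)$ iff $U(a,c_j)\in\calR$ for some $j$.

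The disjointness axioms yield exactly three kinds of binary conflicts:
\begin{itemize}
\item $\{P(c_j,x_i),N(c_{j'},x_i)\}$, from $\exists P^-\sqsubseteq\neg\exists N^-$;
\item $\{P(c_j,x_i),U(a,c_j)\}$, from $\exists P\sqsubseteq\neg\exists U^-$;
\item $\{N(c_j,x_i),U(a,c_j)\}$, from $\exists N\sqsubseteq\neg\exists U^-$.
\end{itemize}
Since $\calK$ is a \DLLite KB, conflicts have size at most $2$, so these are all of them. By maximality of repairs, $U(a,c_j)\notin\calR$ holds iff $\calR$ contains some $P(c_j,x_i)$ or $N(c_j,x_i)$.

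For ($\Leftarrow$), assume $\varphi$ is unsatisfiable and fix a repair $\calR$. Define the assignment $\nu_\calR$ by setting $x_i$ to true iff some $P(c_{j},x_i)\in\calR$. Because of the first kind of conflict, no $x_i$ has both an incoming $P$ and an incoming $N$ edge in $\calR$. Since $\varphi$ is unsatisfiable, some clause $c_j$ is falsified by $\nu_\calR$, and I claim $U(a,c_j)\in\calR$. Otherwise $\calR$ contains either $P(c_j,x_i)$ with $x_i\in c_j$, which makes $x_i$ true and satisfies $c_j$, or $N(c_j,x_i)$ with $\neg x_i\in c_j$. In the latter case $x_i$ has no incoming $P$ in $\calR$, so it is false and again satisfies $c_j$. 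Either way we contradict the choice of $c_j$, so every repair entails $A(a)$.

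For ($\Rightarrow$), argue by contraposition. Let $\nu$ satisfy $\varphi$ and set
\[
\calR_0\coloneqq\{P(c_j,x_i)\mid x_i\in c_j,\ \nu(x_i)=1\}\cup\{N(c_j,x_i)\mid \neg x_i\in c_j,\ \nu(x_i)=0\}.
\]
This set is consistent because no variable receives both a $P$ and an $N$ edge and it contains no $U$-assertions. Extend $\calR_0$ to a repair $\calR$. Each clause $c_j$ contains a literal true under $\nu$, so $\calR_0$ contains an outgoing $P$ or $N$ assertion from $c_j$. Hence $U(a,c_j)$ conflicts with $\calR_0$, and no $U$-assertion lies in $\calR$, so $\tup{\calT,\calR}\not\models A(a)$.

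The main care point is the exhaustive check of the conflicts and supports. In particular, I must confirm that $A(a)$ has no other derivation and that no further conflicts arise, for instance through $a$ coinciding with some $c_j$ or $x_i$; the individuals are assumed distinct. Once that check is done, both directions are short.
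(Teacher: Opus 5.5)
Your proof is correct. The paper does not actually prove this fact; it imports it from \cite{bienvenu2019computing}. So your argument is a self-contained replacement for the citation, not a variant of an argument in the paper.

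Your proof rests on four checks:
\begin{enumerate}
\item the conflicts are exactly the three binary types you list;
\item $U(a,c_j)$ is missing from a repair exactly when that repair contains a $P$- or $N$-edge leaving $c_j$;
\item a repair can be read as an assignment by making $x_i$ true iff $x_i$ has an incoming $P$-edge;
\item a satisfying assignment gives a consistent set of edges that covers every clause.
\end{enumerate}
These are exactly the standard ingredients behind that reduction. Your exhaustive check should also state that no single assertion is $\calT$-inconsistent. This is immediate, but your maximality step in the second check silently relies on it.

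Your direct route buys generality over the citation. The same argument works unchanged if $U$-assertions are present only for a subset $\psi$ of the clauses, while all $P$- and $N$-assertions are kept: $A(a)$ is then AR-entailed iff $\psi$ is unsatisfiable. The paper later needs precisely this but only asserts it informally, in two places:
\begin{itemize}
\item for the knowledge base $\calK_2'$ in the $\leq$-minimal case of \cref{thm:dllite-verif-gen-card}, where the extra assertion $N(c_{k+2},x_{n+2})$ ``cannot help'';
\item for the correspondence between hypotheses made of $U$-assertions and MUSes in the reduction of \cref{thm:dllite-verif-ar-subset}.
\end{itemize}
The citation, in turn, keeps the paper short.
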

	To show hardness of the verification problem at hand, let $\Hyp \dfn \{U(a,c_j) \mid j \leq k\}$ and $\calK' \dfn \tup{\calT, \calA \setminus \Hyp}$.
  Clearly, we have $\calK' \models \bot$ and $\calK' \not\models_\ar \alpha$.
	Further, it is easy to see that $\Hyp$ is an $\ar$-hypothesis for $\alpha$ in $\calK'$ iff $\tup{\calT, \calA \cup \Hyp} \models_\ar \alpha$ iff $\varphi$ is unsatisfiable.

  Finally, we adapt the above reduction to the case of $\leq$-minimal hypotheses.
  More precisely, we modify the given CNF-formula before applying the reduction to ensure that a specific singleton ABox is an \ar-hypothesis iff the CNF-formula is unsatisfiable.
  Let $\varphi = \{c_1, \dots, c_k\}$ over variables $X = \{x_1, \dots, x_n\}$.
  Define
  \begin{align*}
    c_i' &\coloneqq c_i \cup \{x_{n+1}\} \text{ for } 1 \leq i \leq k, \\
    c_{k+1}' &\coloneqq \neg x_{n+1} \lor x_{n+2}, \text{ and} \\
    c_{k+2}' &\coloneqq \neg x_{n+2}
  \end{align*}
  and let $\varphi_1 \coloneqq \{c_1', \dots, c_{k+1}'\}$ and $\varphi_2 \coloneqq \varphi_1 \cup \{c_{k+2}'\}$.
  Analogously to the construction of \calK from $\varphi$ in the hardness proof for general hypotheses above, we construct knowledge bases $\calK_i = \tup{\calT,\calA_i}$ from $\varphi_i$ for $i \in \{1,2\}$.
  Further, define $\calK_2' \coloneqq \tup{\calT, \calA_2 \setminus \{U(a, c_{k+2})\}}$.
  The following claim now establishes $\coNP$-hardness. 
  \begin{claim}
    $\tup{\calK_2', A(a)}$ is a valid \ar-abduction problem and $\Hyp = \{U(a, c_{k+2})\}$ is a ($\leq$-minimal) solution to it iff $\varphi$ is unsatisfiable.
  \end{claim}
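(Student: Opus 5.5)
The plan is to split the claim into three parts: that $\tup{\calK_2', A(a)}$ is a genuine \ar-abduction problem, that \Hyp is an \ar-hypothesis iff $\varphi$ is unsatisfiable, and that \Hyp is then automatically $\leq$-minimal. The middle part reduces directly to \cref{fac:redu-meghyn}. The first part needs a small direct argument, because $\calK_2'$ is not literally the KB built from $\varphi_1$.

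\emph{Inconsistency of $\calK_2'$.} The clause $c_{k+1}' = \neg x_{n+1} \lor x_{n+2}$ yields $N(c_{k+1}, x_{n+1})$ and $U(a, c_{k+1})$ in $\calA_2 \setminus \{U(a,c_{k+2})\}$. The axiom $\exists N \sqsubseteq \neg \exists U^-$ makes these two assertions a conflict, so $\calK_2'$ is inconsistent.

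\emph{Non-entailment $\calK_2' \not\models_\ar A(a)$.} Note that $\calA_2 \setminus \{U(a,c_{k+2})\}$ equals $\calA_1$ plus the extra assertion $N(c_{k+2}, x_{n+2})$. This extra assertion can conflict with $P(c_{k+1}, x_{n+2})$, so I will not argue via $\calK_1$. Instead I construct a witnessing repair directly from the assignment $\nu$ that sets $x_{n+1}$ and $x_{n+2}$ to true, which satisfies $\varphi_1$.
\begin{enumerate}
\item Start from the set containing $P(c_j, x_i)$ whenever $\nu(x_i)$ is true and $N(c_j, x_i)$ whenever $\nu(x_i)$ is false. This set is $\calT$-consistent, since no variable receives both an incoming $P$ and an incoming $N$.
\item It omits $N(c_{k+2}, x_{n+2})$, because $x_{n+2}$ is true under $\nu$.
\item Every clause $c_j'$ with $j \le k+1$ is satisfied by $\nu$, so $c_j$ keeps an outgoing $P$ or $N$ edge. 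Hence no $U(a, c_j)$ can be added.
\item Extend this set to a repair $\calR$. By the previous point, $\calR$ contains no $U$-assertion.
\end{enumerate}
The only way to derive $A(a)$ is via $\exists U \sqsubseteq A$, so $\tup{\calT, \calR} \not\models A(a)$.

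\emph{Main equivalence.} Since $\calA_2 \setminus \{U(a,c_{k+2})\}$ together with \Hyp is exactly $\calA_2$, \Hyp is an \ar-hypothesis iff $\calK_2 \models_\ar A(a)$. By \cref{fac:redu-meghyn} this holds iff $\varphi_2$ is unsatisfiable. I then check that $\varphi_2 = \varphi'$ is equisatisfiable with $\varphi$:
\begin{itemize}
\item the unit clause $c_{k+2}'$ forces $x_{n+2}$ false;
\item then $c_{k+1}'$ forces $x_{n+1}$ false;
\item so each $c_i'$ reduces to $c_i$.
\end{itemize}
Conversely, any model of $\varphi$ extends to a model of $\varphi'$ by setting both new variables to false.

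\emph{Minimality and complexity.} Because $\calK_2' \not\models_\ar A(a)$, the empty ABox is not a hypothesis. Hence the singleton \Hyp is $\leq$-minimal whenever it is a hypothesis. The construction is computable in logarithmic space, which yields \coNP-hardness.

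I expect the non-entailment step to be the main obstacle, because of the leftover assertion $N(c_{k+2}, x_{n+2})$. The explicit repair built from $\nu$ is what resolves it.
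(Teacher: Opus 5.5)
Your proof is correct and follows the paper's structure: validity of the abduction problem, then the Fact of Bienvenu et al.\ applied to $(\calA_2 \setminus \{U(a,c_{k+2})\}) \cup \Hyp = \calA_2$ together with equisatisfiability of $\varphi_2$ and $\varphi$, and $\leq$-minimality from $|\Hyp| = 1$. The one difference is the step $\calK_2' \not\models_\ar A(a)$. The paper passes through $\calK_1 \not\models_\ar A(a)$ and only asserts that $N(c_{k+2},x_{n+2})$ cannot help \ar-entailment. You instead build an explicit $U$-free repair from an assignment setting $x_{n+1}, x_{n+2}$ to true. This is a tighter argument, because \ar-entailment is not monotone in the ABox and that assertion does conflict with $P(c_{k+1},x_{n+2})$; you also verify the inconsistency of $\calK_2'$, which the paper leaves implicit.
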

  \begin{claimproof}
    Obviously, every satisfying asignment of $\varphi_2$ assigns $x_{n+2}$ and $x_{n+1}$ to $0$, and hence $\varphi_2$ and $\varphi$ are equisatisfiable.
    The formula $\varphi_1$, however, is always satisfiable, as we can simply assign both $x_{n+1}$ and $x_{n+2}$ to $1$.
    First, this means that that $\varphi$ is unsatisfiable iff $\calK_2 \models_\ar A(a)$, because of correctness of the original reduction and equisatisfiability of $\varphi$ and $\varphi_2$. 
    Secondly, again by correctness of the original reduction, it means that $\calK_1 \not\models_\ar A(a)$.
    Since $\calK_2'$ only adds the assertion $N(c_{k+2}, x_{n+2})$ compared to $\calK_1$, and it is easy to see that this assertion cannot help entailment of $A(a)$ under \ar semantics, $\calK_1 \not\models_\ar A(a)$ implies $\calK_2' \not\models_\ar A(a)$.
    Combining these two observations, it follows that $\tup{\calK_2', A(a)}$ is a valid \ar-abduction problem, and $\calH \coloneqq \{U(a, c_{k+2})\}$ is a solution to it iff $\varphi$ is unsatisfiable.
    Furthermore, if $\calH$ is a solution, then it is also $\leq$-minimal as $|\calH| = 1$.
  \end{claimproof}
\end{proof}

\dlliteSingletonHyps*
\begin{proof}
  By definition, we have $\tup{\calT, \calA \cup \Hyp} \models_\brave \alpha$, so there is a repair \calR of $\tup{\calT, \calA \cup \Hyp}$ s.t.
  \[\tup{\calT, \calR} \models \alpha.\]
  Since \calT is a \DLLite TBox, minimal $\calT$-supports of $\alpha$ are of size $1$.
  Combined with the fact that $\calK \not\models_\brave \alpha$, this implies that there is some $\beta \in \calR \cap \Hyp$ s.t.\ $\tup{\calT, \{\beta\}} \models \alpha$, so $\{\beta\}$ is a \brave-hypothesis for $\tup{\calK, \alpha}$.

  Further, consider any conflict $\mathcal{C}$ of $\tup{\calT, \calA \cup \{\beta\}}$.
  As $\mathcal{C} \subseteq \calA \cup \{\beta\} \subseteq \calA \cup \Hyp$, it is also a conflict of $\tup{\calT, \calA \cup \Hyp}$.
  Finally, consider a conflict $\{\alpha, \gamma\} \in \Conf(\tup{\calT, \calA \cup \{\alpha\}})$.
  Since $\tup{\calT, \{\beta\}} \models \alpha$, we have $\{\beta, \gamma\} \in \Conf(\tup{\calT, \calA \cup \{\beta\}})$.
\end{proof}

\dlliteExistCc*
\begin{proof}
  Consider an \calS-abduction problem $\tup{\calK, \alpha}$, where $\calK = \tup{\calT, \calA}$ is a \DLLite KB and $\alpha = A(a)$.
  We begin with the case of existence of general \brave-hypotheses.
  For this, it is sufficient to check whether $A$ is satisfiable w.r.t.\ \calT:
  If this is the case, then $\{\alpha\}$ is a \brave-hypothesis for $\tup{\calK, \alpha}$.
  Otherwise, there is no such hypothesis.
  Satisfiability for \DLLite is in \NL, yielding the desired upper bound.

  For \NL-hardness of existence of general hypotheses, we use a reduction from directed non-reachability, similar to the reduction seen in the proof of \cref{lem:cc-compl}.
  For a directed graph $G = (V,E)$ and $s,t \in V$, define
  \begin{align*}
    \calT \coloneqq {} &\{\, A_v \sqsubseteq A_w \mid (v,w) \in E \text{ and } t \in \{v,w\} \,\} \cup {} \\
    &\{A_t \sqsubseteq A_s\} \cup \{\, A_v \sqsubseteq \neg A_t \mid (v,t) \in E \,\}.
  \end{align*}
  We add a dummy inconsistency to obtain the KB \calK from \calT.
  Obviously, $\calK \models \bot$ due to the dummy inconsistency, and $\calK \not\models_\brave A_t(a)$.
  \begin{claim}\label{cl:redu-reach-2}
    There is no $s$-$t$-path in $G$ iff $A_t$ is satisfiable w.r.t.\ \calT. 
  \end{claim}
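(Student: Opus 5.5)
I read the TBox in this proof as the one from the proof of \cref{lem:cc-compl}. That is, the first set contains the inclusions $A_v \sqsubseteq A_w$ for edges $(v,w) \in E$ with $t \notin \{v,w\}$; the condition ``$t \in \{v,w\}$'' printed above appears to be a typo for $t \notin \{v,w\}$, since otherwise the construction would not encode the edges of $G$ away from $t$. I also assume $s \neq t$, the degenerate case $s = t$ being trivial for reachability.

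The proof characterises unsatisfiability of $A_t$ syntactically. The TBox contains only atomic positive inclusions and negative inclusions of the form $A_v \sqsubseteq \neg A_t$. Hence $A_t$ is unsatisfiable w.r.t.\ \calT{} iff there is a chain of positive inclusions from $A_t$ to some $A_v$ and a negative inclusion $A_v \sqsubseteq \neg A_t$, i.e.\ some predecessor $v$ of $t$. (A clash between $A_u$ and $A_w$ both derived from $A_t$ would need a negative inclusion $A_u \sqsubseteq \neg A_w$; all negative inclusions have $A_t$ on the right, so $w = t$ and we are back in this case.) The positive closure is easy to describe: $A_t \sqsubseteq^* A_u$ holds iff $u = t$, or $u$ is reachable from $s$ in $G - t$, using $A_t \sqsubseteq A_s$ followed by the edge inclusions, none of which touch $t$.

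For the direction ``path implies unsatisfiable'', take a shortest $s$-$t$-path. Its last edge is $(v,t)$, and the prefix from $s$ to $v$ avoids $t$. The chain $A_t \sqsubseteq A_s \sqsubseteq \dots \sqsubseteq A_v$ together with $A_v \sqsubseteq \neg A_t$ yields $A_t \sqsubseteq \bot$.

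For the converse I would build a one-element model directly rather than rely on completeness of the syntactic characterisation. Let $R$ be the set of vertices reachable from $s$ in $G - t$, take domain $\{d\}$, and interpret $A_u$ as $\{d\}$ for $u \in R \cup \{t\}$ and as $\emptyset$ otherwise. The edge inclusions hold because $R$ is closed under edges of $G - t$, and $A_t \sqsubseteq A_s$ holds because $s \in R$. Each $A_v \sqsubseteq \neg A_t$ holds because $v \in R$ for some predecessor $v$ of $t$ would yield an $s$-$t$-path. So $A_t$ is satisfiable. The only delicate point is this last check, together with reading the TBox correctly; everything else is routine.
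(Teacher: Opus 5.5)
Your argument is essentially what the paper has in mind; the paper states this claim without proof and relies on the intuition from the proof of \cref{lem:cc-compl}. Your reading of ``$t \in \{v,w\}$'' as a typo for $t \notin \{v,w\}$ is right, and the shortest-path direction together with the one-element model is a clean way to make it rigorous. However, the last check on your model has a hole, and it is not just cosmetic.

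The negative inclusions are $A_v \sqsubseteq \neg A_t$ for \emph{all} $(v,t) \in E$, so the case $v = t$ occurs whenever $t$ has a self-loop. You only argue $v \notin R$, but your interpretation also puts $d$ into $A_t$, so $A_t \sqsubseteq \neg A_t$ fails in it. In fact $A_t$ is then unsatisfiable regardless of reachability. Concretely, $V = \{s,t\}$ with $E = \{(t,t)\}$ has no $s$-$t$-path, yet $A_t$ is unsatisfiable w.r.t.\ $\mathcal{T}$, so the claim is false as stated. Your own closure description already exposes this: $t$ lies in the positive closure of $A_t$ and is a predecessor of itself.

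The remedy is a side condition of the same kind as your $s \neq t$. Either assume $t$ has no self-loop, which is harmless for reachability since self-loops can be removed in logarithmic space, or drop the inclusion for $v = t$. Under that assumption every predecessor $v$ of $t$ differs from $t$, so $v \notin R$ gives $v \notin R \cup \{t\}$, and your proof goes through unchanged.
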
 
  This implies that there is no $s$-$t$-path in $G$ iff there is a \brave-hypothesis for $\tup{\calK, A_t(a)}$, and hence correctness of the reduction.
  
  The remaining three cases can be handled simultaneously.
  By \cref{lem:ar-triv}, there is any \ar-hypothesis for $\tup{\calK, \alpha}$ iff $\{\alpha\}$ is conflict-confining for \calK.
  \cref{lem:dllite-singleton-hyps} yields an analogous result for conflict-confining \brave-hypotheses: If there is any \brave-conflict-confining hypothesis for $\tup{\calK, \alpha}$, then $\alpha$ does not introduce any conflicts by the last part of the lemma.
  Consequently, determining the complexity of checking whether $\{\alpha\}$ is conflict-confining resolves the complexity for the remaining cases.
  \cref{lem:cc-compl} now yields \NL-completeness, as hardness in the proof of that lemma already holds for singleton ABoxes.
\end{proof}

\dlliteVerifCcCmin*
\begin{proof}
  We now turn to verification of conflict-confining \calS-hypotheses, beginning with membership for both semantics.
  For $\calS = \brave$, we need to check for a given ABox \Hyp whether $\tup{\calT, \calA} \models_\brave \alpha$ and \Hyp is conflict-confining for \calK.
  The former is \brave entailment and hence in \NL, while the latter can be checked similar to how we checked that $\{\alpha\}$ is conflict-confining in the proof of \cref{thm:dllite-exist-cc}:
  As conflicts for \DLLite are of size at most $2$, \Hyp is not conflict-confining for \calK iff \Hyp is \calT-inconsistent or there is a pair $(\beta_1, \beta_2) \in \calA \times \Hyp$ s.t.\ $\beta_1$ is \calT-consistent and $\{\beta_1, \beta_2\}$ is \calT-inconsistent.
  As we can iterate over all pairs from $\calA \times \Hyp$ in logarithmic space and both consistency and inconsistency can be checked in \NL, this yields an algorithm running in \NL in total. 
  
  For $\calS = \ar$, it turns out that conflict-confining hypotheses have a simpler structure than general hypotheses, resulting in a lower complexity than that for verification of general hypotheses.
  This property is stated in the following claim.
  \begin{claim}
    \Hyp is a conflict-confining \ar-hypothesis for $\tup{\calK, \alpha}$ iff \Hyp is conflict-confining for \calK and $\tup{\calT, \Hyp} \models \alpha$.  
  \end{claim}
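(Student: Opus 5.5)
The claim is an equivalence, and I prove the two directions separately. Both rely on the characterisation noted after \cref{def:properties}: \Hyp is conflict-confining for \calK iff $\calR \cup \Hyp$ is \calT-consistent for every repair \calR of \calK.

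For ($\Leftarrow$), assume \Hyp is conflict-confining and $\tup{\calT, \Hyp} \models \alpha$. I show that every repair $\calR'$ of $\tup{\calT, \calA \cup \Hyp}$ contains \Hyp. Write $\calR \coloneqq \calR' \cap \calA$; it is \calT-consistent, so it extends to a repair $\calR''$ of \calK. By conflict-confinement, $\calR'' \cup \Hyp$ is \calT-consistent. Hence $\calR' \subseteq \calR'' \cup \Hyp$, and maximality of $\calR'$ gives $\calR' = \calR'' \cup \Hyp \supseteq \Hyp$. Then $\tup{\calT, \calR'} \models \alpha$ for every such $\calR'$, so \Hyp is an \ar-hypothesis.

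For ($\Rightarrow$), assume \Hyp is a conflict-confining \ar-hypothesis; I must derive $\tup{\calT, \Hyp} \models \alpha$. I first show that every repair of $\tup{\calT, \calA \cup \Hyp}$ has the form $\calR \cup \Hyp$ with \calR a repair of \calK. One inclusion is the argument above. Conversely, $\calR \cup \Hyp$ is consistent by conflict-confinement, and it is maximal: any added $\gamma \in \calA \setminus \calR$ already makes $\calR \cup \{\gamma\}$ inconsistent.

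So for every repair \calR of \calK we have $\tup{\calT, \calR \cup \Hyp} \models \alpha$. Since minimal \calT-supports of $\alpha$ in \DLLite have size $1$, each repair contains some $\beta \in \calR \cup \Hyp$ with $\tup{\calT, \{\beta\}} \models \alpha$. Suppose, for contradiction, that no $\beta \in \Hyp$ entails $\alpha$ (again using singleton supports, this is exactly $\tup{\calT, \Hyp} \not\models \alpha$). Then for every repair \calR the supporting $\beta$ lies in \calR itself, so $\tup{\calT, \calR} \models \alpha$ for all repairs. That is $\calK \models_\ar \alpha$, contradicting the assumption that $\tup{\calK, \alpha}$ is an \ar-abduction problem.

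The main obstacle is the repair-correspondence step, which needs care with the maximality argument. Singleton supports are also essential; without them the final contradiction would fail, because a support could mix assertions from \calR and from \Hyp.
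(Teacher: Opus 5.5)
Your proof is correct and is essentially the paper's argument. For ($\Rightarrow$), the paper fixes a single repair $\calR$ of $\calK$ with $\tup{\calT,\calR}\not\models\alpha$ (which exists since $\calK\not\models_\ar\alpha$), notes that $\calR\cup\Hyp$ is a repair of $\tup{\calT,\calA\cup\Hyp}$ by conflict-confinement, and uses singleton supports to place the supporting assertion in $\Hyp$; this is your contradiction argument run directly, so the half of your repair correspondence showing that every repair of $\tup{\calT,\calA\cup\Hyp}$ has the form $\calR\cup\Hyp$ is not needed there, and your ($\Leftarrow$) argument simply spells out what the paper calls obvious.
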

  \begin{claimproof}
    The direction from right to left is obvious.
    For the direction from left to right, note that $\calK \not\models_\ar \alpha$.
    Hence, there is a repair \calR of \calK s.t.\ $\tup{\calT, \calR} \not\models \alpha$.
    As \Hyp is conflict-confining, the ABox $\calR \cup \Hyp$ is \calT-consistent, which implies that $\calR \cup \Hyp$ is a repair of $\tup{\calT, \calA \cup \Hyp}$.
    Since $\tup{\calT, \calA \cup \Hyp} \models_\ar \alpha$, we have $\tup{\calT, \calR \cup \Hyp} \models \alpha$.
    As minimal \calT-supports are of size $1$, there is some $\beta \in \calR \cup \Hyp$ s.t.\ $\tup{\calT, \{\beta\}} \models \alpha$.
    If $\beta \in \calR$, we would have $\tup{\calT, \calR} \models \alpha$, so we must have $\beta \in \Hyp$.
    Consequently, we have $\tup{\calT, \Hyp} \models \alpha$.
  \end{claimproof}
  Consequently, it is sufficient to check that \Hyp is conflict-confining for \calK and $\tup{\calK, \Hyp} \models \alpha$.
  The former can be done in \NL using the algorithm from the case $\mathcal{S} = \brave$ above, while the latter is classical entailment and hence also in \NL, resulting in an \NL-algorithm in total.

  Hardness for both semantics can be shown using the same reduction from reachability as in the proof of \cref{thm:dllite-verif-gen-card}, with correctness following from \cref{cl:redu-reach} and the fact that the constructed ABox \Hyp does not introduce any conflicts.

  It remains to handle the case of $\preceq_c$-minimal hypotheses.
  For $\calS = \ar$, \cref{lem:ar-triv} tells us that there is any \ar-hypothesis iff there is a conflict-confining one.
  Hence, an ABox \Hyp is a $\preceq_c$-minimal \ar-hypothesis for $\tup{\calK, \alpha}$ iff it is a conflict-confining \ar-hypothesis, so the problem inherits the complexity from verification of conflict-confining \ar-hypotheses.
 
  We now turn to membership for the case $\calS = \brave$.
  By \cref{lem:dllite-singleton-hyps} there is some \brave-hypothesis $\{\beta\} \in \Hyp$ of $\alpha$.
  Hence, \Hyp can only be $\leq_c$-minimal, if $\Hyp \setminus \{\beta\}$ does not add any additional conflicts over $\beta$.
  We use this as follows:
  Iterate over all $\beta \in \Hyp$ in logarithmic space and check that $\{\beta\}$ is a \brave-hypothesis for $\tup{\calK, \alpha}$ and $\Hyp \setminus \{\beta\}$ is conflict-confining for $\tup{\calT, \calA \cup \{\beta\}}$.
  Both checks can be done in \NL by previous considerations, yielding an \NL-algorithm in total.
  Finally, we need to check that $\beta$ introduces the same number of conflicts as $\alpha$.
  We argue \NL-membership by showing that the complement of this check is in \NL.
  By the last part of \cref{lem:dllite-singleton-hyps}, $\beta$ introduces more conflicts than $\alpha$ iff there is any assertion $\gamma \in \calA$ s.t.\ $\{\alpha, \gamma\}$ is \calT-consistent and $\{\beta, \gamma\}$ is \calT-inconsistent.
  This can again be checked in \NL.
  
  Regarding membership for $\subseteq_c$-minimal \brave-hypotheses, we can again first look for some $\beta \in \Hyp$ s.t. $\{\beta\}$ is a \brave-hypothesis for $\tup{\calK, \alpha}$ and $\Hyp \setminus \{\beta\}$ is conflict-confining for $\tup{\calT, \calA \cup \{\beta\}}$.
  It remains to check whether $\{\beta\}$ is a $\subseteq_c$-minimal \brave-hypothesis.
  For this, we check whether there is any conflict-confining \brave-hypothesis for $\tup{\calK, \alpha}$ in \NL using the algorithm for existence above.
  If this is the case, it means that $\{\beta\}$ is $\subseteq_c$-minimal iff it is conflict-confining, which can again be checked in \NL.
  Otherwise, $\{\beta\}$ is trivially $\subseteq_c$-minimal, as any \brave-hypothesis introduces some new conflict, and any brave hypothesis introducing any conflict containing $\beta$ introduces at least all the conflicts introduced by $\{\beta\}$.

  Hardness for $\preceq_c$-minimal \brave-hypotheses can be shown using the same reduction from reachability as in the proof of \cref{thm:dllite-verif-gen-card} (and reused for conflict-confining hypotheses above), with correctness again following from \cref{cl:redu-reach} and the fact that the constructed ABox \Hyp does not introduce any conflicts.
\end{proof}

\dlliteExistBraveSigNontriv*
\begin{proof}
  We begin with membership for \SignatureRestricted hypotheses.
  Let $\tup{\calK, \alpha, \Sigma}$ be a \SignatureRestricted \brave-abduction problem.
  Let $\Hyp_m$ be the set of all assertions over $\Sigma$ and individuals from $\calK$ and $\alpha$.
  Using \cref{lem:dllite-singleton-hyps}, it is easy to see that there is a \brave-hypothesis over $\Sigma$ for $\alpha$ in \calK iff for any $\beta \in \Hyp_m$, $\{\beta\}$ is such a hypothesis.
  As verification of \brave-hypotheses is in \NL, the upper bound follows.

  Hardness for \SignatureRestricted hypotheses follows from a straightforward reduction from reachability, using the construction from the proof of \cref{thm:dllite-verif-gen-card}.
  It is easy to see that for this construction, \Hyp is a \brave-hypothesis for $\tup{\calK, \alpha}$ iff there is any \brave-hypothesis over signature $\{A_s\}$, so correctness follows from \cref{cl:redu-reach}.

  We now turn to the case of non-trivial hypotheses.
  Let $\tup{\calK, \alpha}$ be a \brave-abduction problem.
  Membership can be shown similar as for \SignatureRestricted hypotheses above, by simply defining $\Hyp_m$ as the set of assertions over the individuals, concept names and role names occurring in \calK and $\alpha$, except for the trivial hypothesis $\{\alpha\}$.

  For \NL-hardness we adapt the reduction from directed reachability first seen in the proof of \cref{thm:dllite-verif-gen-card}.
  We prevent using unintended hypotheses by using only a single individual and using existential restrictions that cannot be satisfied by that individual alone.
  In detail, this can be done as follows.
  Consider a directed graph $G = (V,E)$ and some $s, t \in V$ with $s \neq t$.
  (The case $s=t$ can be handled separately.)
  Let $\calK = \tup{\calT, \calA}$, where
  \begin{align*}
    \calT \coloneqq {} & \{A_s \sqsubseteq \exists r_s, \quad \exists r_t \sqsubseteq A_t \} \cup {} \\
    & \{\, \exists r_u \sqsubseteq \exists r_v \mid (u,v) \in E \,\} \cup {} \\
    & \{\, \exists r_u \sqsubseteq \neg \exists r_u^- \mid u \in V \,\} \cup {} \\
    & \{B_1 \sqsubseteq \neg B_2\} \text{ and} \\
    \calA \coloneqq {} & \{B_1(a), B_2(a)\}
  \end{align*}
  The following claim shows correctness of the reduction.
  \begin{claim}
    There is a non-trivial \brave-hypothesis for $\tup{\calK, A_t(a)}$ iff there is an $s$-$t$-path in $G$.
  \end{claim}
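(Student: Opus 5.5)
The plan is to reduce everything to singleton hypotheses via \cref{lem:dllite-singleton-hyps}. Then we do a case analysis on the very few assertions available: the only individual occurring in $\calK$ and $A_t(a)$ is $a$. So every candidate hypothesis consists of assertions of the form $X(a)$ for a concept name $X \in \{A_s, A_t, B_1, B_2\}$, or $r_u(a,a)$ for $u \in V$.

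\emph{($\Leftarrow$)} Assume there is an $s$-$t$-path in $G$. We show that $\Hyp \coloneqq \{A_s(a)\}$ is a non-trivial \brave-hypothesis; non-triviality holds since $s \neq t$.

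First, $\tup{\calT, \{A_s(a)\}} \models A_t(a)$. Starting from $A_s \sqsubseteq \exists r_s$, the inclusions $\exists r_u \sqsubseteq \exists r_v$ along the path give $A_s \sqsubseteq \exists r_t$, and then $\exists r_t \sqsubseteq A_t$ applies.

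Second, $\{A_s(a)\}$ is \calT-consistent together with $B_1(a)$. A model is obtained by putting $a$ into $A_s$, $B_1$, $A_t$ (if entailed) and into $\exists r_u$ for every $u$ reachable from $s$. For each such $u$ we add a fresh anonymous element $d_u$ with $r_u(a,d_u)$. Then $d_u$ lies only in $\exists r_u^-$, and $a$ lies in no $\exists r_u^-$, so all disjointness axioms $\exists r_u \sqsubseteq \neg \exists r_u^-$ and $B_1 \sqsubseteq \neg B_2$ are respected.

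Hence $\{B_1(a), A_s(a)\}$ extends to a repair of $\tup{\calT, \calA \cup \Hyp}$ that entails $A_t(a)$. Since $\calK \not\models_\brave A_t(a)$ is clear (no axiom derives $A_t$ from $B_1$ or $B_2$), this is a valid \brave-hypothesis.

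\emph{($\Rightarrow$)} Let \Hyp be a non-trivial \brave-hypothesis. By \cref{lem:dllite-singleton-hyps} there is some $\beta \in \Hyp$ such that $\{\beta\}$ is a \brave-hypothesis. Such a $\beta$ must be \calT-consistent, must satisfy $\tup{\calT, \{\beta\}} \models A_t(a)$, and must differ from $A_t(a)$ by non-triviality. We go through the cases:
\begin{itemize}
\item $\beta = r_u(a,a)$ is \calT-inconsistent, because $a$ would belong to both $\exists r_u$ and $\exists r_u^-$.
\item $\beta = B_i(a)$ does not entail $A_t(a)$, since no positive inclusion has $B_i$ on its left-hand side.
\end{itemize}
Thus $\beta = A_s(a)$. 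In \DLLite, entailment of $A_t(a)$ from $\{A_s(a)\}$ follows the chain of positive inclusions between basic concepts. The only positive inclusions are $A_s \sqsubseteq \exists r_s$, $\exists r_u \sqsubseteq \exists r_v$ for $(u,v) \in E$, and $\exists r_t \sqsubseteq A_t$. Therefore the entailment holds iff $\exists r_t$ is reachable from $\exists r_s$ in this chain, i.e.\ iff there is an $s$-$t$-path in $G$.

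The main obstacle is the consistency argument for $\{A_s(a)\}$, together with ruling out all unintended hypotheses in the case analysis. It is exactly here that the absence of fresh individuals and the axioms $\exists r_u \sqsubseteq \neg \exists r_u^-$ (in the spirit of \cref{ex:non-triv}) do the work: they make $r_u(a,a)$ unusable, while the anonymous successors in a model keep $A_s(a)$ itself satisfiable.
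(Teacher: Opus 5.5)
Your proof is correct and takes essentially the same route as the paper. For ($\Rightarrow$) you reduce to a single $\calT$-support $\beta \in \Hyp$ of $A_t(a)$; the paper does this inline via size-$1$ supports and $\calK \not\models_\brave A_t(a)$, while you cite \cref{lem:dllite-singleton-hyps}. You then rule out $r_u(a,a)$ by inconsistency and are left with $\beta = A_s(a)$. For ($\Leftarrow$) you use $\{A_s(a)\}$ and usefully spell out the consistency check that the paper only calls easy. The one minor imprecision is that hypotheses may also use concept and role names not occurring in $\calK$, but such assertions cannot be $\calT$-supports of $A_t(a)$, so your case analysis is unaffected.
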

  \begin{claimproof}
    ($\Leftarrow$) In this case, it is easy to see that $A_s(a)$ is a non-trivial \brave-hypothesis for $\tup{\calK, A_t(a)}$.

    ($\Rightarrow$) Consider some non-trivial \brave-hypothesis \Hyp for $\tup{\calK, A_t(a)}$.
    By definition, there is some repair \calR of $\tup{\calT, \calA \cup \Hyp}$ s.t.\ $\tup{\calT, \calR} \models A_t(a)$.
    Since minimal \calT-supports are of size $1$, there is some $\beta \in \calR$ s.t.\ $\tup{\calT, \{\beta\}} \models A_t(a)$.
    Further, we have $\beta \in \Hyp$, since $\calK \not\models_\brave A_t(a)$.
    As $a$ is the only occurring individual, $\beta$ must be of the form $B(a)$ or $r(a,a)$ for some concept name $B$ or role name $r$.
    Note that $r(a,a)$ is \calT-inconsistent for all $r \in \{\, r_u \mid u \in V \,\}$, so the only \calT-support of this form is $A_s(a)$ and we have $\beta = A_s(a)$.
    Consequently, we have $\tup{\calT, \{A_s(a)\}} \models A_t(a)$ and by construction of \calT there is an $s$-$t$-path in $G$.
  \end{claimproof} 
\end{proof}

\dlliteVerifBraveSubset*
\begin{proof} 
  Let $\tup{\calK, \alpha}$ be a \brave-abduction problem and \Hyp an ABox.
  By \cref{lem:dllite-singleton-hyps}, \Hyp is a $\subseteq$-minimal \brave-hypothesis iff it is a $\leq$-minimal \brave-hypothesis, so the complexity is inherited from verification of $\leq$-minimal \brave-hypotheses. 
\end{proof}

\dlliteArHyps*
\begin{proof}
  ($\Leftarrow$) Let $\calR_1, \dots, \calR_m$ be the repairs of $\tup{\calT, \calA}$ and $\beta_1, \dots, \beta_m \in \calA \cup \calB$ be \calT-supports of $\alpha$ s.t.\ $\calR_i \cup \{\beta_i\}$ is \calT-consistent f.a.\ $i$.
  We show that $\{\beta_1, \dots, \beta_m\} \setminus \calA$ is an \ar-hypothesis for $\tup{\calK, \alpha}$.
  Note that $\calA \cup (\{\beta_1, \dots, \beta_m\} \setminus \calA) = \calA \cup \{\beta_1, \dots, \beta_m\}$.
  Consider any repair \calR of $\tup{\calT, \calA \cup \{\beta_1, \dots, \beta_m\}}$.
  If $\beta_i \in \calR$ for some $i$, then $\tup{\calT, \calR} \models \alpha$.
  Otherwise, $\calR \subseteq \calA$.
  But then \calR is a repair of $\tup{\calT, \calA}$, so $\calR = \calR_j$ for some $j$.
  Hence, $\calR \cup \{\beta_j\}$ is \calT-consistent, which is a contradiction to \calR being subset-maximal.

  ($\Rightarrow$)
  Let $\Hyp \subseteq \calB$ be an \ar-hypothesis for $\tup{\calK, \alpha}$.
  Consider any repair \calR of $\tup{\calT, \calA}$.
  Then there is some repair $\calR'$ of $\tup{\calT, \calA \cup \Hyp}$ s.t.\ $\calR \subseteq \calR'$, since \calR is a \calT-consistent subset of $\calA \cup \Hyp$ and each such set is contained in a maximal one.
  As $\tup{\calT, \calA \cup \Hyp} \models_\ar \alpha$, we have $\tup{\calT, \calR'} \models \alpha$.
  As \calT is a \DLLite TBox, minimal \calT-supports are of size $1$.
  Hence, this means that there is some \calT-support $\{\beta\}$ of $\alpha$ s.t.\ $\beta \in \calR' \subseteq \calA \cup \calB$.
  But as $\calR'$ is a repair, this implies that $\calR \cup \{\beta\}$ is \calT-consistent.
\end{proof}

\dlliteExistArSigNontriv*
\begin{proof}
  We begin with the case of existence of \SignatureRestricted hypotheses.
  Membership in \coNP is shown by Algorithm~\ref{alg:dllite-ex-sig}.
  \begin{algorithm}[t]
 	  \DontPrintSemicolon
 	  \SetKwInOut{Input}{Input}
 	  \caption{\coNP-algorithm for existence of \SignatureRestricted \ar-hypotheses in case of \DLLite}\label{alg:dllite-ex-sig}
    \Input{\SignatureRestricted \ar-abduction problem $\tup{\calK, \alpha, \Sigma}$ for \DLLite KB $\calK = \tup{\calT, \calA}$}
    $\calB \gets$ set of all assertions over $\Sigma$\;
    universally guess a set $\calR \subseteq \calA$\;\label{ln:dllite-ex-sig-calB}
    \If{\calR is \calT-inconsistent\label{ln:dllite-ex-sig-repair1}}{
      {\bf accept}
    }
    \ForAll{extensions $\calR'$ of $\calR$ by a single new element from \calA}{
      \If{$\calR'$ is \calT-consistent}{
        {\bf accept}\label{ln:dllite-ex-sig-repair2}
      }
    }
    \ForAll{$\beta \in \calA \cup \calB$\label{ln:dllite-ex-sig-supp1}}{
      \If{$\tup{\calT, \{\beta\}} \models \alpha$ {\bfseries and} $\calR \cup \{\beta\}$ is \calT-consistent}{
        {\bf accept}\label{ln:dllite-ex-sig-supp2}
      }
    }
    {\bf reject}
  \end{algorithm}
  For correctness, note that lines~\ref{ln:dllite-ex-sig-repair1}-\ref{ln:dllite-ex-sig-repair2} accept iff \calR is not a repair of \calK.
  Hence, the for-loop in lines~\ref{ln:dllite-ex-sig-supp1}-\ref{ln:dllite-ex-sig-supp2} checks that for all repairs \calR of \calK, there is some \calT-support $\{\beta\}$ of $\alpha$ from $\calA \cup \calB$ such that $\calR \cup \{\beta\}$ is \calT-consistent.
  By \cref{thm:dllite-ar-hyps}, this is equivalent to the existence of any \ar-hypothesis $\Hyp \subseteq \calB$ for $\tup{\calK, \alpha}$.
  Finally, this is equivalent to existence of any \SignatureRestricted \ar-hypothesis for $\tup{\calK, \alpha, \Sigma}$, since \calB contains all assertions over $\Sigma$.

  The algorithm only guesses universally and runs in polynomial time:
  The sets \calB can be computed in polynomial time.
  The for-loops iterate over poly-size sets, and consistency and inconsistency as well as entailment of concept assertions under classical semantics can be checked in $\NL \subseteq \Ptime$.

  We now show \coNP-hardness for the \SignatureRestricted setting by adapting the reduction from unsatisfiability to \ar-entailment presented in the proof of \cref{thm:dllite-verif-gen-card}, with correctness following from \cref{fac:redu-meghyn}.
  For a given formula $\varphi$ in CNF, construct the TBox \calT and the ABoxes \calA and \Hyp as in that construction.
  Now, define
  \begin{align*}
    \calT' &\coloneqq \calT \cup \{\exists U \sqsubseteq \neg \exists U^-\} \text{ and} \\
    \calA' &\coloneqq \calA \setminus \{\, U(a,c_j) \mid j \leq k \,\}
  \end{align*}
  and let $\calK' \coloneqq \tup{\calT', \calA'}$ and $\Sigma \coloneqq \{\, U, a, c_j \mid j \leq k \,\}$.
  Due to the new axiom $\exists U \sqsubseteq \neg \exists U^-$, the assertion $U(a,a)$ is not $\calT'$-consistent, so it cannot help entailment under brave semantics.
  The only remaining assertions over $\Sigma$ that can help entailment of $A(a)$ via the axiom $\exists U \sqsubseteq A$ are assertions in \Hyp.
  Further, since all assertions in \Hyp are $\calT'$-support of $A(a)$, if any subset of them is an \ar-hypothesis for $\tup{\calK', A(a)}$, then so is \Hyp.
  Hence, existence of any \ar-hypothesis for $\tup{\calK', A(a), \Sigma}$ is equivalent to \Hyp being such a hypothesis.
  By \cref{fac:redu-meghyn}, this implies correctness of the reduction.

  We now turn to the case of existence of non-trivial \ar-hypotheses.
  For membership in \coNP, we call \cref{alg:dllite-ex-sig} with $\Sigma$ set to the signature of \calK and $\alpha$ and modify the algorithm slightly:
  Define \calB as the set of all assertions over the signature $\Sigma$, except for the assertion $\alpha$ itself.
  As the latter is not allowed in any non-trivial hypothesis, correctness follows by \cref{thm:dllite-ar-hyps} using the same arguments as for the \SignatureRestricted setting above.

  For \coNP-hardness for existence of non-trivial \ar-hypotheses, we reduce from checking whether a given $\forall$-QBF $\forall X \varphi(X)$ is true, where $\varphi = \{C_1, \dots, C_m\}$ is in DNF, i.e., the $C_i$ are conjunction terms
  Let $X = x_1, x_2, \dots, x_n$ be the variables in $\varphi$.
  We construct a \DLLite KB $\calK = \tup{\calT, \calA}$ such that there is a non-trivial \ar-hypothesis for $\tup{\calK, A(a)}$ iff $\forall X \varphi(X)$ is true.
  Define
  \begin{align*}
    \calT \coloneqq {} &\{\, C_j \sqsubseteq A \mid 1 \leq j \leq m \,\} \cup {} \\
    & \{\, T_{x_i} \sqsubseteq \neg C_j \mid \neg x_i \in C_j \,\} \cup {} \\
    & \{\, F_{x_i} \sqsubseteq \neg C_j \mid x_i \in C_j \,\} \cup {} \\
    & \{\, T_{x_i} \sqsubseteq \neg F_{x_i} \mid 1 \leq i \leq n \,\} \text{ and} \\
    \calA \coloneqq {} & \{\, T_{x_i}(a), F_{x_i}(a) \mid 1 \leq i \leq n \,\}.
  \end{align*}
  We now show that $\forall X \varphi(X)$ is true iff there is a non-trivial \ar-hypothesis for $\tup{\calK, A(a)}$.

  ($\Rightarrow$) Assume that $\forall X \varphi(X)$ is true, that is, for each assignment $s$ over $X$ there is some $j$ such that $s \models C_j$.
  Let $\Hyp \coloneqq \{\, C_j(a) \mid 1 \leq j \leq m \,\}$.
  Consider any repair \calR of $\tup{\calT, \calA \cup \Hyp}$.
  First note that by the axioms of the form $T_{x_i} \sqsubseteq \neg F_{x_i}$, the assertions of the form $T_{x_i}(a)$ and $F_{x_i}(a)$ present in \calR correspond to an assignment $s$ over $X$.
  By assumption, there is some $j$ such that $s \models C_j$.
  Hence, $\calR \cup \{C_j(a)\}$ is \calT-consistent by construction of \calT, so $C_j(a) \in \calR$ by maximality of repairs.
  Consequently, $\tup{\calT, \calR} \models A(a)$ by the axiom $C_j \sqsubseteq A$.

  ($\Leftarrow$) Assume that there is a non-trivial \ar-hypothesis \Hyp for $\tup{\calK, A(a)}$.
  Consider any assignment $s$ over $X$.
  Due to the axioms of the form $T_{x_i} \sqsubseteq \neg F_{x_i}$, there is a repair \calR of $\tup{\calT, \calA \cup \Hyp}$ that contains exactly the assertions of the form $T_{x_i}(a)$ and $F_{x_i}(a)$ corresponding to assignment $s$.
  As $\tup{\calT, \calR} \models A(a)$ by assumption, there is some $j$ s.t.\ $C_j(a) \in \calR$.
  But then $s \models C_j$ by construction of \calT.
  As this applies to all assignments $s$ over $X$, this implies that $\forall X \varphi(X)$ is true.
\end{proof}

\dlliteVerifArSubset*
\begin{proof}
  For membership, let $\tup{\calK, \alpha}$ be an \ar-abduction problem, where $\calK = \tup{\calT, \calA}$ is a \DLLite KB and let \Hyp be an ABox.
  To verify that \Hyp is a $\subseteq$-minimal \ar-hypothesis for $\tup{\calK, \alpha}$, we need to check that
  \begin{enumerate}
    \item $\tup{\calT, \calA \cup \Hyp} \models_\ar \alpha$ and
    \item \label{it:min1} there is no \ar-hypothesis $\Hyp' \subsetneq \Hyp$ for $\tup{\calK, \alpha}$.
  \end{enumerate}
  The former can be done in \coNP, as it is just \ar-entailment.

  To show \DP-membership, we now argue that \ref{it:min1} is in \NP by showing that we can answer the complement in \coNP.
  Let $\Hyp = \{\beta_1, \dots, \beta_n\}$ and consider the subsets $\calB_i = \Hyp \setminus \{\beta_i\}$ for all $i$.
  Obviously, there is an \ar-hypothesis $\Hyp' \subsetneq \Hyp$ for $\tup{\calK, \alpha}$ iff there is an \ar-hypothesis $\Hyp' \subseteq \calB_i$ for $\tup{\calK, \alpha}$ for some $i$.
  By \cref{thm:dllite-ar-hyps}, setting \calB to $\calB_i$, there is an \ar-hypothesis $\Hyp' \subseteq \calB_i$ for $\tup{\calK, \alpha}$ iff for each repair \calR of \calK, there is a \calT-support $\{\beta\} \subseteq \calA \cup \calB_i$ of $\alpha$ s.t.\ $\calR \cup \{\beta\}$ is \calT-consistent.
  Hence, for each $i$ we can check whether there is an \ar-hypothesis $\Hyp' \subseteq \calB_i$ for $\tup{\calK, \alpha}$ using a slight modification of \cref{alg:dllite-ex-sig}.
  Namely, instead of relying on the signature $\Sigma$, simply redefine \calB in that algorithm to $\calB_i$.
  Consecutively checking the above for all $i$ yields a \coNP-algorithm checking whether there is any \ar-hypothesis $\Hyp' \subsetneq \Hyp$ for $\tup{\calK, \alpha}$. 

	For $\DP$-hardness, we again use the reduction from unsatisfiability also used in the proof of \cref{thm:dllite-verif-gen-card}, but first introduce some terminology.
	Given a formula $\varphi$ in CNF, a collection $\psi\subseteq \varphi$ of clauses is a  MUS of $\varphi$ if $\psi$ is unsatisfiable but $\psi'$ is satisfiable for every $\psi' \subset\psi$.
  It is known that the problem to decide if a set of clauses is a MUS is $\DP$-hard~\cite{Liberatore05}.
  Now, constructing $\calK'$ as in the reduction in the proof of \cref{thm:dllite-verif-gen-card}, it can be observed that the $\subseteq$-minimal \ar-hypotheses $\calH$ for $\tup{\calK', \alpha}$ correspond precisely to MUSes $\psi_\Hyp$ for $\varphi$ by taking $c_j\in \psi_\calH\iff U(a,c_j)\in \calH$.
  Hence, given a CNF $\varphi$ and set of clauses $\psi$, we obtain a reduction to verification of $\subseteq$-minimal \ar-hypotheses by mapping it to the \ar-abduction problem $\tup{\calK', \alpha}$ and the candidate hypothesis $\Hyp_\psi= \{U(a,c_j)\mid c_j\in\psi\}$.	
\end{proof}

\section{Full Proofs for \cref{sec:elbot}}

\elbotVerifGeneral*
\begin{proof}
	We reduce our problem to/from $\calS$-entailment for $\EL_\bot$ KBs, which is \NP-complete for \brave semantics~\cite{BienvenuB16} and $\co\NP$-complete for \ar~\cite{Rosati11}.

	Let $\tup{\calK, \alpha}$ be an \calS-abduction problem and \Hyp an ABox.
	Observe that the question whether \Hyp is an $\calS$-hypothesis for $\tup{\calK, \alpha}$ only requires a check for $\calS$-entailment.
	Thus, the complexity of $\calS$-entailment yields an upper bound.
	Furthermore, to check $\leq$-minimality it is sufficient to check whether $|\Hyp| = 1$, since $\{\alpha\}$ is an \calS-hypothesis for $\tup{\calK, \alpha}$ iff there is any such hypothesis by Lemmata~\ref{prop:brave-triv} and~\cref{lem:ar-triv}.
  Thus, checking $\leq$-minimality does not increase the complexity.
	
	For hardness, we reduce $\calS$-entailment to $\calS$-verification.
	Given a KB $\calK = \tup{\calT,\calA}$ and a concept assertion $C(a)$, we let $\calT' = \calT \cup \{C\sqcap B \subsum A\}$ for fresh concepts $A$ and $B$, and consider $\calK'=\tup{\calT',\calA}$.
	Moreover, we let $\alpha \dfn A(a)$ and $\Hyp\dfn\{B(a)\}$.
	It is easy to see that $\calK' \not\models_\calS \alpha$ and hence $\tup{\calK, \alpha}$ is a valid abduction problem.
	Now, we observe that
	\[\calK \models_\calS C(a)\iff \calK'\models_\calS C(a) \iff \tup{\calT',\calA\cup\calH}\models_\calS \alpha.\]
	Consequently, $\calH$ is a ($\leq$-minimal) $\calS$-hypothesis for $\tup{\calK',\alpha}$ if and only if $\calK \models_\calS C(a)$.
	Thus we obtain corresponding hardness in each case due to the complexity of $\calS$-entailment.
\end{proof}

\elbotExistGeneral*
\begin{proof}
  For $\calS = \brave$, consider a \brave-abduction problem $\tup{\calK, \alpha}$ with $\calK = \tup{\calT, \calA}$ and $\alpha = A(a)$.
  By \cref{prop:brave-triv}, existence of any \brave-hypothesis for $\tup{\calK, \alpha}$ is equivalent to $\{\alpha\}$ being such a hypothesis.
  The latter is equivalent to $A$ being satisfiable w.r.t.\ \calT, so the problem is equivalent to satisfiabilty of named concepts w.r.t.\ \ELbot-TBoxes, which is known to be \Ptime-complete.
  
  We now turn towards \ar semantics.
  Let $\tup{\calK, \alpha}$ be an \ar-abduction problem.
	By \cref{lem:ar-triv}, checking whether there is an \ar-hypothesis for $\tup{\calK, \alpha}$ is equivalent to checking whether $\{\alpha\}$ is conflict-confining for \calK.
  \cref{lem:cc-compl} now yields \coNP-completeness, as the hardness proof already applies to singleton ABoxes.
\end{proof}

Interestingly the complexity of abduction for consistent KBs in the signature-restricted setting is not well-mapped.
Precisely, the work by \cite{Koopmann21a} does not yield results in our setting as the signature is restricted differently (the mentioned work does not restrict individuals in the signature).
The following proposition closes this gap by characterizing the complexity of signature-restricted abduction in $\EL_\bot$ under classical semantics.

\begin{proposition}\label{thm:consistent-signature-elbot}
  For \ELbot, the existence problem for \SignatureRestricted hypotheses under classical semantics is \NP-complete.
\end{proposition}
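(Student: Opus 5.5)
Membership and hardness are handled separately; the hardness half is essentially the construction already sketched for \cref{thm:elbot-exist-sig}(1).

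\emph{Membership in \NP.} Let $\tup{\calK, \alpha, \Sigma}$ be given with $\calK = \tup{\calT, \calA}$ consistent and $\calK \not\models \alpha$. Since hypotheses may only use individuals occurring in $\calK$ and $\alpha$, and only names from $\Sigma$, there are only polynomially many candidate assertions. So every hypothesis has polynomial size. The algorithm guesses an ABox $\Hyp$ over $\Sigma$ and these individuals. It then checks in polynomial time, using classical \ELbot reasoning, that $\tup{\calT, \calA \cup \Hyp} \not\models \bot$ and $\tup{\calT, \calA \cup \Hyp} \models \alpha$. Both checks are polynomial because classical consistency and instance checking in \ELbot are in \Ptime. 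Together these give \NP-membership.

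\emph{\NP-hardness.} I reduce from \SAT with the TBox from the proof of \cref{thm:elbot-exist-sig}. For a CNF $\varphi$ over $X$, take
\[\calT \dfn \{\,A_x \sqcap A_{\bar x} \subsum \bot \mid x \in X\,\} \cup \{\,A_\ell \subsum A_c \mid \ell \in c,\ c \in \varphi\,\} \cup \{\textstyle\bigsqcap_{c\in\varphi} A_c \subsum A_\varphi\}.\]
Set $\calA \dfn \emptyset$, $\alpha \dfn A_\varphi(m)$ and $\Sigma \dfn \{A_x, A_{\bar x} \mid x \in X\} \cup \{m\}$. The empty KB is consistent and does not entail $\alpha$, so this is a valid classical instance, assuming $\varphi$ has at least one clause.

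For correctness I argue two directions.
\begin{itemize}
\item If $\theta \models \varphi$, let $\Hyp_\theta \dfn \{A_\ell(m) \mid \theta \models \ell\}$. This set contains no complementary pair, so it is consistent with $\calT$. Every clause $c$ contains a literal true under $\theta$, so $A_c(m)$ is entailed for every $c$, and hence $A_\varphi(m)$ is entailed.
\item Conversely, let $\Hyp$ be a hypothesis. It contains only assertions $A_\ell(m)$, and by consistency it never contains both $A_x(m)$ and $A_{\bar x}(m)$. So $\Hyp$ defines a partial assignment, which I extend arbitrarily. The canonical model of $\tup{\calT, \Hyp}$ makes $A_c(m)$ true only if some $A_\ell(m) \in \Hyp$ with $\ell \in c$. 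Hence entailment of $A_\varphi(m)$ forces every clause to be satisfied by that assignment.
\end{itemize}

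The main point to check carefully is the second direction. I would show it via the minimal model of the Horn TBox, where the only derivable facts about $m$ are the asserted $A_\ell$, the $A_c$ for clauses containing one of those literals, and possibly $A_\varphi$. The reduction is clearly computable in logarithmic space.
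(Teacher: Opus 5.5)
Your proposal is correct and follows the paper's proof: the same guess-and-check \NP-membership argument, and the same reduction from \SAT. That reduction uses an identical TBox, the empty ABox, $\alpha = A_\varphi(m)$, the same signature $\Sigma$, and the same two correctness directions. The only difference is that you exclude complementary pairs directly via consistency of the hypothesis and justify the converse through the minimal model; that is a slightly cleaner rendering of the step the paper calls easy to see.
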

\begin{proof}
	\emph{Membership:} Guess a set \Hyp of assertions over the signature $\Sigma$, i.e.\ $\Hyp \subseteq \{A(x), r(x,y) \mid A, r, x, y \in \Sigma\}$.
  Then verify that $\calA \cup \Hyp$ is \calT-consistent and $\tup{\calT, \calA \cup \Hyp} \models \alpha$. 
	Both checks can be performed in polynomial time.
	
	\emph{Hardness:} We reduce from propositional satisfiability.
	To this aim, let $\varphi = \{c_1,\dots,c_p\}$ be a CNF formula over propositions $X =\{x_1,\dots,x_n\}$, where each $c_i$ is a clause.
	A literals $\ell$ is a variable $x$ or a negated variable $\neg x$.
	For a literal $\ell$, we denote by $\bar\ell$ its ``opposite'' literal.
	For a set $Z$ of variables, $\Lit(Z)$ denotes the collection of literals over $Z$.
	We construct the KB $\calK = \tup{\calT, \calA}$ with concept names $N=\{A_x, A_{\bar x} \mid x\in X\} \cup \{A_c \mid c \in\varphi\} \cup \{A_\varphi\}$, where
	\begin{align*}
		\calT\dfn {} & \{\,A_x \sqcap A_{\bar x} \subsum \bot \mid x\in X\, \} \cup {} \\ & \{\,A_\ell \subsum A_c \mid \ell \in c, c\in\varphi\,\} \cup {} \\
		& \left\{\,\bigsqcap_{c\in \varphi}A_c\subsum A_\varphi\,\right\}, \\
		\calA \dfn {} & \emptyset.
	\end{align*} 
	Furthermore, let $\alpha \coloneqq A_\varphi(m)$ and
  \[\Sigma =\{A_x, A_{\bar x}, \mid x\in X\} \cup \{m\}\]
  for an individual name $m$.
	Now, $\tup{\calK, \alpha, \Sigma}$ is the desired abduction problem.
	Clearly, $\calK \not\models \alpha$, since no assertion in $\calK$ involves $m$.
	\begin{claim}
		$\varphi$ is satisfiable iff $\alpha$ has a hypothesis over $\Sigma$ in \calK.
	\end{claim}
	\begin{claimproof}
		($\Rightarrow$) Let $s \subseteq \Lit(X)$ be a satisfying assignment for $\varphi$ seen as a set of literals.
		Then, for each clause $c\in\varphi$ there is some $\ell\in s\cap c$.
		We define $\Hyp = \{A_\ell(m) \mid \ell\in s\}$. 
		Since $s$ is an assignment, the set $\Hyp$ is consistent with $\calT$:
		No inconsistency is triggered due to axioms $A_x\sqcap A_{\bar x}\subsum \bot$, as $\Hyp$ only contains exactly one assertion for each variable.
		Now, we prove that $\calK_{\Hyp} \models \alpha$, where $\calK_{\Hyp} \coloneqq \tup{\calT, \Hyp}$.
		Since each clause is satisfied, we have $\calK_{\Hyp} \models A_c(m)$ for each $c\in\varphi$, which in turn implies that $\calK_{\Hyp} \models A_\varphi(m)$ due to the last TBox axiom.
		
		($\Leftarrow$)
		Let $\Hyp$ be a hypothesis for $\alpha$ in $\calK$.
		Observe that there are sets $X_1, X_2 \subseteq X$ of variables such that $X_1 \cap X_2 = \emptyset$ and $\Hyp$ takes the following form:
		\[\Hyp = \{A_x(m) \mid x\in X_1\}\cup \{A_{\bar x}(m)\mid x\in X_2\}.\]
		This holds, since some disjointness axiom is violated otherwise and one can remove corresponding assertions from $\calH$ without breaking the entailment of $\alpha$.
		We define $s_{\Hyp} = \{\ell \mid A_\ell(m) \in \Hyp \}$.
		Then, $s_{\Hyp}$ is a potentially partial assignment over $X$, as for any variable it may neither contain the positive nor the negative literal.
		Now, we prove that $s_{\Hyp} \models \varphi$.
		However, this is easy to see, since $\tup{\calT, \Hyp} \models A_c(m)$ for every clause $c \in \varphi$. 
		Consequently, for each $c \in \varphi$, there is some $\ell \in c$, such that $A_\ell(m) \in \Hyp$, which in turn implies that $\ell \in s_{\Hyp}$.
		Obviously, $s_\Hyp$ can also be extended to a full assignment that still satisfies $\varphi$.
	\end{claimproof}
\end{proof}

\elbotExistSig*
\begin{proof}
  For (1): An \NP-algorithm for the problem can guess a hypothesis $\Hyp$ over the signature $\Sigma$ and, at the same time, guess a repair $\calR$ of the ABox.
  Then, verify that $\tup{\calT, \calR \cup \Hyp} \not\models \bot$ and $\tup{\calT, \calR \cup \Hyp} \models \alpha$ in polynomial time.
	The \NP-hardness can be shown by slightly adapting the reduction in Proposition~\ref{thm:consistent-signature-elbot}, adding a dummy inconsistency over fresh concepts not in $\Sigma$.	
	For (2): The following algorithm shows \SigmaP-membership:
  Guess a set $\Hyp$ such that for all repairs $\calR$ of $\tup{\calT, \calA \cup \Hyp}$, we have $\tup{\calT, \calR} \models \alpha$.
	This requires $\NP$-time to guess the set $\Hyp$ and an $\NP$-oracle to guess a repair $\calR$ as a counter example to the entailment, thus resulting in $\SigmaP$-membership.
	For hardness, we reduce from the standard $\SigmaP$-complete problem of validity for $\exists\forall$-QBFs:
	Instances of this problem are of the form $\Phi \coloneqq \exists Y \forall Z \varphi'$, where $\varphi'$ is a Boolean formula over variables $X = Y \cup Z$.
	Without loss of generality, we can assume that $\varphi' = \neg \varphi$ for some Boolean formula $\varphi$ in CNF.
	The problem asks whether $\Phi$ is valid (or true).
	We construct the following KB $\calK = \tup{\calT, \calA}$, using concept names $N = \{A_x, A_{\bar x} \mid x \in X\} \cup \{V_y \mid y \in Y\} \cup \{A_c \mid c \in \varphi\} \cup \{A_{\varphi}, A_{\bar\varphi}, C\}$.
  We define
  \begin{align*}
 	\calT \dfn {} 	& \{\,A_x \sqcap A_{\bar x} \subsum \bot \mid x\in X\,\} \cup {} \\ 
		& \{\, A_\ell \subsum A_c \mid \ell \in c, c\in\varphi\,\} \cup {} \\ 
		& \left\{ \bigsqcap_{c\in \varphi}A_c\subsum A_\varphi, \quad A_\varphi \sqcap A_{\bar\varphi}\subsum \bot \right\} \cup {} \\ 
		& \{\, A_y \subsum V_y, A_{\bar y} \subsum V_y \mid y\in Y \,\} \cup {} \\ 
		& \left\{ \bigsqcap_{y\in Y} V_y \sqcap A_{\bar \varphi} \subsum C \right\} \text{ and} \\ 
		\calA \dfn {} &\{\,A_z(m), A_{\bar z}(m)\mid z\in Z\,\}  \cup {} \{\,A_{\bar\varphi}(m)\,\}
  \end{align*}
	Finally, let $\Sigma \coloneqq \{m\} \cup \{A_y, A_{\bar y} \mid y \in Y\}$ and $\alpha \coloneqq C(m)$.
	Intuitively, the axioms in $\calT$ ensure: a valid assignment over $X$, satisfaction of each clause for its corresponding literals, and the satisfaction of the formula $\varphi$.
	The final two sets of axioms are needed to encode that hypotheses over $\Sigma$ are assignments over variables from $Y$.
	The ABox $\calA$ intuitively corresponds to encoding all the assignments over variables from $Z$.
	Now $\tup{\calK, \alpha, \Sigma}$ is the desired abduction problem.

  We first observe that $\tup{\calK, \alpha}$ is a valid \ar-abduction problem:
  Obviously, $\calK \models \bot$ when $Z$ is non-empty, due to both $A_z(m)$ and $A_{\bar z}(m)$ being present in the ABox for every $z \in Z$.
  Also, $\calK\not\models_{\ar} \alpha$, as $\calA$ does neither involve the concept name $C$ nor any of the concept names $A_y$, $A_{\bar y}$, or $V_y$ for $y \in Y$.
  The following claim states correctness of the reduction.
	\begin{claim}
		$\Phi$ is true if and only if $\alpha$ has an AR-hypothesis over the signature $\Sigma$ in $\calK$.
	\end{claim}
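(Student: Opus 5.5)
The plan is to prove the two directions separately, using the fact that $\Sigma$ contains no role names. Hence every ABox $\Hyp$ over $\Sigma$ consists only of assertions $A_\ell(m)$ for literals $\ell \in \Lit(Y)$. Only two kinds of $\calT$-inconsistent subsets of $\calA\cup\Hyp$ can occur. The first kind are pairs $\{A_x(m),A_{\bar x}(m)\}$. The second kind are sets that derive $A_\varphi(m)$ via the clause axioms and additionally contain $A_{\bar\varphi}(m)$. Moreover, $C(m)$ can only be derived through the last axiom. This requires $A_{\bar\varphi}(m)$ together with, for every $y\in Y$, at least one of $A_y(m)$ or $A_{\bar y}(m)$.

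($\Rightarrow$) Let $s_Y$ be an assignment over $Y$ such that $s_Y\cup s_Z\not\models\varphi$ for every assignment $s_Z$ over $Z$. I would set $\Hyp\coloneqq\{A_\ell(m)\mid \ell\in s_Y\}$ and fix an arbitrary repair $\calR$ of $\tup{\calT,\calA\cup\Hyp}$. The key step is to show that $\calR\cup\Hyp\cup\{A_{\bar\varphi}(m)\}$ is $\calT$-consistent. Since $\calR$ is consistent, its $Z$-part encodes a partial assignment $p_Z$, and combining it with $s_Y$ respects all disjointness axioms. Suppose the set derived $A_\varphi(m)$. Then every clause would contain a literal of $s_Y\cup p_Z$, so every total extension of $p_Z$ would satisfy $\varphi$ together with $s_Y$. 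This contradicts the choice of $s_Y$. By maximality of $\calR$ we then get $\Hyp\cup\{A_{\bar\varphi}(m)\}\subseteq\calR$. Consequently $\calR$ entails $V_y(m)$ for all $y$ and hence $C(m)$, so $\Hyp$ is an \ar-hypothesis.

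($\Leftarrow$) Let $\Hyp$ be an \ar-hypothesis over $\Sigma$. The first step is to extract $s_Y$. Take any repair $\calR_0$ of $\tup{\calT,\calA\cup\Hyp}$. Since $\tup{\calT,\calR_0}\models C(m)$, for each $y$ the repair $\calR_0$ contains $A_y(m)$ or $A_{\bar y}(m)$, and never both. These assertions belong to $\Hyp$, and choosing one per variable yields an assignment $s_Y$ with $\{A_\ell(m)\mid\ell\in s_Y\}\subseteq\Hyp$. For contradiction, suppose some $s_Z$ satisfies $s_Y\cup s_Z\models\varphi$. The set $S\coloneqq\{A_\ell(m)\mid \ell\in s_Y\cup s_Z\}\subseteq\calA\cup\Hyp$ is $\calT$-consistent, since it contains no complementary pair and does not contain $A_{\bar\varphi}(m)$. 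Extend $S$ to a repair $\calR$. Because $S$ derives $A_\varphi(m)$, $\calR$ cannot contain $A_{\bar\varphi}(m)$. Therefore $\tup{\calT,\calR}\not\models C(m)$, contradicting that $\Hyp$ is an \ar-hypothesis.

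I expect the main obstacle to be the consistency argument in ($\Rightarrow$). There one has to handle repairs whose $Z$-part is only a \emph{partial} assignment, which can happen because a repair may drop both $A_z(m)$ and $A_{\bar z}(m)$. The point to make carefully is that entailment of all $A_c(m)$ from a partial assignment already implies satisfaction of $\varphi$ under all of its extensions. One should also note that $\Hyp$ is $\calT$-consistent, so that $\Hyp\subseteq\calR$ is indeed forced by maximality.
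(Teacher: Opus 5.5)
Your proposal is correct and follows essentially the same route as the paper. For ($\Rightarrow$) you build $\Hyp$ from a witnessing $Y$-assignment, rule out derivation of $A_\varphi(m)$ from any partial $Z$-part of a repair, and conclude by maximality. For ($\Leftarrow$) you read off a full $Y$-assignment from a repair entailing $C(m)$ and exhibit a repair, generated by a satisfying $Z$-assignment, that must exclude $A_{\bar\varphi}(m)$. Your variants are slightly cleaner than the paper's but not substantively different: you check consistency of $\calR\cup\Hyp\cup\{A_{\bar\varphi}(m)\}$ in one step, and you extend $S$ to a repair instead of verifying maximality of the constructed set directly.
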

	
	\begin{claimproof}
	($\Rightarrow$)
	Suppose $\Phi$ is true.
	Then there is an assignment $s \subseteq \Lit(Y)$ such that for all assignments $t \subseteq \Lit(Z)$, $\neg\varphi[s,t]$ is true.
\begin{shortonly}
  Here, $\Lit(\cdot)$ denotes the set of literals over a given set of variables.
\end{shortonly}
	We construct an AR-hypothesis for $\alpha$ from $s$.
	Let $\Hyp_s = \{A_p(m) \mid p\in s\}$.
	Obviously, $\Hyp_s$ is an ABox over $\Sigma$.
	Also, it does not violate any axiom of the form $A_y \sqcap A_{\bar y} \subsum \bot$, since $s$ is an assignment.
	
	We prove that $\tup{\calT, \calA \cup \Hyp_s} \models_{\ar} \alpha$.
	Consider any repair $\calR$ of $\tup{\calT, \calA \cup \Hyp_s}$.
	As $\tup{\calT, \calR} \not\models \bot$, \calR does not violate any axiom of the form $A_x \sqcap A_{\bar x} \sqsubseteq \bot$.
	Hence, $\calR \cap \{A_x(m), A_{\bar x}(m) \mid x \in X\}$ corresponds to (potentially partial) assignments $s_\calR \subseteq s$ and $t_\calR$ over $Y$ and $Z$, respectively.
	We first prove that $\tup{\calT, \calR} \not\models A_\varphi(m)$.
	Suppose to the contrary, that $\tup{\calT, \calR} \models A_\varphi(m)$.
	As \calR is consistent with \calT, this only happens by triggering the axiom $\sqcap_{c \in \varphi} A_c \sqsubseteq A_\varphi$, and in turn an axiom of the form $A_\ell \sqsubseteq A_c$ for each clause $c \in \varphi$.
	But this means that $s_\calR \cup t_\calR$, and hence also $s \cup t_\calR$, is a satisfying assignment for $\varphi$, which is a contradiction to $\neg\varphi[s,t]$ being true for all assignments $t$ over $Z$.
  Indeed, as this argument covers the case $s_\calR = s$, subset-maximality of repairs further yields that $\Hyp_s \subseteq \calR$.
	Moreover, subset-maximality together with the fact that $\tup{\calT, \calR} \not\models A_\varphi(m)$ yields that $A_{\bar \varphi}(m) \in \calR$.
	Consequently, $\tup{\calT, \calR} \models C(m)$.
	
	($\Leftarrow$)
	Suppose $\Phi$ is false.
	Then, for each assignment $s \subseteq \Lit(Y)$, there is an assignment $t \subseteq \Lit(Z)$ such that $\neg\varphi[s,t]$ is false or, equivalently, $\varphi[s,t]$ is true.
	The latter can be stated as: each clause $c\in \varphi$ contains some literal $\ell \in c$ with $\ell\in s\cup t$.
	
	We now prove that $\alpha$ does not have any AR-hypothesis over $\Sigma$ in $\calK$.
	For	contradiction, assume that $\Hyp \subseteq \{A_p(m) \mid p \in \Lit(Y)\}$ is such a hypothesis and consider any repair \calR of $\tup{\calT, \calA  \cup \Hyp}$.
	As $\calR$ does not violate any axiom of the form $A_x \sqcap A_{\bar x} \sqsubseteq \bot$, the subset $\calR_Y \coloneqq \calR \cap \{A_y(m), A_{\bar y}(m) \mid y \in Y\}$ corresponds to a potentially partial assignment $s_\calR$ over $Y$.
	On the other hand, as $\tup{\calT, \calR} \models C(m)$, we also have $\tup{\calT, \calR} \models \sqcap_{y \in Y} V_y(m)$.
	Therefore, $\calR$ contains at least one assertion from $\{A_y(m), A_{\bar y}(m)\}$ for each $y\in Y$, i.e.\ that $s_\calR$ is a full assignment over $Y$.
	By our assumption, there is an assignment $t$ over $Z$ s.t.\ $\varphi[s_\calR, t]$ is true.
	
	Let $\calR_t \coloneqq \calR_Y \cup \{A_\ell(m) \mid \ell \in t\}$.
	Obviously, $\calR_t$ does not violate any of the disjointness axioms in \calT, as it does not contain $A_{\bar \varphi}(m)$ and $s_\calR \cup t$ is an assignment over $X$.
	This further means that $\tup{\calT, \calR_t} \not\models C(m)$.
	Furthermore, $\calR_t$ is subset-maximal: As both $s_\calR$ and $t$ are full assignments, we cannot add any assertion of the form $A_x(m)$ or $A_{\bar x}(m)$ for $x \in X$ without violating one of the disjointness axioms.
	Also, as $\varphi[s_\calR, t]$ is true, we have $\tup{\calT, \calR_t} \models A_\varphi(m)$.
	Hence, we also cannot add $A_{\bar \varphi}(m)$ without violating the corresponding disjointness axiom.
	This shows that $\calR_t$ is a repair of $\tup{\calT, \calA \cup \Hyp}$ that does not entail $\alpha$, contradicting our assumption.
	
	This proves the correctness of the claim.%
	\end{claimproof}
  We conclude by observing that the reduction can be achieved in polynomial time. 
\end{proof}

Once again, the complexity of existence for nontrivial hypothesis is not addressed in the setting of consistent $\ELbot$ KBs.
Moreover, the reduction from the proof of Proposition~\ref{thm:consistent-signature-elbot} does not apply here.
Nevertheless, the following proposition closes this gap by characterizing the complexity of existence for non-trivial hypothesis in $\EL_\bot$ under classical semantics.

\begin{proposition}\label{thm:consistent-nt-elbot}
	For \ELbot, the existence problem for non-trivial hypotheses under classical semantics is \NP-complete.
\end{proposition}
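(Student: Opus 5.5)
Membership in \NP is immediate: guess a set \Hyp of assertions over the signature of $\tup{\calK, \alpha}$ (only individuals from \calK and $\alpha$) with $\alpha \notin \Hyp$. Then check in polynomial time that $\calA \cup \Hyp$ is \calT-consistent and that $\tup{\calT, \calA \cup \Hyp} \models \alpha$. Both checks are classical \ELbot reasoning and hence polynomial. Fresh concept or role names never help in a hypothesis, so the guessed ABox has polynomial size.

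For hardness I would reduce from \SAT, reusing the construction of \cref{thm:consistent-signature-elbot}. The explicit signature restriction is replaced by the existential trick from \cref{ex:non-triv}, exactly as in the TBox of the proof of \cref{thm:elbot-exist-nontriv}(1). Given a CNF $\varphi$ over $X$, take $\calA \dfn \emptyset$, the observation $\alpha \dfn A_\varphi(m)$, and
\[
\calT \dfn \{A_x \sqcap A_{\bar x} \sqsubseteq \bot \mid x \in X\} \cup \{A_\ell \sqsubseteq \exists r_c.B \mid \ell \in c, c \in \varphi\} \cup \Bigl\{\bigsqcap_{c\in\varphi} \exists r_c.B \sqsubseteq A_\varphi,\ A_\varphi \sqcap B \sqsubseteq \bot\Bigr\}.
\]
Since $m$ is the only individual, and hypotheses may only use individuals from \calK and $\alpha$, every hypothesis lives on $m$ alone. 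For the direction from left to right, a satisfying assignment $s$ yields the non-trivial hypothesis $\{A_\ell(m) \mid \ell \in s\}$. It is consistent, because each variable receives exactly one literal and $B(m)$ is never derived (the successors of $m$ are anonymous). It entails $A_\varphi(m)$, because every clause gets some $\exists r_c.B$.

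The converse is the main step. Let \Hyp be a consistent non-trivial hypothesis. First I would argue that the assertions $r_c(m,m)$ cannot contribute a \emph{named} $B$-successor. Any such successor is $m$ itself, and this would require $B(m)$, which together with the derived $A_\varphi(m)$ violates $A_\varphi \sqcap B \sqsubseteq \bot$. Hence, in the canonical model, each $\exists r_c.B$ at $m$ must be produced by some axiom $A_\ell \sqsubseteq \exists r_c.B$ with $A_\ell(m)$ entailed, and $A_\ell(m)$ can only be entailed by being asserted. Since $A_\varphi(m) \notin \Hyp$ by non-triviality, the entailment must go through the conjunction axiom. So for every clause $c$ some $A_\ell(m) \in \Hyp$ with $\ell \in c$, and by the disjointness axioms these literals are conflict-free. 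They therefore form a partial satisfying assignment, which extends to a full one. The delicate point I expect to need care with is making this canonical-model argument precise, namely that $m$ satisfies $\exists r_c.B$ only via an asserted $r_c(m,m)$ with $B(m)$ or via some $A_\ell$. Since $\calA = \emptyset$ and the reduction is clearly polynomial, this yields \NP-hardness.
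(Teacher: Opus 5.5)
Your proposal is correct and follows the paper's proof essentially step by step. It uses the same guess-and-check NP membership argument and the same reduction from SAT, with empty ABox, observation $A_\varphi(m)$, and a TBox built from $A_\ell \sqsubseteq \exists r_c.B$, $\bigsqcap_{c} \exists r_c.B \sqsubseteq A_\varphi$ and $A_\varphi \sqcap B \sqsubseteq \bot$. Your converse direction is also the paper's: $B(m)$ cannot occur in a consistent hypothesis entailing $A_\varphi(m)$, so the required $B$-successors must be anonymous, they can only stem from asserted $A_\ell(m)$, and these assertions form a partial satisfying assignment.
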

\begin{proof}
	Membership follows similarly as in the proof of Proposition~\ref{thm:consistent-signature-elbot}. We guess a set \Hyp of assertions over the signature of $\calK$.
	Then, verify that $\calA \cup \Hyp$ is \calT-consistent and $\tup{\calT, \calA \cup \Hyp} \models \alpha$. 
	Both checks can be performed in polynomial time.
	
	\emph{Hardness:} We reduce from propositional satisfiability.
	To this aim, let $\varphi = \{c_1,\dots,c_p\}$ be a CNF formula over propositions $X =\{x_1,\dots,x_n\}$, where each $c_i$ is a clause.
	We construct the KB $\calK = \tup{\calT, \calA}$ with concept names $N=\{A_x, A_{\bar x} \mid x\in X\} \cup \{r_c \mid c \in\varphi\} \cup \{B, A_\varphi\}$, where
	\begin{align*}
		\calT\dfn {} & \{\,A_x \sqcap A_{\bar x} \subsum \bot \mid x\in X\, \} \cup {} \\ & \{\,A_\ell \subsum \exists r_c.B \mid \ell \in c, c\in\varphi\,\} \cup {} \\
		& \left\{\,\bigsqcap_{c\in \varphi}\exists r_c.B\subsum A_\varphi\,\right\} \cup {} \\
		& \{\, A_\varphi\sqcap B\subsum \bot\,\} \text{ and} \\
		\calA \dfn {} & \emptyset.
	\end{align*} 
	Furthermore, let $\alpha \coloneqq A_\varphi(m)$ for an individual name $m$.
	Now, $\tup{\calK, \alpha}$ is our desired abduction problem.
	Since the ABox is empty, we clearly have $\calK \not\models \alpha$.
	\begin{claim}
		$\varphi$ is satisfiable iff $\alpha$ has a non-trivial hypothesis in \calK.
	\end{claim}
	\begin{claimproof}
		($\Rightarrow$) Let $s \subseteq \Lit(X)$ be a satisfying assignment for $\varphi$ seen as a set of literals.
		Then, for each clause $c\in\varphi$ there is some $\ell\in s\cap c$.
		We define the ABox
    \[\Hyp \coloneqq \{A_\ell(m) \mid \ell\in s\}.\]
		Since $s$ is an assignment, the set $\Hyp$ is consistent with $\calT$:
		No inconsistency is triggered due to axioms $A_x\sqcap A_{\bar x}\subsum \bot$, as $\Hyp$ only contains exactly one assertion for each variable.
		Now, we prove that $\calK_{\Hyp} \models \alpha$, where $\calK_{\Hyp} \coloneqq \tup{\calT, \Hyp}$.
		Since each clause is satisfied, we have $\calK_{\Hyp} \models \exists r_c.B(m)$ for each $c\in\varphi$ due to the corresponding TBox axiom ``$A_\ell\subsum \exists r_c.B$'', which in turn implies that $\calK_{\Hyp} \models A_\varphi(m)$ due to the TBox axiom $\bigsqcap_{c\in\varphi} \exists r_c.B\subsum A_\varphi$.

		($\Leftarrow$)
		Let $\Hyp$ be a non-trivial hypothesis for $\alpha$ in $\calK$ and thus
		we have $A_\varphi(m) \not\in \calH$.
		Hence, the entailment of $A_\varphi(m)$ must be obtained via the axiom $\left\{ \bigsqcap_{c\in\varphi} \exists r_c.B \subsum A_\varphi \right\}$.
		Further, we have $B(m) \not\in \calH$, as this assertion conflicts with $A_\varphi(m)$.
		As $\Hyp$ only contains assertions over the individual $m$, this means that the existence of $r_c$-successors of $m$ in $B$ must be achieved by anonymous individuals.
		But these can only be entailed using axioms of the form $A_\ell \subsum \exists r_c.B$, which means that for each $c \in \varphi$, there is some $\ell \in c$ s.t.\ $A_\ell(m) \in \calR$.
		Due to the axioms of the form $A_x \sqcap A_{\bar x} \subsum \bot$, there is at most one of the assertions $A_x(m)$ and $A_{\bar x}(m)$ in \calH for each $x$, so \Hyp corresponds to a (partial) assignment $s_\calH$ over $X$.
		From the above, we have that for each $c \in \varphi$, there is some $\ell \in c \cap s_\calH$, i.e., $s_\calH$ satisfies $\varphi$.
	\end{claimproof}
\end{proof}

\elbotExistNontriv*
\begin{proof}
  We begin with the case of \brave semantics.
  We first show \NP-membership:
  Existence of a non-trivial \brave-hypothesis for some \brave-abduction problem $\tup{\tup{\calT, \calA}, \alpha}$ can be checked by guessing a candidate hypothesis \Hyp over the signature of $\tup{\calK, \alpha}$, as well as a candidate repair \calR of $\tup{\calT, \calA \cup \Hyp}$ and verifying that \Hyp is non-trivial, \calR is a repair, and $\tup{\calT, \calR} \models \alpha$ in polynomial time.

  The \NP-hardness can be shown by slightly adapting the reduction in Proposition~\ref{thm:consistent-nt-elbot}.
  We add a dummy inconsistency and follow the similar idea as in the proof of Theorem~\ref{thm:elbot-exist-sig}. 
  	 
  We now turn to \ar semantics.
  The following algorithm shows \SigmaP-membership:
  Given an \ar-abduction problem $\tup{\calK, \alpha}$ over some signature $\Sigma$, guess a hypothesis \Hyp over $\Sigma$, and verify that $\alpha \not\in \Hyp$ and $\tup{\calT, \calA \cup \Hyp} \models_\ar \alpha$.
  The latter is \ar-entailment and can hence be handled by querying a \coNP oracle.
  
  For hardness, we reduce from validity problem for $\exists\forall$-QBFs.
  Let $\Psi = \exists Y\forall Z \; \psi$ be a $\exists\forall$-QBF, where $\psi$ is in DNF, and let $X = Y \cup Z$.
  We construct a TBox \calT using concept names $N = \{A_x, A_{\bar x} \mid x \in X\} \cup \{C, A_\psi\}$.
  Intuitively, \calT expresses that for any term $t \in \psi$, the conjunction of concepts corresponding to the literals in $t$ entails the concept $A_\psi$, which represents satisfaction of $\psi$.
  To this end, we define   $\calK\dfn \tup{\calT,\calA}$, where 
  \begin{align*}
  	\calT \dfn {} &\{\,A_x \sqcap A_{\bar x} \subsum \bot \mid x\in X\, \} \cup {} \\ 
  	& \left\{\,C\sqcap \bigsqcap_{\ell\in t}A_\ell \subsum A_\psi \mid t\in\psi\,\right\}, \\ 
  	\calA \coloneqq {} & \{\,A_z(m), A_{\bar z}(m)\mid z\in Z\}, 
  \end{align*}
	for an individual name $m$.
  Here, the first set of axioms ensures that repairs encode (potentially partial) assignments over $X$ and the second encodes that $\psi$ is satisfied iff at least one its terms is satisfied.
  Now, $\tup{\calK, A_\psi(m)}$ is the desired abduction problem.
  It is a valid abduction instance, as \calK is obviously inconsistent, and $\calK \not\models_{\ar} A_\psi(m)$, as \calK does not contain the assertion $C(m)$.
  Observe that $\calH_{\text{triv}} \dfn \{A_\psi(m)\}$ is the trivial \ar-hypothesis for $\tup{\calK, \alpha}$.
  However, we are considering existence of non-trivial hypotheses here, i.e., one that does not contain $A_\psi(m)$.
  For correctness, we prove the following claim.
  \begin{claim}\label{claim:nt-verification-el}
  	$\Psi$ is true iff $\tup{\calK, \alpha}$ admits a non-trivial \ar-hypothesis..
  \end{claim}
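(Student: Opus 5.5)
For the direction from left to right, I will assume $\Psi$ is true and fix an assignment $s \subseteq \Lit(Y)$ such that $\psi[s,t]$ holds for every assignment $t$ over $Z$. The candidate hypothesis is
\[\Hyp_s \coloneqq \{C(m)\} \cup \{\, A_\ell(m) \mid \ell \in s \,\}.\]
It is non-trivial because it does not contain $A_\psi(m)$. To show that $\tup{\calT, \calA \cup \Hyp_s} \models_\ar A_\psi(m)$, I take an arbitrary repair $\calR$ and argue in two steps.

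First, $\calR$ contains all of $\Hyp_s$. The only conflicts in $\calA \cup \Hyp_s$ come from the disjointness axioms $A_x \sqcap A_{\bar x} \subsum \bot$. For each $y \in Y$ the set $\Hyp_s$ contains only one of $A_y(m), A_{\bar y}(m)$, and $\calA$ contains neither. The assertion $C(m)$ is involved in no conflict at all, and $A_\psi$ does not occur on the left of any disjointness axiom. Maximality therefore forces every assertion of $\Hyp_s$ into $\calR$.

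Second, $\calR$ induces a full assignment over $Z$. Maximality forces exactly one of $A_z(m), A_{\bar z}(m)$ into $\calR$ for each $z$, which defines an assignment $t_\calR$. Since $\psi[s,t_\calR]$ holds, some term $t$ of $\psi$ has all its literals in $s \cup t_\calR$. The axiom $C \sqcap \bigsqcap_{\ell \in t} A_\ell \subsum A_\psi$ then gives $\tup{\calT,\calR} \models A_\psi(m)$.

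For the direction from right to left, I take a non-trivial \ar-hypothesis $\Hyp$. Since $A_\psi(m) \notin \Hyp$, the entailment in any repair must come from one of the term axioms; no other axiom derives $A_\psi$. All assertions are about the single individual $m$ (role assertions are irrelevant, as no role names occur in $\calT$). So in each repair $\calR$ we need $C(m) \in \calR$ together with $A_\ell(m) \in \calR$ for all literals $\ell$ of some term. I then define $s$ from $\Hyp$ as follows. If $\Hyp$ contains both $A_y(m)$ and $A_{\bar y}(m)$ for some $y \in Y$, I pick either one; if it contains neither, I choose arbitrarily. Either way $s$ becomes a full assignment over $Y$.

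Given any assignment $t$ over $Z$, I build the set
\[\calR_0 \coloneqq \{C(m)\} \cup \{\, A_\ell(m) \mid \ell \in s \cup t \,\} \cap (\calA \cup \Hyp),\]
which is consistent, and extend it to a repair $\calR$ of $\calA \cup \Hyp$. I expect this extension step to be the main subtlety. The repair must not pick up literals contradicting $s \cup t$. This holds because every assertion $A_x(m)$ or $A_{\bar x}(m)$ outside $s \cup t$ conflicts with the complementary literal already present, whenever that complementary assertion is in $\calA \cup \Hyp$.

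The one gap is a variable $y$ whose literal from $s$ is absent from $\Hyp$. Then the repair may add the opposite literal. To handle this I define $s$ after the fact: I take $s$ to be the $Y$-literals of $\calR$, noting these are consistent and that only literals present in $\Hyp$ can appear. I then complete any missing variable arbitrarily. Missing variables cannot be used by any satisfied term, because the entailment uses only literals that actually occur in $\calR$.

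Concretely, I first fix one repair to read off $s$. For each $t$, I take a repair extending the $Y$-part of that repair together with the $t$-literals over $Z$ and $C(m)$. The $Y$-part stays fixed, since its literals have no conflict with the $Z$-literals, and $C(m)$ conflicts with nothing. Because $\Hyp$ is an \ar-hypothesis, $A_\psi(m)$ is entailed in this repair, so some term of $\psi$ is made true by literals of $s \cup t$. Hence $\psi[s,t]$ holds for every $t$, and $\Psi$ is true.
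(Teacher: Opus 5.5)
Your proof is correct. The left-to-right direction is the paper's argument: the same hypothesis $\{C(m)\}\cup\{A_\ell(m)\mid \ell\in s\}$, and the same two observations about repairs. For the converse you take a somewhat different route, summarised below.

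\begin{itemize}
\item \textbf{The paper.} It first normalises $\Hyp$. W.l.o.g.\ $\Hyp$ contains only assertions $A_\ell(m)$ and $C(m)$, and at most one of $A_y(m),A_{\bar y}(m)$ for each $y$; the paper argues that deleting $A_y(m)$ from such a pair preserves being an \ar-hypothesis. It then reads a partial assignment $\theta_Y$ directly off $\Hyp$, and for each $\theta_Z$ writes down the explicit repair $\{C(m)\}\cup\{A_\ell(m)\mid\ell\in\theta_Y\cup\theta_Z\}$.
\item \textbf{Your route.} You leave $\Hyp$ untouched and read $s$ off the $Y$-part $P$ of one fixed repair $\calR^\ast$. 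For each $t$ you extend $P\cup\{A_\ell(m)\mid \ell\in t\}\cup\{C(m)\}$ to some repair.
\end{itemize}

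\textbf{What your route buys.} It avoids the normalisation step. That step is not automatic for \ar semantics, given that \ar-hypotheses are not convex: the paper needs every repair of the reduced ABox to remain a repair of the original one. Extraneous assertions in $\Hyp$ (fresh names, role assertions, copies of assertions in $\calA$) simply ride along in your extension.

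\textbf{What it costs, and one justification to fix.} You must show that the extension adds no new $Y$-literals. The reason you give (no conflicts between $Y$- and $Z$-literals, and $C(m)$ conflict-free) only shows that your starting set is consistent. The correct reason is maximality of $\calR^\ast$. Since the only conflicts are the pairs $\{A_x(m),A_{\bar x}(m)\}$, a variable $y$ with no literal in $\calR^\ast$ has no literal in $\Hyp$ at all. For every $y$ represented in $P$, the opposite literal is blocked by the conflict.

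\textbf{Two smaller remarks.}
\begin{itemize}
\item By the same reasoning, your first attempt already works, provided $s$ agrees with $\Hyp$ whenever $\Hyp$ contains exactly one of the two literals. So the ``gap'' you identified there does not actually arise.
\item You silently use $C(m)\in\Hyp$. This follows from your earlier observation that $C(m)$ must lie in every repair, and is worth stating explicitly.
\end{itemize}
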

  \begin{claimproof}
  	($\Rightarrow$) Let $\theta_Y$ be an assignment over $Y$, represented as a set of literals, witnessing that $\Psi$ is true, i.e., f.a.\ assignments $\theta_Z$ over $Z$, $\theta_Y \cup \theta_Z \models t$ for some $t \in \psi$.
  	Define $\Hyp \dfn \{C(m)\} \cup \{\, A_\ell(m) \mid \ell \in \theta_Y \,\}$.
    We show that $\calH$ is an \ar-hypothesis for $\alpha$ in $\calK$.
    Consider any repair \calR of $\calK' \coloneqq \tup{\calT, \calA \cup \Hyp}$.
    Since $\theta_Y$ is an assignment, \Hyp is \calT-consistent.
    Even more, no subset of \Hyp is contained in any conflict of $\calK'$.
    Hence, we have $\Hyp \subseteq \calR$.
    Now let $\theta_Z \dfn \{\, \ell \in Z \cup \bar Z \mid A_\ell(m) \in \calR \,\}$.
    We argue that $\theta_Z$ is a (full) assignment over $Z$.
    As \calR is \calT-consistent, at most one of the assertions $A_x(m)$ and $A_{\bar x}(m)$ is in \calR for each $x \in X$ due to the first form of axioms in \calT.
    On the other hand, since \calR is subset-maximal among the \calT-conistent subsets of $\calA \cup \Hyp$, it contains at least one of those to assertions for each $x \in X$.    Hence, there is some term $t \in \psi$ s.t.\ $t \subseteq \theta_Y \cup \theta_Z$ by our assumption on $\theta_Y$.
    Since $\Hyp \subseteq \calR$ and by construction of \calT, this implies that $\tup{\calT, \calR} \models A_\psi(m)$.
  	This proves that $\calK' \models_\ar A_\psi(m)$, since the argument applies to all repairs \calR of $\calK'$.

  	($\Leftarrow$) Let \Hyp be a non-trivial \ar-hypothesis for $\tup{\calK, \alpha}$.
    We first argue that w.l.o.g., \Hyp only contains assertions of the form $A_\ell(m)$ and $C(m)$:
    It can only contain assertions over individual $m$, as it may not use fresh individuals.
    W.l.o.g.\ it does not contain assertions using concept and role names not occuring in $\tup{\calK, \alpha}$ or already present in \calA, as including these assertions in \Hyp does not contribute to the entailment of $A_\psi(m)$.
    Finally, $A_\psi(m) \not\in \Hyp$, as \Hyp is non-trivial.

    It is now easy to see that $C(m) \in \Hyp$, as otherwise $A_\psi(m)$ could not be entailed.
    Further, we can w.l.o.g. assume that \Hyp contains at most one of the assertions $A_y(m)$ and $A_{\bar y}(m)$ for each $y \in Y$.
    This holds since we consider \ar semantics, so $\alpha$ is entailed for all repairs of $\calK' \dfn \tup{\calT, \calA \cup \Hyp}$.
  	More precisely, assume that $\{A_y(m),A_{\bar y}(m)\} \subseteq \Hyp$ for some $y\in Y$ and let $\Hyp \coloneqq \Hyp \setminus \{A_y(m)\}$.
    Then any repair \calR of $\tup{\calT, \calA \cup \Hyp'}$ is also a repair of $\calK'$, and hence we have $\tup{\calT, \calR} \models \alpha$.
  	Assume further that $\tup{\calT,\calA\cup \calH_d}\not\models_\ar \alpha$ for $\calH_d\dfn \calH\setminus\{A_d(m)\}$ and $d\in\{y,\bar y\}$, i.e., both assertions are necessary to entail $\alpha$.
    Consequently, $\Hyp'$ is also a non-trivial \ar-hypothesis for $\tup{\calK, \alpha}$.

  	Now, define the (potentially partial) assignment $\theta_Y \dfn \{\, \ell \mid A_\ell(m) \in \calH \,\}$ over $Y$.
  	To finish the proof, we show that for each assignment $\theta_Z$ over $Z$, we have $\theta_Y \cup \theta_Z \models \psi$.
    Let $\calR_Z \dfn \{\, A_\ell(m) \mid \ell \in \theta_Y \cup \theta_Z \,\} \cup \{C(m)\}$.
    As $\theta_Y$ and $\theta_Z$ are assignments, it is easy to see that $\calR_Z$ is a repair of $\calK'$.
    Hence, we have $\tup{\calT, \calR_Z}$ by assumption.
    But this implies that $\tup{\calT,\calR_Z} \models A_\psi(m)$ by assumption, and hence there is some term $t \in \psi$, such that $t \subseteq \calR_Z$ by construction of \calT.
    Consequently, $\theta_Y \cup \theta_Z \models \psi$.
    Since $\theta_Z$ is an arbitrary assignment over $Z$, we conclude that $\Psi$ is true.	
  \end{claimproof}
\end{proof}

\elbotExistCc*
\begin{proof}
  We begin with the case of \ar semantics.
  Consider some \ar-abduction problem $\tup{\calK, \alpha}$.
  By \cref{lem:ar-triv}, existence of a conflict-confining \ar-hypothesis for $\tup{\calK, \alpha}$ is equivalent to existence of any \ar-hypothesis for $\tup{\calK, \alpha}$.
  Hence, the complexity coincides with that for existence of general \ar-hypotheses, which is \coNP-complete by \cref{thm:elbot-exist-general}.
	
	We now turn to \brave semantics.
  For membership, consider a \brave-abduction problem $\tup{\calK, \alpha}$.
  To check for existence of a conflict-confining \brave-hypothesis, we can guess an ABox \Hyp over the signature of \calK and $\alpha$, and verify that it is a \brave-hypothesis as well as conflict-confining.
  This algorithm shows membership in $\NP^{\NP}$, as both checks can be done by querying an \NP-oracle:
  The former is \brave-entailment and therefore in \NP, while the latter is in \coNP by \cref{lem:cc-compl}.
	
	For hardness, we reduce from non-validity of $\forall\exists$-QBFs to existence of conflict-confining \brave-hypotheses.
  Let $\Phi\dfn \forall Y\exists Z.\varphi$, where $\varphi$ is a CNF represented as a set of clauses, where clauses are sets of literals.
  Let $X \coloneqq Y \cup Z$.
	We construct the KB $\calK = \tup{\calT, \calA}$, using concept names $N = \{A_x, A_{\bar x} \mid x \in X\} \cup \{V_y \mid y \in Y\} \cup \{A_c \mid c \in \varphi\} \cup \{A_{\varphi}, A_{\bar\varphi}, C\}$ as in the proof of \cref{thm:elbot-exist-sig}.
  As we do not have an explicit \SignatureRestriction, we use the condition of being conflict-confining to encode this restriction:
  Using the idea from \cref{ex:brave-cc}, we ensure that any hypothesis that uses symbols outside the intended signature $\Sigma = \{\, A_y, A_{\bar y} \mid y \in Y \,\} \cup \{m\}$ introduces new conflicts, and is thus not conflict-confining.
  For this, we use additional concept names $C_d$ and $B_d$.
	
	Using these ideas, we define
	\begin{align*}
		\calT \coloneqq {} & \{\, C_d\sqcap \bigsqcap_{y\in Y} V_y \sqcap A_{\bar \varphi} \subsum C\, \} \cup {} \\
		& \{\,C_d\sqcap A_y \subsum V_y, C_d\sqcap A_{\bar y} \subsum V_y \mid y\in Y\,\} \cup {} \\
		& \{\,A_x \sqcap A_{\bar x} \subsum \bot \mid x\in X\, \}  \cup {} \\
		& \{\, C_d\sqcap A_\ell \subsum A_c \mid \ell \in c, c\in\varphi\, \}\cup {} \\ 
		& \{\,C_d\sqcap \bigsqcap_{y\in Y} V_y\sqcap \bigsqcap_{c\in \varphi}A_c\subsum A_\varphi\, \} \cup {} \\ 
		& \{\, A_\varphi \sqcap A_{\bar\varphi}\subsum \bot\,\} \cup \{\, A_\varphi\sqcap B_d \subsum \bot\, \}  \cup {} \\ 
		& \{\, C_d\sqcap B_d \subsum \bot\, \} \cup {} 
		\{\, C\sqcap B_d \subsum \bot\, \} \cup {}\\ 
		& \{\, A_c\sqcap B_d \subsum \bot \mid c\in\varphi\, \} \cup {} 
		 \{\, V_y\sqcap B_d \subsum \bot\, \} \text{ and} \\
    \calA \coloneqq & \{\, A_z(m), A_{\bar z}(m) \mid z\in Z \,\} \cup \{A_{\bar\varphi}(m), B_d(m), C_d(m)\},
	\end{align*}
	where $m$ is an individual.
	Finally, let $\calK \coloneqq \tup{\calT, \calA}$ and $\alpha \coloneqq C(m)$.

	Axioms in each line of $\calT$ encode the following intuition:
	(i) and (ii) enforce a complete assignment over $Y$ as a counter-example to the satisfaction of $\varphi$,
	(iii) ensures a valid assignment over $X$,
	(iv) satisfaction of clauses via their literals,
	(v) satisfaction of $\varphi$, but not solely by a partial assignment over $Z$ variables, and 
	(vi) satisfaction of $\varphi$ causes a new conflict.
  The remaining axioms use the idea from \cref{ex:brave-cc} to implicitly enforce a \SignatureRestriction, disallowing certain assertions in any conflict-confining hypothesis.

	Now $\tup{\calK, \alpha}$ is our desired abduction problem.
	It is a \brave-abduction problem:
  The KB \calK is inconsistent and as it contains no assertions of the form $A_\ell(m)$ for any $\ell\in\{y,\bar y\}$, we have $\calK \not\models C(m)$.
	
	For correctness, we prove the following claim.
	
	\begin{claim}\label{claim:el-cc-brave}
		$\Phi$ is false iff $\tup{\calK, \alpha}$ admits a conflict-confining \brave-hypothesis.
	\end{claim}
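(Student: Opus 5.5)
The plan is to prove the two directions of \cref{claim:el-cc-brave} separately. Throughout, I identify an assignment $s$ over $Y$ (as a set of literals) with the ABox $\Hyp_s \coloneqq \{\, A_\ell(m) \mid \ell \in s \,\}$. Note that $\Phi = \forall Y \exists Z.\varphi$ is false iff there is a full assignment $s$ over $Y$ such that $\varphi[s,t]$ is false for every assignment $t$ over $Z$.

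\emph{($\Rightarrow$)} Fix such an $s$; I will show that $\Hyp_s$ is a conflict-confining \brave-hypothesis.

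For the \brave part, pick any full assignment $t$ over $Z$ and consider
\[\calR_0 \coloneqq \Hyp_s \cup \{\, A_\ell(m) \mid \ell \in t \,\} \cup \{C_d(m), A_{\bar\varphi}(m)\}.\]
\begin{itemize}
  \item $\calR_0$ is $\calT$-consistent. The disjointness axioms in (iii) are respected because $s \cup t$ is an assignment. $B_d(m) \notin \calR_0$. Finally, $A_\varphi(m)$ is not derived: by axiom (v) this would require every $A_c(m)$, i.e.\ $\varphi[s,t]$ to be true.
  \item $\calR_0$ entails $C(m)$. Axiom (ii) yields all $V_y(m)$, and axiom (i) then fires.
  \item Extending $\calR_0$ to a repair of $\tup{\calT, \calA \cup \Hyp_s}$ preserves this entailment.
\end{itemize}

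For conflict-confinement, I classify the minimal inconsistent subsets $\calC$ of $\calA \cup \Hyp_s$ that meet $\Hyp_s$, by the disjointness axiom that is violated.
\begin{itemize}
  \item $A_x \sqcap A_{\bar x} \sqsubseteq \bot$ for $x \in Y$: impossible, since $s$ is an assignment.
  \item Any axiom involving $B_d$: here $\calC \ni B_d(m)$. Every derivation of $V_y$, $A_c$, $C$ or $A_\varphi$ uses $C_d(m)$, so $\{C_d(m), B_d(m)\} \subseteq \calC$. Minimality then forces $\calC = \{C_d(m), B_d(m)\}$, which contains no element of $\Hyp_s$.
  \item $A_\varphi \sqcap A_{\bar\varphi} \sqsubseteq \bot$: the set $\calC$ must derive all $A_c$, i.e.\ encode a partial assignment $s' \cup t'$ with $s' \subseteq s$ satisfying $\varphi$. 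Extending $t'$ to a full assignment $t$ would make $\varphi[s,t]$ true, contradicting the choice of $s$.
\end{itemize}
Hence no new conflicts arise.

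\emph{($\Leftarrow$)} Let $\Hyp$ be a conflict-confining \brave-hypothesis. The only individual available is $m$, so every assertion in $\Hyp$ has the form $D(m)$ or $r(m,m)$. Role assertions and fresh concept names cannot contribute to any derivation, so I discard them; I also discard assertions already in $\calA$. Conflict-confinement then excludes the remaining candidates one by one:
\begin{itemize}
  \item Each of $C(m)$, $A_\varphi(m)$, $A_c(m)$ and $V_y(m)$ forms a new two-element conflict with $B_d(m)$. This conflict is minimal because each single assertion is $\calT$-consistent, and it is new because the assertion lies outside $\calA$.
  \item $\Hyp$ cannot contain both $A_y(m)$ and $A_{\bar y}(m)$ for any $y \in Y$, as this pair would be a new conflict.
\end{itemize}
So $\Hyp = \Hyp_s$ for a partial assignment $s$ over $Y$.

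Next take a repair $\calR$ of $\tup{\calT, \calA \cup \Hyp}$ with $\tup{\calT,\calR} \models C(m)$. Since $C(m) \notin \Hyp$, the entailment must come from axiom (i), so $\calR$ derives every $V_y(m)$. Because $V_y(m)$ is not asserted, each $V_y(m)$ must come from some $A_y(m)$ or $A_{\bar y}(m)$ in $\calR$; these lie in $\Hyp$, so $s$ is a full assignment over $Y$.

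Now suppose, for contradiction, that $\varphi[s,t]$ is true for some assignment $t$ over $Z$. The set
\[\calS \coloneqq \{C_d(m), A_{\bar\varphi}(m)\} \cup \Hyp_s \cup \{\, A_\ell(m) \mid \ell \in t \,\}\]
derives $A_\varphi(m)$ and is therefore inconsistent. Let $\calC \subseteq \calS$ be a minimal inconsistent subset. The set $\calS$ contains no $B_d(m)$ and no complementary pair of literals, so $\calC$ can only violate $A_\varphi \sqcap A_{\bar\varphi} \sqsubseteq \bot$. Deriving $A_\varphi(m)$ via (v) requires every $V_y(m)$, and hence $Y$-literals from $\Hyp$. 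Thus $\calC \not\subseteq \calA$, so $\calC$ is a new conflict, contradicting conflict-confinement. Therefore $\varphi[s,t]$ is false for all $t$, and $\Phi$ is false.

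The main obstacle is the exhaustive case analysis on conflicts in both directions. In particular, I must check that every conflict involving $B_d$ collapses to the existing conflict $\{C_d(m), B_d(m)\}$, and that in direction ($\Leftarrow$) no unexpected assertion in $\Hyp$ opens an alternative derivation path. I would organise both arguments as a single lemma listing all minimal $\calT$-inconsistent subsets of ABoxes over $m$.
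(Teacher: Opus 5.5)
Your proof is correct and follows the paper's argument. You use the same hypothesis $\calH_s$ built from a $Y$-assignment that falsifies $\exists Z.\varphi$, the same use of $B_d(m)$ to exclude $C(m)$, $A_\varphi(m)$, $A_c(m)$ and $V_y(m)$, and conflict-confinement to block any derivation of $A_\varphi(m)$; the only difference is that you enumerate minimal conflicts directly, whereas the paper argues that $\calR'\cup\calH$ is consistent for every repair $\calR'$ of $\calK$.

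Including $C_d(m)$ in your set $\calS$ is essential, and it fixes a slip in the paper: the paper's set $\calA_Z\cup\{A_{\bar\varphi}(m)\}\cup\calH$ lacks $C_d(m)$, so it never derives $A_\varphi(m)$ whether or not $\varphi$ is satisfied.
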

	\begin{claimproof}
		Suppose $\Phi$ is false and let $\theta_Y$ (seen as a set of literals over $Y$) be an assignment over $Y$ s.t.\ $\forall Z \varphi[\theta_Y]$ is false, where $\varphi[\theta_Y]$ denotes the formula obtained from $\varphi$ by applying the partial assignment $\theta_Y$.
		Define $\calH \coloneqq \{\, A_\ell(m) \mid \ell \in \theta_Y \,\}$. 
		We now show that \Hyp is a conflict-confining \brave-hypothesis for $\tup{\calK, \alpha}$ by showing (i) $\tup{\calT, \calA\cup\calH} \models_\brave \alpha$ and (ii) $\calH$ is conflict-confining.

		For (i), let $\calB \coloneqq \Hyp \cup \{C_d(m), A_{\bar\varphi}(m)\}$.
    It is easy to see that \calB is \calT-consistent, so there is a repair $\calR \supseteq \calB$ of $\tup{\calT, \calA \cup \Hyp}$.
    Further, we have $\tup{\calT, \calB} \models C(m)$ and therefore $\tup{\calT, \calR} \models C(m)$.

		To see (ii), we first observe that $\calH$ does not trigger any inconsistency due to an axiom of the form $A_y\sqcap A_{\bar y}\subsum \bot$, as $\theta_Y$ is an assignment.
    Hence, \Hyp is \calT-consistent.
    Next, we show that for any repair $\calR'$ of \calK, the set $\calR' \cup \Hyp$ is \calT-consistent.
    Due to the axioms of the form $A_x \sqcap A_{\bar x} \sqsubseteq \bot$, the set $\calR' \cap \{\, A_z(m), A_{\bar z}(m) \mid z \in Z \,\}$ corresponds to a (potentially partial) assignment over $Z$.
    Since $\varphi[\theta_Y]$ is false for all assignments over $Z$ by assumption, we have $\tup{\calT, \calR' \cup \Hyp} \not\models A_\varphi$ by construction of \calT.
    This means that $A_\varphi \sqcap A_{\bar \varphi} \sqsubseteq \bot$ does not trigger.
    The remaining disjointness axioms use the idea from \cref{ex:brave-cc} to avoid conflicts in $\tup{\calT, \calR' \cup \Hyp}$:
    Due to the axiom $C_d \sqcap B_d \sqsubseteq \bot$, we have $C_d(m) \not\in \calR'$ or $B_d(m) \not\in \calR'$.
    In both cases, none of the remaining disjointness axioms can trigger, as either the assertion $B_d(m)$ is missing (and not entailed) or no assertion of the form $C(m)$, $A_\varphi(m)$, $A_c(m)$, or $V_y(m)$ is entailed from $\tup{\calT, \calR' \cup \Hyp}$.

    Conversely, suppose that there is a conflict-confining \brave-hypothesis \Hyp for $\tup{\calK, \alpha}$.
    We first argue that, w.l.o.g., \Hyp only contains assertions of the form $A_y(m)$ and $A_{\bar y}(m)$.
    To this end, note that \Hyp may only contain assertions over the individual $m$, since it is the only individual in $\tup{\calK, \alpha}$.
    Further, we can assume w.l.o.g.\ that \Hyp only contains assertions using concept or role names occurring in \calK, as fresh concept and role names cannot help entailment of $\alpha$ in \calT.
    Additionally we can assume w.l.o.g.\ that it does not contain any assertions already present in \calA.
    Finally, we observe that all other assertions over the signature of $\tup{\calK, \alpha}$ would introduce new conflicts, contradicting the assumption that \Hyp is conflict-confining:
	  The assertion $C(m)$ would introduce the new conflict $\{C(m), B_d(m)\}$, while the assertion $A_\varphi(m)$ would introduce the new conflict $\{A_\varphi(m), B_d(m)\}$.
	  Any assertion $A_c(m)$ for $c \in \varphi$ would introduce the new conflict $\{A_c(m), B_d(m)\}$, and any assertion $V_y(m)$ for $y \in Y$ would introduce the new conflict $\{V_y(m), B_d(m)\}$.

    Next, it is easy to see that \Hyp contains exactly one of the assertions $A_y(m)$ and $A_{\bar y}(m)$ for each $y \in Y$:
    It contains at least one, since $C(m)$ is entailed in some repair, and this is only possible when $\bigsqcap_{y \in Y} V_y$ is entailed in that repair.
    It contains at most one, since otherwise the axiom $A_y \sqcap A_{\bar y} \sqsubseteq \bot$ would lead to a new conflict.
    Consequently, \Hyp corresponds to an assignment $\theta_Y$ over $Y$.
    
    Now consider an assignment $\theta_Z$ over $Z$, represented as a set of literals.
    We now show that $\varphi[\theta_Y \cup \theta_Z]$ is false.
    As this applies to all assignments over $Z$, it implies that $\forall Y \exists Z \varphi(Y,Z)$ is false.
    Let
    \[\calA_Z \coloneqq \{\, A_\ell(m) \mid \ell \in \theta_Z \,\},\]
    i.e., the subset of \calA corresponding to assignment $\theta_Z$.
    The set $\calA_Z$ is \calT-consistent, since $\theta_Z$ is an assignment.
    As for all $y \in Y$ we have $\tup{\calT, \calA_Z} \not\models V_y(m)$, $\calA_Z \cup \{A_{\bar\varphi}(m)\}$ is \calT-consistent.
    Due to \Hyp being conflict-confining, this implies that $\calA_Z \cup \{A_{\bar \varphi}(m)\} \cup \Hyp$ is also \calT-consistent.
    Consequently, we have $\tup{\calT, \calA_Z \cup \Hyp} \not\models A_\varphi(m)$, implying that $\varphi[\theta_Y \cup \theta_Z]$ is false by construction of \calT.
	\end{claimproof}
\end{proof}

\elbotVerifCc*
\begin{proof} 
	Let $\tup{\calK, \alpha}$ be an $\calS$-abduction problem with $\calK = \tup{\calT, \calA}$ and $\Hyp$ a given ABox.
	We begin with the case of \ar semantics.
	By definition, \Hyp is a conflict-confining \ar-hypothesis iff (1) $\tup{\calT, \calA\cup \Hyp} \models_\ar \alpha$, and (2) $\Hyp$ is conflict-confining for \calK, i.e., $\Conf(\tup{\calT, \calA \cup \Hyp}) = \Conf(\tup{\calT, \calA})$.
  Hence, we can check that \Hyp is not conflict-confining in \NP, using the following approach.
	Guess two subsets $\calR$ and $\calC$ of $\calA\cup\calH$, and verify that
  \begin{itemize}
	  \item $\tup{\calT,\calR}\not\models \alpha$, $\tup{\calT,\calR}\not\models \bot$,  and  $\tup{\calT, \calR\cup\{\beta\}}\models\bot$ for every $\beta\in (\calA\cup\calH)\setminus \calR$, and
	  \item $\calC \not\subseteq \calA$, $\tup{\calT, \calC}\models \bot$, and $\tup{\calT, \calC\setminus\{\gamma\}}\not\models \bot$ for any $\gamma\in C$.
  \end{itemize}
 	Both checks can be performed in polynomial time.
  For correctness, note the following.
	If the algorithm acccepts, either we obtain a counter example to \calH being an \ar-hypothesis (if the first check succeeds) or to \calH being conflict-confining (if the second check succeeds).
	This yields membership in \coNP.

	For hardness, we reduce from \ar-entailment.
	To achieve this, let $\calK=\tup{\calT,\calA}$ and $A(a)$ be an instance of \ar-entailment.
	Define $\calK' \coloneqq \tup{\calT',\calA}$, where
  \[\calT' \coloneqq \calT \cup \{X \sqcap A \subsum C\}.\]
	Finally, let $\alpha \dfn C(a)$ and $\Hyp \dfn \{X(a)\}$.
	We observe that $\calK \models_\ar A(a)$ iff $\tup{\calT', \calA \cup \Hyp} \models_\ar \alpha$.
	Note that $\Hyp$ is trivially conflict-confining as it uses a fresh concept name $X$ that cannot participate in any conflict.
	
	We now turn to \brave semantics.
	For membership, observe that $\Hyp$ is a conflict-confining $\brave$-hypothesis iff (1) $\tup{\calT, \calA\cup \Hyp} \models_\brave \alpha$, and (2)  $\Conf(\tup{\calT, \calA \cup \Hyp}) = \Conf(\tup{\calT, \calA})$.
	
	In the case of $\brave$-hypotheses, (1) is entailment of concept assertions for \ELbot and hence in \NP, while (2) can be checked in \coNP by universally guessing a conflict $\calC$ such that $\calC\in \Conf(\tup{\calT, \calA \cup \Hyp})$ and $\calC\not\in \Conf(\tup{\calT, \calA})$.
	The last check can be performed  in polynomial time.
	Hence, the problem is contained in \DP.
	
	For hardness, we reduce from a combination of entailment and non-entailment under \brave semantics to verification of conflict-confining \brave-hypotheses.
	Given an instance $\tup{\calK, \alpha_1, \alpha_2}$ for some inconsistent KB \calK, the problem asks whether $\calK\models_\brave \alpha_1$ and $\calK\not\models_\brave \alpha_2$.
	This problem is \DP-complete because the first question is \NP-complete and the second question is \coNP-complete under \brave semantics.
	For the reduction, assume w.l.o.g.\ that $\alpha_1$ and $\alpha_2$ are concept assertions over the same individual, but using different concept names, and let $\alpha_1 = A(a)$, $\alpha_2= B(a)$, and $\calK = \tup{\calT,\calA}$.
	We construct a KB $\calK'$, an observation $\alpha$, and a hypothesis \Hyp next.
	Let $\calK' \coloneqq \tup{\calT',\calA}$ with
  \[\calT' \coloneqq \calT \cup \{C \sqcap A\subsum D, C\sqcap B\subsum \bot\},\]
  $\alpha \coloneqq D(a)$, and $\Hyp \coloneqq \{C(a)\}$ for fresh concepts $C$ and $D$.
	The instance is a valid abduction problem, since $\calK'$ is inconsistent and $\calK' \not\models_\brave \alpha$.
	Intuitively, \Hyp is a Brave-hypothesis for $\tup{\calK', \alpha}$ iff $\calK \models_\brave A(a)$ and \Hyp is conflict-confining for $\calK'$ iff $\calK \not\models_\brave B(a)$.
	It remains to show correctness, which is stated in the following claim.
	\begin{claim}\label{claim:cc-verification-el}
		\Hyp is a conflict-confining \brave-hypothesis for $\alpha$ in $\calK'$ iff $\calK \models_\brave \alpha_1$ and $\calK \not\models_\brave \alpha_2$.
	\end{claim}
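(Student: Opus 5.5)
We prove the two directions separately, splitting the claim into its two conjuncts: \Hyp is a \brave-hypothesis for $\tup{\calK', \alpha}$ iff $\calK \models_\brave A(a)$, and \Hyp is conflict-confining for $\calK'$ iff $\calK \not\models_\brave B(a)$. Throughout we use that $C$ and $D$ are fresh, so they occur only in the two new axioms $C \sqcap A \subsum D$ and $C \sqcap B \subsum \bot$. Two consequences follow. First, $\Conf(\calK') = \Conf(\calK)$ and the repairs of $\calK'$ and $\calK$ coincide, since \calA contains no $C$-assertions. Second, a \calT-consistent set $\calR \subseteq \calA$ satisfies $\tup{\calT', \calR \cup \{C(a)\}} \models \bot$ iff $\tup{\calT, \calR} \models B(a)$. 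The second point holds because any model of $\tup{\calT,\calR}$ not satisfying $B(a)$ can be extended by interpreting $C$ as $\{a^\calI\}$ and $D$ as $C^\calI \cap A^\calI$.

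\emph{Entailment part.} Suppose $\calK \models_\brave A(a)$ via a repair \calR of \calK. We first take a minimal $\calT$-support $\calS \subseteq \calR$ of $A(a)$ and then the set $\calS \cup \{C(a)\}$. If this set is $\calT'$-consistent, we extend it to a repair of $\tup{\calT', \calA \cup \Hyp}$, which entails $D(a)$. Conversely, if some repair $\calR'$ entails $D(a)$, then $C(a) \in \calR'$ and $\calR' \setminus \{C(a)\}$ entails $A(a)$, because $D$ is only derivable via $C \sqcap A$ and $A$ is not derivable from $C$. Extending $\calR' \setminus \{C(a)\}$ to a repair of \calK gives $\calK \models_\brave A(a)$.

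\emph{Conflict-confinement part.} Suppose \Hyp is not conflict-confining. Then there is a new conflict containing $C(a)$, hence a consistent $\calS \subseteq \calA$ with $\calS \cup \{C(a)\}$ inconsistent. By the observation above, $\tup{\calT, \calS} \models B(a)$, and extending \calS to a repair gives $\calK \models_\brave B(a)$. Conversely, if $\calK \models_\brave B(a)$, we take a minimal support $\calS \subseteq \calR$ of $B(a)$. The set $\calS \cup \{C(a)\}$ is $\calT'$-inconsistent while \calS is consistent, so some conflict of $\tup{\calT', \calA \cup \Hyp}$ contains $C(a)$ and is not in $\Conf(\calK')$.

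\emph{Main obstacle.} The subtle step is the forward direction of the entailment part when $\calK \not\models_\brave B(a)$ fails, i.e., when $\calS \cup \{C(a)\}$ may be inconsistent. For the iff we only need the direction ``$\calK \models_\brave \alpha_1$ and $\calK \not\models_\brave \alpha_2$ imply \Hyp is a conflict-confining hypothesis''. Under $\calK \not\models_\brave B(a)$, the set $\calS \cup \{C(a)\}$ is consistent by the second observation, so the argument goes through. The other direction of the claim uses only the two implications that are unconditionally valid: being a hypothesis implies $\calK \models_\brave A(a)$, and being conflict-confining implies $\calK \not\models_\brave B(a)$. We must also carefully justify the model-extension argument for the fresh names $C$ and $D$; this is standard for $\ELbot$.
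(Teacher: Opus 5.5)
Your proof is correct and follows the paper's argument. For the forward direction you get $\calK\models_\brave A(a)$ from the fact that $D(a)$ can only arise via $C \sqcap A \sqsubseteq D$, and $\calK\not\models_\brave B(a)$ from the new conflict that $C \sqcap B \sqsubseteq \bot$ would otherwise create. For the backward direction you use freshness of $C$ and $D$ to show that any new conflict containing $C(a)$ yields a \calT-support of $B(a)$.

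You are also right that showing \Hyp is a \brave-hypothesis needs $\calK\not\models_\brave B(a)$, so that $\calS\cup\{C(a)\}$ is consistent. So the first conjunct is not the unconditional equivalence your opening paragraph announces. The paper's step ``$\calK'\models_\brave A(a)$ and hence $\tup{\calT',\calA\cup\Hyp}\models_\brave D(a)$'' leaves this dependence implicit, and your version makes it explicit.
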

	\begin{claimproof}
	($\Rightarrow$)
	Suppose \Hyp is a conflict-confining \brave-hypothesis for $\tup{\calK', D(a)}$.
	Notice that the only way to obtain the entailment $\calK' \models_\brave D(a)$ is via the TBox axiom $C\sqcap A\subsum D$, since no axiom in \calK contains $D$.
	Therefore, we must have $\calK \models_\brave A(a)$.
	Moreover, we have $\calK \not\models_\brave B(a)$:
	Suppose to the contrary that $\calK \models_\brave B(a)$ and let $\calR$ be a witnessing repair such that $\tup{\calT, \calR} \models B(a)$.
	Then, we have that $\tup{\calT', \calR \cup \calH} \models \bot$, in particular due to the axiom $C \sqcap B \subsum \bot$ and the assertion $C(a)$.
  Since \calR is a repair, and hence \calT-consistent, there must be a conflict of $\tup{\calT', \calR \cup \Hyp}$ that is not a conflict of $\tup{\calT',\calA}$.
	But this leads to a contradiction to our assumption that \Hyp is conflict-confining for $\calK'$.
	As a result, $\calK'\not\models_\brave B(a)$ must be true.
	
	($\Leftarrow$) Suppose $\calK \models_\brave A(a)$ and $\calK \not\models_\brave B(a)$.
	Then, $\calK' \models_\brave A(a)$ and hence
  \[\tup{\calT', \calA\cup\Hyp} \models_\brave D(a).\]
	Therefore \Hyp is indeed a \brave-hypothesis for $\tup{\calK', \alpha}$.
	To show that \Hyp is conflict-confining for $\calK'$, suppose to the contrary that there is a conflict $\calC\in \Conf(\tup{\calT', \calA\cup\Hyp})$ such that $\calC\not\in \Conf(\tup{\calT', \calA})$.
	In particular, this implies that $C(a)\in\calC$ since $\calH=\{C(a)\}$.
	As a result, we have $\calC\setminus\{C(a)\}\subseteq \calA$ and $\calC$ is $\calT'$-consistent, and hence also \calT-consistent.
  But this implies that $\tup{\calT', \calC} \models B(a)$, since the only conflict involving $C(a)$ is via the axiom $C\sqcap B\subsum \bot$.
  Further, this means that $\tup{\calT, \calC} \models B(a)$, as the new axioms in $\calT'$ do not help entailment of $B(a)$.
	As \calC is \calT-consistent, there exists a repair $\calR \supseteq \calC$ of \calK, and we have $\tup{\calT, \calR} \models B(a)$.
  Consequently, $\calK \models_\brave B(a)$.
	But this is a contradiction to our assumption, so \Hyp must be conflict-confining.
\end{claimproof}
	We conclude by observing that the above reduction can be achieved in polynomial time. 
\end{proof}

\elbotVerifCminSubset*
\begin{proof}
  In the case of \ar semantics, the problem inherits the complexity from verification of conflict-confining \ar-hypotheses by \cref{lem:ar-triv}, using the same argument as in the corresponding part of the proof of \cref{thm:dllite-verif-cc-cmin}.

  We now turn to \brave semantics.
  For membership, consider a \brave-abduction problem $\tup{\calK, \alpha}$ and ABox \Hyp.
  Observe that \Hyp is a $\subseteq_c$-minimal \brave-hypothesis for $\tup{\calK, \alpha}$ iff (i) $\tup{\calT,\calA\cup\calH}\models_\brave $ and (ii) there is no \brave-hypothesis $\calH'$ for $\tup{\calK,\alpha}$ such that
  \[\Conf(\tup{\calT, \calA \cup \calH'}) \prec \Conf(\tup{\calT, \calA \cup \calH}).\]
	Here, one can guess a set $\calH'$ as a counter-witness to $\calH$ being a conflict-minimal hypothesis, whereas the \brave-entailment in (i) and the relationship between the conflict sets in (ii) can be performed via oracle calls.
	This yields membership in $\coNP^{\NP}$ or equivalently, in $\PiP$.

	For hardness, we reuse the construction from the hardness proof for Theorem~\ref{thm:elbot-exist-cc}.
  Let $\Phi$ be a $\forall\exists$-QBF, and $\calK = \tup{\calT,\calA}$ be the KB obtained $\Phi$ using that construction.
	We add new axioms and assertions to \calK to obtain the KB $\calK' \dfn \tup{\calT,\calA'}$ as follows:
	\begin{align*}
    \calT' &\dfn \calT \cup \{C_d \sqcap X \subsum C, X\sqcap Y\subsum \bot\}, \text{ and} \\
    \calA' &\dfn \calA\cup \{Y(m)\}
  \end{align*}
	Then, we let $\alpha\dfn C(m)$ as before and take $\calH\dfn\{X(m)\}$.
	Here, $\calH$ induces exactly one more conflict in $\calK'$, namely $\{X(m), Y(m)\}$.

	For correctness, it is easy to see that there is a conflict-confining \brave-hypothesis for $\tup{\calK, C(m)}$ iff there is such a hypothesis for $\tup{\calK', C(m)}$.
  Further, the latter is equivalent to \Hyp not being a $\subseteq_c$-minimal hypothesis for $\tup{\calK', C(m)}$, since \Hyp introduces exactly one new conflict for $\calK'$, while a conflict-confining \brave-hypothesis introduces no new conflicts.
	Thus the correctness follows due to the proof of Claim~\ref{claim:el-cc-brave}.
	This yields the mentioned $\PiP$-hardness.
\end{proof}

\elbotVerifCminCard*
\begin{proof}
  The problem again inherits the complexity from verification of conflict-confining \ar-hypotheses by \cref{lem:ar-triv}, using the same argument as in the corresponding part of the proof of \cref{thm:dllite-verif-cc-cmin}.
\end{proof}

\elbotVerifSubset*
\begin{proof}
	We prove the result for both semantics separately as it uses different techniques.
	The result for \brave and \ar semantics is proven in Theorem~\ref{thm:elbot-verif-brave-subset} and~\ref{thm:elbot-verif-ar-subset}, resp.
\end{proof}

\begin{theorem}
	\label{thm:elbot-verif-brave-subset}
	For \ELbot, verification of $\subseteq$-minimal \brave-hypotheses is \DP-complete.
\end{theorem}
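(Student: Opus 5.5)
Membership and hardness are handled separately. For membership, let $\tup{\calK,\alpha}$ be a \brave-abduction problem with $\calK=\tup{\calT,\calA}$, and let \Hyp be an ABox. \Hyp is a $\subseteq$-minimal \brave-hypothesis iff (i) $\tup{\calT,\calA\cup\Hyp}\models_\brave\alpha$ and (ii) no proper subset of \Hyp is a \brave-hypothesis. Condition (i) is \brave-entailment and hence in \NP. For (ii) I use monotonicity of \brave-entailment under adding assertions. If $\Hyp'\subseteq\Hyp''$ and $\Hyp'$ is a \brave-hypothesis, take a witnessing repair $\calR$ of $\tup{\calT,\calA\cup\Hyp'}$; it extends to a repair of $\tup{\calT,\calA\cup\Hyp''}$, which still entails $\alpha$ classically. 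So, unlike the \ar case (\cref{obs:ar-non-convex}), the set of \brave-hypotheses is upward closed. Consequently \Hyp is non-minimal iff some direct subset $\Hyp\setminus\{\beta\}$ is a \brave-hypothesis. This is a polynomial disjunction of \NP checks, so it is in \NP, and (ii) is in \coNP. Together with (i) this places the problem in \DP.

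For hardness I reduce from the \DP-complete problem used in \cref{thm:elbot-verif-cc}: given an inconsistent \ELbot KB $\calK=\tup{\calT,\calA}$ and assertions $A(a)$, $B(a)$, decide whether $\calK\models_\brave A(a)$ and $\calK\not\models_\brave B(a)$. Take fresh concept names $C_1,C_2,D$ and set
\[\calT'\coloneqq\calT\cup\{C_1\sqcap A\sqsubseteq D,\ C_2\sqcap B\sqsubseteq D,\ C_1 \sqcap C_2 \sqsubseteq D\},\]
with $\calK'\coloneqq\tup{\calT',\calA}$, $\alpha\coloneqq D(a)$ and $\Hyp\coloneqq\{C_1(a),C_2(a)\}$. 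The new axioms are the only ones mentioning $D$, $C_1$, $C_2$, and none of these names occurs in \calA. Hence $\calK'\not\models_\brave D(a)$, and $\calK'$ is inconsistent, so $\tup{\calK',\alpha}$ is a valid \brave-abduction problem. \Hyp is always a \brave-hypothesis: any repair of $\tup{\calT',\calA\cup\Hyp}$ contains both $C_1(a)$ and $C_2(a)$, since the fresh names cannot create conflicts, and the third axiom then yields $D(a)$.

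The direct subsets are $\{C_1(a)\}$ and $\{C_2(a)\}$ (and $\emptyset$, which is excluded since $\calK'\not\models_\brave D(a)$). For $\{C_1(a)\}$, repairs of $\tup{\calT',\calA\cup\{C_1(a)\}}$ are exactly repairs of \calK plus $C_1(a)$, so $\{C_1(a)\}$ is a \brave-hypothesis iff $\calK\models_\brave A(a)$; symmetrically, $\{C_2(a)\}$ is one iff $\calK\models_\brave B(a)$. Thus \Hyp is $\subseteq$-minimal iff $\calK\not\models_\brave A(a)$ and $\calK\not\models_\brave B(a)$.

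That equivalence gives the wrong polarity for \DP, since both conditions are non-entailments. I therefore adjust the construction so that minimality expresses ``not $B$'' while being a hypothesis expresses ``$A$''. I drop the axiom $C_1\sqcap C_2\sqsubseteq D$ and set $\Hyp\coloneqq\{C_1(a),C_2(a)\}$ with $\calT'\coloneqq\calT\cup\{C_1\sqcap C_2\sqcap A\sqsubseteq D,\ C_2\sqcap B\sqsubseteq D\}$. Then \Hyp is a \brave-hypothesis iff $\calK\models_\brave A(a)$ or $\calK\models_\brave B(a)$. The subset $\{C_2(a)\}$ is a hypothesis iff $\calK\models_\brave B(a)$, and $\{C_1(a)\}$ never is. Hence \Hyp is a $\subseteq$-minimal hypothesis iff ($A$ or $B$ is \brave-entailed) and $B$ is not, which is equivalent to $\calK\models_\brave A(a)$ and $\calK\not\models_\brave B(a)$, as required.

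The main obstacle is verifying rigorously that the fresh axioms introduce no conflicts, so that the repairs of the extended KBs are exactly the repairs of \calK with the fresh assertions added. This holds because $D$, $C_1$ and $C_2$ occur in no disjointness axiom. I also need that $D(a)$ is derivable only through the new axioms and that $C_1$ and $C_2$ do not contribute to the entailment of $A(a)$ or $B(a)$. Both follow because the names are fresh and occur only on the left-hand sides of the new inclusions (with $D$ on the right), and the whole reduction is clearly polynomial.
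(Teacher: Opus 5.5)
Your proof is correct and follows the paper's overall plan: an \NP{} check for being a hypothesis plus a \coNP{} check for minimality, and hardness by reduction from the \DP-complete combination of \brave-entailment and \brave-non-entailment, using fresh concept names and a two-element hypothesis. Both halves nonetheless differ from the paper in instructive ways.

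For membership, the paper does not use monotonicity. It universally guesses an arbitrary $\Hyp'\subsetneq\Hyp$ together with a set $\calR\subseteq\calA\cup\Hyp'$, and checks in polynomial time that $\calR$ is not a repair of $\tup{\calT,\calA\cup\Hyp'}$ or does not entail $\alpha$. Your upward-closure argument instead reduces the check to the $|\Hyp|$ direct subsets. This costs nothing extra and exposes a structural contrast with \cref{obs:ar-non-convex}.

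For hardness, the paper uses $\calT\cup\{C\sqsubseteq A,\ A\sqcap B\sqcap D\sqsubseteq Q\}$ with $\Hyp=\{A(a),B(a)\}$ and fresh $A,B,Q$, to decide whether $D(a)$ is \brave-entailed and $C(a)$ is not. There, the subset $\{B(a)\}$ is a hypothesis only if a \emph{single} repair entails both $C(a)$ and $D(a)$. So the paper's step ``$\calK\models_{\brave} C(a)$ makes $\{B(a)\}$ a hypothesis'' tacitly needs the input to be a disjoint union of independent parts. It fails, for example, for $C\sqcap D\sqsubseteq\bot$ with $\calA=\{C(a),D(a)\}$.

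In your gadget, $\Hyp$ is a hypothesis iff $A(a)$ or $B(a)$ is \brave-entailed, and the only competing subset $\{C_2(a)\}$ depends on $B(a)$ alone. No joint-repair assumption is needed, so your reduction is the more robust of the two.

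One detail to add: the step from $\tup{\calT',\calR\cup\Hyp}\models D(a)$ to $\tup{\calT,\calR}\models A(a)$ or $\tup{\calT,\calR}\models B(a)$ uses that \ELbot{} is Horn, not just freshness. To justify it, extend the canonical model $\mathcal{I}$ of $\tup{\calT,\calR}$ by $C_1^{\mathcal{I}}=C_2^{\mathcal{I}}=\Delta^{\mathcal{I}}$ and $D^{\mathcal{I}}=A^{\mathcal{I}}\cup B^{\mathcal{I}}$.
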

\begin{proof}
  For membership, observe that \Hyp is a $\subseteq$-minimal \brave-hypothesis iff (1) $\tup{\calT, \calA\cup \Hyp} \models_\calS \alpha$ and (2) for all subsets $\Hyp' \subsetneq \Hyp$, we have $\tup{\calT, \calA \cup \Hyp} \not\models_\calS \alpha$.
  Here, (1) is entailment of concept assertions for \ELbot and hence in \NP.
  For (2), we universally guess a subset $\Hyp' \subseteq \Hyp$ and subset $\calR \subseteq \calA \cup \Hyp'$, and verify in polynomial time that either \calR is not a repair of $\tup{\calT, \calA \cup \Hyp'}$, or we have $\tup{\calT, \calR} \not\models \alpha$.
  Intuitively, this can be seen as universally guessing a subset $\calH'$ and repair \calR of $\tup{\calT, \calA \cup \Hyp'}$ and checking that $\tup{\calT, \calR} \not\models \alpha$.
	As membership is checked as the conjunction of an \NP- and a \coNP-problem, this shows membership in \DP.
	
	For hardness, we reduce from a combination of entailment and non-entailment under \brave semantics verification of $\subseteq$-minimal \brave-hypotheses, similar to hardness under \brave semantics in the proof of \cref{thm:elbot-verif-cc}.
	Given an instance $\tup{\calK, \alpha_1, \alpha_2}$ for some inconsistent KB \calK, the problem asks whether $\calK \models_\brave \alpha_1$ and $\calK \not\models_\brave \alpha_2$.
	For the reduction, assume w.l.o.g.\ that $\alpha_1$ and $\alpha_2$ are concept assertions over the same individual, but using different concept names, and let $\alpha_1 = D(a)$, $\alpha_2= C(a)$, and $\calK = \tup{\calT,\calA}$.
  We now construct a KB $\calK'$, an observation $\alpha$, and a hypothesis \Hyp.
	Let $\calK' \coloneqq \tup{\calT',\calA}$ with
  \[\calT' \coloneqq \calT \cup \{C \subsum A, A\sqcap B\sqcap D \subsum Q\},\]
  $\alpha \coloneqq Q(a)$, and $\Hyp \coloneqq \{A(a), B(a)\}$ for fresh concepts $A,B,Q$.
	The instance is a valid abduction problem, since $\calK'$ is inconsistent and $\calK' \not\models_\calS \alpha$.
	Intuitively, $\Hyp$ is a Brave-hypothesis for $\alpha$ in $\calK'$ iff $\calK \models_\brave D(a)$ and $\Hyp$ is subset-minimal iff $\calK \not\models_\brave C(a)$.
	We next prove the correctness of the reduction.
	\begin{claim}
		\Hyp is a $\subseteq$-minimal hypothesis for $\alpha$ in $\calK'$ iff $\calK \models_\brave \alpha_1$ and $\calK \not\models_\brave \alpha_2$.
	\end{claim}
	\begin{claimproof}
    ($\Rightarrow$)
	  Suppose $\Hyp$ is a $\subseteq$-minimal hypothesis for $Q(a)$ in $\calK'$.
	  Observe that the only way to obtain the entailment $\calK' \models_\brave Q(a)$ is via the TBox axiom $A\sqcap B\sqcap D\subsum Q$, since no axiom in $\calK$ contains $Q$ and $Q(a) \not\in \Hyp$.
	  Therefore, $\calK \models_\brave D(a)$ since otherwise, $\calK' \not\models_\brave D(a)$ and hence $\calK' \not\models_\brave Q(a)$.
	  Moreover, we have $\calK \not\models_\brave C(a)$:
	  Suppose to the contrary that $\calK \models_\brave C(a)$.
	  Then $\calK' \models A(a)$ due to the axiom $C \sqsubseteq A$.
	  Consequently, $\{B(a)\}$ is a \brave-hypothesis for $\alpha$ in $\calK'$, which is a contradiction to $\subseteq$-minimality of $\Hyp$.
	  
	  ($\Leftarrow$) Suppose $\calK \models_\brave D(a)$ and $\calK \not\models_\brave C(a)$.
	  Then, we also have $\calK' \models_\brave D(a)$ and hence
    \[\tup{\calT', \calA\cup\Hyp} \models_\brave Q(a),\]
    so $\Hyp$ is a $\brave$-hypothesis for $\tup{\calK', \alpha}$.
	  For $\subseteq$-minimality, suppose to the contrary that there is a \brave-hypothesis $\Hyp' \subsetneq \Hyp$ for $\alpha$ in $\calK'$.
	  Since $B(a)$ cannot be entailed via any axiom in $\calK'$, we have $\Hyp' = \{B(a)\}$.
	  However, this implies that $\calK' \models_\brave A(a)$, which can only be true if $\calK' \models_\brave C(a)$.
    But then $\calK \models_\brave C(a)$, which is a contradiction.
  \end{claimproof} 
\end{proof}

\begin{theorem}\label{thm:elbot-verif-ar-subset}
	For \ELbot, verification of $\subseteq$-minimal \ar-hypotheses is \PiP-complete.
\end{theorem}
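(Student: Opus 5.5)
Membership is the easy half. \Hyp is a $\subseteq$-minimal \ar-hypothesis iff (i) $\tup{\calT,\calA\cup\Hyp}\models_\ar\alpha$, which is in \coNP, and (ii) no $\Hyp'\subsetneq\Hyp$ is an \ar-hypothesis. Condition (ii) is a $\forall\Hyp'\,\exists\calR$ statement: for every strict subset $\Hyp'$ there is a repair $\calR$ of $\tup{\calT,\calA\cup\Hyp'}$ with $\tup{\calT,\calR}\not\models\alpha$. Repair-hood and non-entailment are polytime checkable, so (ii) lies in \PiP. The conjunction of a \coNP and a \PiP condition stays in \PiP.

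For hardness I would reduce from validity of $\forall\exists$-QBFs $\Phi=\forall Y\exists Z\,\varphi$, or equivalently from the complement of $\exists\forall$-QBF. The aim is an instance in which \Hyp is always an \ar-hypothesis, and a strict subset $\Hyp'\subsetneq\Hyp$ is an \ar-hypothesis iff $\exists Y\forall Z\,\neg\varphi$ holds, where $\neg\varphi$ is written as a DNF $\psi$.

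\textbf{Main gadget (subsets encode $Y$-assignments).} I would reuse the encoding from the proof of \cref{thm:elbot-exist-sig}:
\begin{itemize}
  \item axioms $A_x\sqcap A_{\bar x}\sqsubseteq\bot$;
  \item the ABox contains all $A_z(m),A_{\bar z}(m)$, so repairs enumerate $Z$-assignments;
  \item a term axiom $\bigsqcap_{\ell\in t}A_\ell\sqcap B_1\sqsubseteq A$ for each term $t$ of $\psi$.
\end{itemize}
The candidate \Hyp contains all $A_y(m),A_{\bar y}(m)$ for $y\in Y$, plus extra assertions. Following the proof sketch, two disjoint concepts $B_1\sqcap B_2\sqsubseteq\bot$ with $B_1(m),B_2(m)\in\calA$ split the repairs into two groups. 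Repairs containing $B_1(m)$ run the $Z$-enumeration. The single-type repair containing $B_2(m)$ enforces completeness of the $Y$-choice: I would add $V_y$ axioms $A_y\sqsubseteq V_y$ and $A_{\bar y}\sqsubseteq V_y$, together with $B_2\sqcap\bigsqcap_y V_y\sqsubseteq A$. A subset $\Hyp'$ must then pick at least one literal per $y$ to cover the $B_2$-repair. If it picks both literals of some $y$, repairs may choose either one; this would also need ruling out, plausibly because such a subset is then no better than a full assignment, or by an argument like the one in \cref{thm:elbot-exist-nontriv}.

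\textbf{Second gadget (\Hyp itself is an \ar-hypothesis).} To guarantee this unconditionally, I would use the non-convexity idea of \cref{ex:ar-non-convex}. Add fresh assertions $D_1(m),D_2(m)$ to \Hyp with $D_1\sqcap D_2\sqsubseteq\bot$, $D_1\sqsubseteq A$ and $D_2\sqsubseteq A$. Every repair of $\calA\cup\Hyp$ then keeps one of $D_1(m),D_2(m)$ and entails $A(m)$. Dropping either $D_i$ leaves a single $D_j$ that is still in every repair, which would make that subset trivially a hypothesis. So I would instead make the single $D_j$ conflict with something else, for example $D_1\sqcap C_1\sqsubseteq\bot$ and $D_2\sqcap C_2\sqsubseteq\bot$ with $C_1(m),C_2(m)\in\calA$ and $C_1\sqcap C_2\sqsubseteq\bot$. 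Then each subset keeping only one $D_i$ loses a repair, while keeping both covers all repairs. Conversely, every strict subset must drop some $D_i$, so it can no longer rely on this gadget and must entail $A(m)$ through the QBF encoding. The intended correctness statement is: a strict subset is an \ar-hypothesis iff its $Y$-literals form an assignment making every $Z$-extension satisfy $\psi$.

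\textbf{Main obstacle.} The hard part is making these gadgets not interfere. Dropping a $D_i$ must not accidentally help or hurt the QBF part. Partial or inconsistent $Y$-selections must be handled. Repairs mixing the $B_1$ and $B_2$ sides, and the $C_i$ sides, must be covered, and I expect this to require a careful case analysis over repair types.
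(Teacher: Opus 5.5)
Your membership argument is correct and is the same as the paper's. The hardness reduction fails at the step ``every strict subset must drop some $D_i$''. A strict subset may instead keep both $D_1(m),D_2(m)$ and drop a $Y$-literal; for example $\{D_1(m),D_2(m)\}\subsetneq\Hyp$. The conflicts of your $D$-gadget form the 4-cycle $D_1$--$D_2$--$C_2$--$C_1$--$D_1$, which is disjoint from all other conflicts. So every repair of $\tup{\calT,\calA\cup\{D_1(m),D_2(m)\}}$ contains $D_1(m)$ or $D_2(m)$ and entails $A(m)$. Hence $\Hyp$ is never $\subseteq$-minimal (for $Y\neq\emptyset$), and every QBF is mapped to a no-instance.

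The obstruction goes beyond this gadget. You put both $A_y(m)$ and $A_{\bar y}(m)$ into $\Hyp$, and the only conflict containing $A_{\bar y}(m)$ is $\{A_y(m),A_{\bar y}(m)\}$. Then every repair of $\tup{\calT,\calA\cup(\Hyp\setminus\{A_y(m)\})}$ contains $A_{\bar y}(m)$. It is therefore already a repair of $\tup{\calT,\calA\cup\Hyp}$, so $\Hyp\setminus\{A_y(m)\}$ is an \ar-hypothesis whenever $\Hyp$ is. Your own ``no better than a full assignment'' argument thus applies to $\Hyp$ itself. No extra gadget that leaves the literal assertions untouched can fix this.

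The missing idea is a bypass that fires only when \emph{all} of $\Hyp$ sits in a single repair, as with $\calB_3$ in Example~\ref{ex:ar-non-convex]}. Such a bypass is impossible if the hypothesis literals are mutually disjoint. The paper therefore never makes its hypothesis literals $T_{x,i},F_{x,i}$ disjoint. Its construction works as follows:
\begin{itemize}
\item Each term $c_j$ of the DNF $\neg\varphi$ becomes an ABox assertion $C_j(a)$ with the axiom $B_1\sqcap C_j\sqsubseteq A$.
\item A literal falsifying $c_j$ is encoded by a ternary conflict such as $B_1'\sqcap T_{x,i}\sqcap C_j\sqsubseteq\bot$.
\item Only the ABox-side variables receive $B_1'\sqcap T_{y,i}\sqcap F_{y,i}\sqsubseteq\bot$, so that repairs range over their assignments.
\end{itemize}
With this encoding, extra literal assertions can only eliminate $C_j$'s. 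A subset containing both literals of some $x_i$ is then no better than a proper assignment.

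The full $\Hyp=\{T_{x,i}(a),F_{x,i}(a)\mid i\le n\}$ is made a hypothesis by the single axiom $B_1\sqcap\bigsqcap_i(T_{x,i}\sqcap F_{x,i})\sqsubseteq A$. In a repair containing $B_1(a)$ but no $C_j(a)$, nothing conflicts with the hypothesis assertions, so all of them are present and the axiom fires. No strict subset can ever trigger it. Your $B_2$/$V_y$ part corresponds to the paper's $\mathrm{HAVE}_i$ part and can be kept.
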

\begin{proof}
  Membership in \PiP is shown similar as \DP-membership for \cref{thm:elbot-verif-brave-subset}.
  Given an \ar-abduction problem $\tup{\calK, \alpha}$ with $\calK = \tup{\calT, \calA}$ and an ABox \Hyp, \Hyp is a $\subseteq$-minimal $\ar$-hypothesis for $\tup{\calK, \alpha}$ iff (1) $\tup{\calT, \calA \cup \Hyp} \models_\ar \alpha$ and (2) for all subsets $\Hyp' \subsetneq \Hyp$, we have $\tup{\calT, \calA \cup \Hyp} \not\models_\ar \alpha$.
  Here, (1) is \ar-entailment and hence in \coNP.
  For (2) we can guess a subset $\Hyp' \subsetneq \Hyp$ and check that $\tup{\calT, \calA \cup \Hyp'} \not\models_\ar \alpha$ using an oracle for non-entailment under \ar semantics, resulting in \PiP-membership.

  For hardness, we reduce from checking whether a given $\Pi_2$-QBF is true.
  Let $\Phi = \forall X \exists Y \varphi(X,Y)$ be a $\Pi_2$-QBF, where $X = \{x_1, \dots, x_n\}$ and $Y = \{y_1, \dots, y_m\}$.
  Note that we can assume w.l.o.g.\ that $\neg \varphi$ is in DNF with set of terms $\{c_1, \dots, c_k\}$.
  We construct an \ar-abduction problem $\tup{\calK, A(a)}$ and \ar-hypothesis $\Hyp$ for it s.t.\ $\forall X \exists Y \varphi$ is true iff $\Hyp$ is a $\subseteq$-minimal \ar-hypothesis.
  Equivalently, $\exists X \forall Y \neg \varphi$ is true iff there is some subset $\Hyp' \subsetneq \Hyp$ s.t.\ $\Hyp'$ is an \ar-hypothesis for $\tup{\calK, A(a)}$.

  The proof idea is as follows.
  We construct \calK and \Hyp in such a way that that subsets $\Hyp' \subsetneq \Hyp$ encode assignments over $X$, and repairs of $\tup{\calT, \calA \cup \Hyp'}$ range over encodings of all assignments over $Y$.
  Entailment of $A(a)$ in $\tup{\calT, \calA \cup \Hyp'}$ is then equivalent to the corresponding assignment over $X \cup Y$ satisfying $\varphi$.
  Further, we use the idea from \cref{ex:ar-non-convex}:
  To ensure entailment of $A(a)$ in $\tup{\calT, \calA \cup \Hyp}$, we use an additional axiom in \calT that circumvents the construction encoding $\Phi$, but cannot be triggered in $\tup{\calT, \calA \cup \Hyp'}$ for any subset $\Hyp' \subsetneq \Hyp$.
  Another important component of the construction will be two disjoint concepts $B_1$ and $B_2$ that allow us to split the set of repairs in two parts:
  The repairs containing the assertion $B_1(a)$ will ensure that $\varphi$ is satisfied for all assignments over $Y$, while the repairs containing the assertion $B_2(a)$ will ensure that \Hyp contains a full assignment over $X$.

  We now provide the definition of \calK, argue that $\tup{\calK, A(a)}$ is an \ar-abduction problem, and define the ABox \Hyp.
  We then provide further intuition on the components of the construction, followed by the proof of correctness.
  Define $\calK = \tup{\calT, \calA}$, where $\calT \coloneqq \calT_0 \cup \calT_1 \cup \calT_2$, and $\calT_0$, $\calT_1$, $\calT_2$ and \calA are defined as follows:
  \[
    \calT_0 \coloneqq \{\, B_1 \sqcap B_2 \sqsubseteq \bot, \quad B_1' \sqcap B_2  \sqsubseteq \bot \,\},
  \]
  \begin{align*}
    \calT_1 \coloneqq {} & \left\{ B_1 \sqcap \bigsqcap_{1 \leq i \leq n} (T_{x,i} \sqcap F_{x,i}) \sqsubseteq A \right\} \cup {} \\
    & \{\, B_1 \sqcap C_j \sqsubseteq A \mid 1 \leq j \leq k \,\} \cup {} \\
    & \{\, B_1' \sqcap T_{y,i} \sqcap F_{y,i} \sqsubseteq \bot \mid 1 \leq i \leq m \,\} \cup {} \\
    & \{\, B_1' \sqcap T_{x,i} \sqcap C_j \sqsubseteq \bot \mid \neg x_i \in c_j \,\} \cup {} \\
    & \{\, B_1' \sqcap F_{x,i} \sqcap C_j \sqsubseteq \bot \mid x_i \in c_j \,\} \cup {} \\
    & \{\, B_1' \sqcap T_{y,i} \sqcap C_j \sqsubseteq \bot \mid \neg y_i \in c_j \,\} \cup {} \\
    & \{\, B_1' \sqcap F_{y,i} \sqcap C_j \sqsubseteq \bot \mid y_i \in c_j \,\}
  \end{align*}
  \begin{align*}
    \calT_2 \coloneqq {} & \left\{ B_2 \sqcap \bigsqcap_{1 \leq i \leq n} \mathrm{HAVE}_i \sqsubseteq A \right\} \cup {} \\
    & \{\, T_{x,i} \sqsubseteq \mathrm{HAVE}_i,\; F_{x,i} \sqsubseteq \mathrm{HAVE}_i \mid 1 \leq i \leq n \,\}
  \end{align*}    
  \begin{align*}
    \calA \coloneqq {} & \{ B_1(a), B_1'(a), B_2(a) \} \cup {} \\
    & \{\, T_{y,i}(a), F_{y,i}(a) \mid 1 \leq i \leq m \,\} \cup {} \\
    & \{\, C_j(a) \mid 1 \leq j \leq k \,\}.
  \end{align*}
  First, note that $\tup{\calK, A(a)}$ is an \ar-abduction problem:
  the KB \calK is inconsistent, for example it has the conflict $\{B_1(a), B_2(a)\}$.
  Also, $\calK \not\models_\ar A(a)$ as there is a repair \calR of \calK with $B_2(a) \in \calR$.
  By the axioms in $\calT_0$, we have $B_1(a) \not\in \calR$, so $A(a)$ cannot be entailed by the aximos in $\calT_1$.
  But there is also no assertion of the form $T_{x,i}(a)$ or $F_{x,i}(a)$ in \calR, as these are not contained in \calA.
  Hence, $A(a)$ cannot be entailed by the first axiom in $\calT_2$, so $\tup{\calT, \calR} \not\models A(a)$.
  Now, define the \ar-hypothesis \Hyp for $\tup{\calK, A(a)}$ by
  \begin{align*}
    \Hyp \coloneqq \{\, T_{x,i}(a), F_{x,i}(a) \mid 1 \leq i \leq n \,\}.
  \end{align*}

  We now provide some intuition on the construction of \calK and \Hyp.
  The axioms in $\calT_0$ split the set of repairs into two parts:
  Those repairs that contain $B_1(a)$ and potentially $B_1'(a)$, and those repairs that contain $B_2(a)$.
  The axioms in $\calT_1$ only affect the former repairs while those in $\calT_2$ only affect the latter.
  Regarding the encoding of $\varphi$ in the construction, presence of an assertion of the form $T_{z,i}(a)$ or $F_{z,i}(a)$ encodes that variable $z$ is assigned to \texttt{true} or \texttt{false}, resp.
  Accordingly, repairs containing $B_1(a)$ encode assignments over $Y$, due to the axioms of the form $B_1' \sqcap T_{y,i} \sqcap F_{y,i} \sqsubseteq \bot$, while similarly subsets of \Hyp encode assignments over $Z$.
  (W.l.o.g., one can assume that \Hyp contains at most one of the assertions $T_{y,i}$ and $F_{y,i}$ for each $i$, due to how satisfaction of $\varphi$ is encoded.)
  Further, the assertions of the form $C_j(a)$ correspond to the terms $c_j$ of $\varphi$.

  Based on these ideas, $\calT_1$ ensures that for a given subset $\Hyp' \subsetneq \Hyp$, $A(a)$ is entailed in all repairs containing $B_1(a)$, iff the assignment $\theta_{\Hyp'}$ corresponding to $\Hyp'$ satisfies the formula $\forall Z \varphi$:
  The last $4$ kinds of axioms in $\calT_1$ ensure that $C_j(a)$ conflicts with assignments that falsify $c_j$, so some $C_j(a)$ only remains in all repairs (yielding entailment of $A(a)$), if at least one term of $\varphi[\theta_{\Hyp'}]$ is satisfied by each assignment over $Z$.
  This construction is circumvented for the full set \Hyp: Here, entailment of $A(a)$ is ensured via the first axiom in $\calT_1$.
  On the other hand, $\calT_2$ ensures that each \ar-hypothesis $\Hyp' \subsetneq \Hyp$ of $\tup{\calK, A(a)}$ contains at least one of the assertions $T_{y,i}(a)$, $F_{y,i}(a)$ (using the repairs containing $B_2(a)$).

  The full proof of correctness is now split into three parts, namely the proof that \Hyp is an \ar-hypothesis for $\tup{\calK, A(a)}$ and the two directions of the equivalence that there is an \ar-hypothesis $\Hyp' \subsetneq \Hyp$ of $\tup{\calK, A(a)}$ iff $\Phi$ is false, shown in Claims~\ref{cl:elbot-subsetmin-1}, \ref{cl:elbot-subsetmin-2} and~\ref{cl:elbot-subsetmin-3}.

  \begin{claim}\label{cl:elbot-subsetmin-1}
    \Hyp is an \ar-hypothesis for $\tup{\calK, A(a)}$.
  \end{claim}
  \begin{claimproof}
    Consider any repair $\calR$ of $\tup{\calT, \calA \cup \Hyp}$.
    We consider two cases.

    Case ``$B_1(a) \in \calR$''.
    By the axioms in $\calT_0$, we have $B_2(a) \not\in \calR$, so $A(a)$ cannot be entailed by the first axiom in $\calT_2$.
    If we have $C_j(a) \in \calR$ for some $j$, $A(a)$ is entailed by the axiom $B_1 \sqcap C_j \sqsubseteq A$.
    Otherwise, the only remaining disjointness axioms that could trigger are those of the form $B_1' \sqcap T_{y,i} \sqcap F_{y,i} \sqsubseteq \bot$, which do not affect the concepts of the form $T_{x,i}$ or $F_{x,i}$.
    Hence, by maximality of repairs, we have $T_{x,i}(a), F_{x,i}(a) \in \calR$ f.a.\ $1 \leq i \leq n$.
    Consequently, $A(a)$ is entailed by the first axiom in $\calT_1$.

    Case ``$B_1(a) \not\in \calR$''.
    First note that $B_1'(a) \not\in \calR$: Otherwise, we would have $B_2(a) \not\in \calR$ by the axioms in $\calT_0$.
    But then, $B_1(a) \in \calR$ by maximality of repairs, as $B_1$ does not occur in any disjointness axioms in $\calT_1$ or $\calT_2$.
    As $B_1'(a) \not\in \calR$, none of the disjointness axioms in $\calT_1$ can trigger, so we only need to consider those in $\calT_2$.
    Hence, we have $B_2(a), T_{x,i}(a), F_{x,i}(a) \in \calR$ for all $1 \leq i \leq n$ by maximality of repairs.
    Consequently, $A(a)$ is entailed via the axioms of the form $T_{x,i} \sqsubseteq \mathrm{HAVE}_i(a)$ and $F_{x,i} \sqsubseteq \mathrm{HAVE}_i(a)$ as well as the axiom $B_2 \sqcap \bigsqcap_{1 \leq i \leq n} \mathrm{HAVE}_i \sqsubseteq A$.
  \end{claimproof}

  \begin{claim}\label{cl:elbot-subsetmin-2}
    If $\forall X \exists Y \varphi(X,Y)$ is false, then there is a subset $\Hyp' \subsetneq \Hyp$ s.t.\ $\Hyp'$ is an \ar-hypothesis for $\tup{\calK, A(a)}$.
  \end{claim}
  \begin{claimproof}
    In this case, there is an assignment $s_X$ over $X$ s.t.f.a. assignments $s_Y$ over $Y$, there is conjunction term $c_j$ in $\neg \varphi$ with
    \[s_X \cup s_Y \models c_j.\]
    Define
    \[\Hyp' = \{\, T_{x,i}(a) \mid s_X(x_i) = 1 \,\} \cup \{\, F_{x,i}(a) \mid s_X(x_i) = 0 \,\}.\]
    Consider any repair $\calR$ of $\tup{\calT, \calA \cup \Hyp'}$.
    Similar to the above proof of claim, we distinguish two cases based on whether $B_1(a) \in \calR$.

    Case ``$B_1(a) \in \calR$''.
    By the axioms in $\calT_0$, we have $B_2(a) \not\in \calR$, so $A(a)$ cannot be entailed by the first axiom in $\calT_2$.
    If we have $B_1'(a) \not\in \calR$, then none of the disjointness axioms in $\calT_1$ can trigger, so there is some $C_j(a) \in \calR$ by maximality of repairs and we get the entailment of $A(a)$ via the axiom $B_1 \sqcap C_j \sqsubseteq A$.
    Otherwise, the disjointness axioms of the form $B_1' \sqcap T_{y,i} \sqcap F_{y,i}$ ensure that \calR contains at most one of the assertions $T_{y,i}(a)$ and $F_{y,i}(a)$ for each $1 \leq i \leq m$.
    Hence, the present assertions in \calR correspond to the assignment $s_\calR$ over $Y$ defined by
    \[s_\calR(y_i) \coloneqq \begin{cases}
      1, & \text{if } T_{y,i}(a) \in \calR \\
      0, & \text{if } F_{y,i}(a) \in \calR.
    \end{cases}\]
    By our assumption, there is some $c_j$ s.t. $s_X \cup s_\calR \models c_j$.
    By construction of $\calT_1$, this means that $C_j(a)$ is not in conflict with any of the assertions in $\Hyp'$ or any of the assertions $T_{y,i}(a)$ or $F_{y,i}(a)$ remaining in \calR.
    Hence, $A(a)$ is entailed by the axiom $B_1 \sqcap C_j \sqsubseteq A$.

    Case ``$B_1(a) \not\in \calR$''.
    First note that $B_1'(a) \not\in \calR$ by the same argument as in the previous proof of claim.
    As $B_1'(a) \not\in \calR$, none of the disjointness axioms in $\calT_1$ can trigger.
    Hence, we have $B_2(a) \in \calR$ by maximality of repairs.
    Further, by maximality of repairs and the fact that $s_X$ is an assignment over $X$, we have $T_{x,i}(a) \in \calR$ or $F_{x,i}(a) \in \calR$ for each $1 \leq i \leq n$.
    Consequently, $A(a)$ is entailed via the axioms of the form $T_{x,i} \sqsubseteq \mathrm{HAVE}_i$ and $F_{x,i} \sqsubseteq \mathrm{HAVE}_i$ as well as the axiom $B_2 \sqcap \bigsqcap_{1 \leq i \leq n} \mathrm{HAVE}_i \sqsubseteq A$.
  \end{claimproof}

  \begin{claim}\label{cl:elbot-subsetmin-3}
    If there is a subset $\Hyp' \subsetneq \Hyp$ s.t.\ $\Hyp'$ is an \ar-hypothesis for $\tup{\calK, A(a)}$, then $\forall X \exists Y \varphi(X,Y)$ is false.
  \end{claim}
  \begin{claimproof}
    By assumption we have $\tup{\calT, \calA \cup \Hyp'} \models_\ar A(a)$.
    First, note that $\Hyp' \cup \{ B_2(a) \}$ is \calT-consistent.
    Hence, there is a repair $\calR$ of $\tup{\calT, \calA \cup \Hyp'}$ with $\Hyp' \cup \{B_2(a)\} \subseteq \calR$.
    By the axioms in $\calT_0$, we have $B_1(a) \not\in \calR$.
    Hence, $A(a)$ cannot be entailed using axioms in $\calT_1$, so it must be entailed by the axiom $B_2 \sqcap \bigsqcap_{1 \leq i \leq n} \mathrm{HAVE}_i \sqsubseteq A$.
    This means that we have $T_{x,i}(a) \in \calR$ or $F_{x,i}(a) \in \calR$ for each $1 \leq i \leq n$.
    As these assertions do not occur in \calA, they must be contained in $\Hyp'$.

    Define the assignment $s_{\Hyp'}$ over $X$ by
    \[s_{\Hyp'}(x_i) \coloneqq \begin{cases}
      1, & \text{if } T_{x,i}(a) \in \Hyp' \\
      0, & \text{otherwise}
    \end{cases}\]
    and consider any assignment $s_Y$ over $Y$.
    As $s_Y$ is a function, the set
    \begin{align*}
      \calA_{s_Y} \coloneqq {} & \{B_1(a), B_1'(a)\} \cup \Hyp' \cup {} \\
      & \{\, T_{y,i}(a) \mid s_Y(y_i) = 1 \,\} \cup \{\, F_{y,i}(a) \mid s_Y(y_i) = 0 \,\}
    \end{align*}
    is $\calT$-consistent, so there are repairs containing all assertions in $\calA_{s_Y}$.
    In these repairs, $A(a)$ cannot be entailed by the axioms in $\calT_2$, since $B_2(a)$ is in conflict with $B_1(a)$.
    Further, it cannot be entailed by the first axiom in $\calT_1$, as $\Hyp' \subsetneq \Hyp$.
    Hence, there is some index $j$ s.t.\ $C_j(a)$ is not in conflict with any of the assertions in $\calA_{s_Y}$.
    (Otherwise, there is a repair $\calR$ among these that does not contain any of the assertions of the form $C_j(a)$, meaning that $\tup{\calT, \calR} \not\models A(a)$.)
    In particular, this is also true for the subset $\calA_{s_Y}' \subseteq \calA_{s_Y}$ defined by
    \[\calA_{s_Y}' \coloneqq \calA_{s_Y} \setminus \{\, F_{x,i}(a) \mid 1 \leq i \leq n, T_{x,i}(a) \in \calA_{s_Y}' \,\}.\]
    But the assertions of the forms $T_{x,i}(a)$, $F_{x,i}(a)$, $T_{y,i}(a)$ and $F_{y,i}(a)$ present in $\calA_{s_Y}'$ correspond to the assignment $s_{\Hyp'} \cup s_Y$, so this means that $c_j$ is not falsified by $s_{\Hyp'} \cup s_Y$ by construction of $\calT_1$. 
    Consequently, $s_{\Hyp'} \cup s_Y \models \neg \varphi$.
    As $s_Y$ was picked arbitrarily, this means that $\exists X \forall Y \neg \varphi(X,Y)$ is true and $\forall X \exists Y \varphi(X,Y)$ is false.
  \end{claimproof}
  Combined, the three above claims show that $\forall X \exists Y \varphi(X,Y)$ is true iff \Hyp is a $\subseteq$-minimal \ar-hypothesis for $\tup{\calK, A(a)}$, finishing the hardness proof.  
\end{proof}

\elbotExistSigCC*
\begin{proof}
	For membership in each case, observe that one can guess a hypothesis $\calH$ over the given signature $\Sigma$.
	The verification then requires to determine that (i) $\calH$ is a $\calS$-hypothesis for $\tup{\calK, \alpha}$, and (ii) $\calH$ is conflict-confining.
	This leads to a complexity of $\NP^{\NP}$, or equivalently $\SigmaP$ in both cases.
	The hardness for \brave semantics follows from the case of conflict-confining hypotheses (Theorem~\ref{thm:elbot-exist-cc}), whereas that of \ar follows from the case of \SignatureRestricted hypotheses (Theorem~\ref{thm:elbot-exist-sig}).
\end{proof}

\end{document}